\newcommand{\lcm}{{\rm lcm}}
\newcommand{\Z}{\mathbb{{Z}}}
\newcommand{\gf}{{\mathrm{GF}}}
\newcommand{\C}{{\mathcal{C}}}
\newcommand{\RM}{{\mathrm{RM}}}
\newcommand{\codenp}{\C_{(q,n,\delta,1)}}
\newcommand{\ocode}{\C_{(q,n,2\delta,n-\delta+1)}}
\newcommand{\gx}{g_{(q,n,2\delta,n-\delta+1)}(x)}
\newcommand{\gt}{\tilde{g}_{(q,n,2\delta,n-\delta+1)}(x)}
\newcommand{\tcode}{\tilde{\C}_{(q,n,2\delta,n-\delta+1)}}
\newcommand{\codeeven}{\C_{(q,n,2\delta,\frac{n}{2}-\de+1)}}
\newcommand{\codeodd}{\C_{(q,n,2\delta-1,\frac{n+1}{2}-\de+1)}}
\newcommand{\geven}{g_{\big(q, n, 2\delta, \frac n 2-\delta+1\big)}(x)}
\newcommand{\godd}{g_{\big(q, n, 2\delta-1, \frac{n+1}{2}-\delta+1\big)}(x)}
\newcommand{\Fq}{\gf(q)}
\newcommand{\Fqm}{\gf(q^m)}
\newcommand{\Ftm}{\gf(2^m)}
\newcommand{\Ft}{\gf(2)}
\newcommand{\al}{\alpha}
\newcommand{\de}{\delta}
\newcommand{\lam}{\lambda}
\newcommand{\ep}{\epsilon}
\newcommand{\supp}{\text{supp}}
\newcommand{\ol}{\overline}
\newcommand{\lf}{\lfloor}
\newcommand{\rf}{\rfloor}
\newcommand{\pr}{\prime}
\newtheorem{theorem}{Theorem}
\newtheorem{lemma}[theorem]{Lemma}
\newtheorem{proposition}[theorem]{Proposition}
\newtheorem{corollary}[theorem]{Corollary}
\newtheorem{conj}{Conjecture}
\newtheorem{example}{Example}
\newtheorem{remark}{Remark}
\newtheorem{result}{Result}
\begin{document}

\title{Two Families of LCD BCH codes}

\author{Shuxing~Li, Chengju~Li,~Cunsheng~Ding,~Hao~Liu~

\thanks{C. Ding's research was supported by the Hong Kong Research Grants Council, under Grant No. 16301114.}

\thanks{S. Li is with the Department of Mathematics, Hong Kong University of Science and Technology,
                                                  Clear Water Bay, Kowloon, Hong Kong, China  (email: lsxlsxlsx1987@gmail.com).}

\thanks{C. Li is with the School of Computer Science and Software Engineering, East China Normal University,
Shanghai, 200062, China (email: lichengju1987@163.com).}

\thanks{C. Ding is with the Department of Computer Science and Engineering, The Hong Kong University of Science and Technology,
 Clear Water Bay, Kowloon, Hong Kong, China (email: cding@ust.hk).}

\thanks{H. Liu is with the Department of Computer Science
and Engineering, The Hong Kong University of Science and Technology,
 Clear Water Bay, Kowloon, Hong Kong, China (email: hliuar@connect.ust.hk).}
}
\date{\today}
\maketitle

\begin{abstract}

Historically, LCD cyclic codes were referred to as reversible cyclic codes, which had application in data storage. Due to a newly discovered application in cryptography, there has been renewed interest on LCD codes. In this paper, we explore two special families of LCD cyclic codes, which are both BCH codes. The dimensions and the minimum distances of these LCD BCH codes are investigated.  As a byproduct, the parameters of some primitive BCH codes are also obtained.

\end{abstract}

\begin{IEEEkeywords}
BCH codes, LCD codes, linear codes, reversible BCH codes
\end{IEEEkeywords}

\section{Introduction}
Let $\gf(q)$ be a finite field of size $q$. An $[n,k, d]$ linear code $\mathcal C$ over $\gf(q)$ is a linear subspace of $\gf(q)^n$ with dimension $k$ and
minimum distance $d$. A linear code $\mathcal C$ over $\gf(q)$ is called an \emph{LCD
code (linear code with complementary dual)} \cite{Massey1} if
$\mathcal C \cap \mathcal C^\perp=\{\textbf{0}\}$, where $\mathcal C^\perp$ denotes the dual code of $\mathcal C$ and is defined by
$$\mathcal C^\perp=\{(b_0, b_1, \ldots, b_{n-1}) \in \gf(q)^n: \sum_{i=0}^{n-1}b_ic_i=0 \mbox{ for all } (c_0, c_1, \ldots, c_{n-1}) \in \mathcal C\}.$$

An $[n,k,d]$ linear code $\mathcal C$ is called \emph{cyclic} if
$(c_0, c_1, \ldots, c_{n-1}) \in \mathcal C$ implies
$(c_{n-1}, c_0, c_1, \ldots, c_{n-2}) \in \mathcal C$.
By identifying each vector $(c_0, c_1, \ldots, c_{n-1}) \in \gf(q)^n$ with
$$c_0+c_1x+c_2x^2+\cdots+c_{n-1}x^{n-1} \in \gf(q)[x]/(x^n-1),$$
a linear code $\mathcal C$ of length $n$ over $\gf(q)$ corresponds to a $\gf(q)$-submodule of
$\gf(q)[x]/(x^n-1)$. $\mathcal C$ is a cyclic code if and only if the
corresponding submodule is an ideal of $\gf(q)[x]/(x^n-1)$. Note that every
ideal of $\gf(q)[x]/(x^n-1)$ is principal. Then there is a monic polynomial $g(x)$ of the smallest
degree such that $\mathcal C=\langle g(x) \rangle$ and $g(x) \mid (x^n-1)$. In addition, $g(x)$ is unique and called the \emph{generator
polynomial}, and $h(x)=(x^n-1)/g(x)$ is referred to as the \emph{parity-check polynomial} of $\mathcal C$.

Let $f(x) \in \Fq[x]$ be a monic polynomial with degree $l$, then the reciprocal polynomial of $f$ is defined to be $x^lf(x^{-1})$. $f$ is called self-reciprocal if $f(x)$ is equal to its reciprocal. A cyclic code $\C$ with generator polynomial $g(x)$ is called \emph{reversible} if $g(x)$ is self-reciprocal. The reversibility implies if $(c_0,c_1,\ldots,c_{n-1}) \in \C$, then $(c_{n-1},c_{n-2},\ldots,c_0) \in \C$.
We have the following lemma, showing that LCD cyclic codes and reversible cyclic codes are the same thing.

\begin{lemma}{\rm(\cite{YM94}, see also \cite[Theorem 4]{LDL})} \label{TC} Let $\mathcal C$ be a cyclic code over $\gf(q)$ with generator polynomial $g(x)$. Then the following statements are
equivalent.
\begin{enumerate}
  \item $\mathcal C$ is an LCD code.
  \item $g(x)$ is self-reciprocal, i.e., $\C$ is a reversible cyclic codes.
  \item $\beta^{-1}$ is a root of $g(x)$ for every root $\beta$ of $g(x)$.
\end{enumerate}
\end{lemma}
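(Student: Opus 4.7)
\smallskip

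\textbf{Proof plan.} The plan is to show (2)$\Leftrightarrow$(3) by a direct factorization argument, and then (1)$\Leftrightarrow$(3) by analyzing the generator polynomial of the dual code in terms of roots. Throughout I will work under the standing assumption that $\gcd(n,q)=1$, which is implicit in the cyclic code setting and guarantees that $g(x)\mid x^n-1$ has simple, nonzero roots.

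First I will dispense with (2)$\Leftrightarrow$(3). Write $g(x)=\prod_{\beta\in R_g}(x-\beta)$ over a splitting field, where $R_g$ is the root set and $\ell=\deg g$. A direct computation gives
\begin{equation*}
x^{\ell}g(x^{-1}) \;=\; \prod_{\beta\in R_g}(1-\beta x) \;=\; g(0)\prod_{\beta\in R_g}\bigl(x-\beta^{-1}\bigr),
\end{equation*}
so the monic polynomial whose root set is $R_g^{-1}:=\{\beta^{-1}:\beta\in R_g\}$ coincides with $g$ up to scaling by the unit $g(0)$. Hence $g$ is self-reciprocal precisely when $R_g=R_g^{-1}$, which is statement (3). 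The only thing to watch here is the normalization constant $g(0)$; since $\gcd(n,q)=1$ implies $g(0)\neq 0$, and since the roots are simple, matching the two monic polynomials with equal root multisets yields the equivalence cleanly.

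For (1)$\Leftrightarrow$(3), I would invoke the standard description of the dual of a cyclic code: if $h(x)=(x^n-1)/g(x)$ is the parity-check polynomial, then the monic generator polynomial $\tilde h(x)$ of $\C^{\perp}$ is the reciprocal of $h$ up to normalization, so by the computation above its root set is
\begin{equation*}
R_{\tilde h} \;=\; R_h^{-1} \;=\; \bigl\{\gamma^{-1}: \gamma\in \mu_n\setminus R_g\bigr\},
\end{equation*}
where $\mu_n$ denotes the set of $n$-th roots of unity. Now the intersection $\C\cap\C^{\perp}$ is itself a cyclic code whose generator polynomial is $\mathrm{lcm}\bigl(g(x),\tilde h(x)\bigr)$, and this intersection is trivial exactly when that lcm equals $x^n-1$, i.e., when $R_g\cup R_{\tilde h}=\mu_n$. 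Taking complements inside $\mu_n$, this condition reads $R_h\subseteq R_{\tilde h}=R_h^{-1}$, which by a counting argument (both sides have the same size) is equivalent to $R_h=R_h^{-1}$, and by further complementation to $R_g=R_g^{-1}$. This is statement (3).

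The only real obstacle is bookkeeping the passage from a polynomial to its root set and making sure the complementation inside $\mu_n$ is handled correctly; once the root-set dictionary $R_{\tilde h}=R_h^{-1}$ and the generator-of-intersection fact $\mathrm{gen}(\C\cap\C^{\perp})=\mathrm{lcm}(g,\tilde h)$ are on the table, everything collapses to a set-theoretic identity. With (2)$\Leftrightarrow$(3) and (1)$\Leftrightarrow$(3) established, the three conditions are equivalent. \qed
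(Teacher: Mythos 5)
Your argument is correct, but note that the paper does not actually prove this lemma: it is quoted from Yang--Massey \cite{YM94} (see also \cite[Theorem~4]{LDL}), so there is no internal proof to compare against. Your write-up is a clean, self-contained version of the standard argument: the root-set dictionary $R_{\tilde h}=R_h^{-1}$ for the dual's generator, the fact that $\C\cap\C^{\perp}$ is cyclic with generator $\lcm(g,\tilde h)$, and the observation that (since $\gcd(n,q)=1$ forces simple roots) triviality of the intersection is the set-theoretic condition $R_g\cup R_{\tilde h}=\mu_n$, which collapses to $R_g=R_g^{-1}$. Both ingredients are standard and your bookkeeping with complements in $\mu_n$ is right. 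One small point worth making explicit: with the paper's literal definition of self-reciprocal ($g(x)=x^{\ell}g(x^{-1})$ with no normalization), the identity $x^{\ell}g(x^{-1})=g(0)\prod_{\beta\in R_g}(x-\beta^{-1})$ shows that $R_g=R_g^{-1}$ only forces $g(0)=\pm 1$, and e.g.\ $g(x)=x-1$ over $\gf(3)$ satisfies (1) and (3) but has reciprocal $-g(x)$. So (2)$\Leftrightarrow$(3) holds for the monic (normalized) reciprocal $g(0)^{-1}x^{\ell}g(x^{-1})$, which is the convention intended in \cite{YM94}; your phrase ``matching the two monic polynomials'' is doing exactly this normalization, and it would be worth saying so in one sentence rather than leaving it implicit.
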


LCD cyclic codes were first studied by Massey for the data storage applications \cite{Ma}, under the name of \emph{reversible codes}. Massey showed that some LCD cyclic codes are BCH codes, and
made a comparison between LCD codes and non-LCD codes \cite{Ma}. He also demonstrated that
asymptotically good LCD codes exist \cite{Massey1}. Yang and Massey gave a necessary and
sufficient condition for a cyclic code to have a complementary dual \cite{YM94}. Using the hull
dimension spectra of linear codes, Sendrier showed that LCD codes meet the asymptotic Gilbert-Varshamov
bound \cite{Sendr}. Esmaeili and Yari analysed LCD codes that are quasi-cyclic \cite{EY09}.
Muttoo and Lal constructed an LCD cyclic code over $\gf(q)$ \cite{ML86}.  Tzeng and Hartmann proved
that the minimum distance of a class of LCD cyclic codes is greater than the BCH bound \cite{TH}.  Dougherty, Kim, Ozkaya, Sok and Sol\'{e} developed a linear programming bound on the largest size of
an LCD code of given length and minimum distance \cite{DKOSS}. Carlet and Guilley investigated an
application of LCD codes against side-channel attacks, and presented several constructions of LCD
codes \cite{CG}. There are two well known classes of LCD cyclic codes \cite[p. 206]{MS}, which are Melas's double-error correcting binary codes with parameters $[2^m-1, 2^m-2m-1, d \ge 5]$ and Zetterberg's double-error correcting binary codes of length $2^\ell+1$. A well-rounded treatment of reversible cyclic codes was given in \cite{LDL}. In addition, Boonniyoma and Jitman gave a study on linear codes with Hermitian complementary dual \cite{BJ}.

The objective of this paper is to investigate the basic parameters of two families of LCD primitive BCH codes, including their dimensions and minimum distances. As a byproduct, the parameters of several classes of primitive BCH codes are also obtained. According to the tables of
best known linear codes (referred to as the \emph{Database} later) maintained by Markus Grassl at http://www.codetables.de/ and the tables of best cyclic codes documented in \cite{Dingbk15}, some of the codes presented in this paper are optimal in the sense that they have the best possible parameters.

\section{$q$-cyclotomic cosets and BCH codes}

In this section, we introduce $q$-cyclotomic cosets and their coset leaders, which will play a crucial role in our analysis of LCD codes. Moreover, we give a brief review on BCH codes.

\subsection{$q$-cyclotomic cosets}

To deal with cyclic codes of length $n$ over $\gf(q)$, we need to study the canonical factorization of $x^n-1$
over $\gf(q)$. To this end, we are going to introduce $q$-cyclotomic cosets modulo $n$. Note that $x^n-1$ has no
repeated factors over $\gf(q)$ if and only if $\gcd(n, q)=1$. Throughout this paper, we assume
that $\gcd(n, q)=1$.

Let $\Bbb Z_n = \{0,1,2, \cdots, n-1\}$ denote the ring of integers modulo $n$. For each $s \in \Z_n$, the \emph{$q$-cyclotomic coset of $s$ modulo $n$\index{$q$-cyclotomic coset modulo $n$}} is defined by
$$
C_s=\{s, sq, sq^2, \cdots, sq^{\ell_s-1}\} \bmod n \subseteq \Z_n,
$$
where $\ell_s$ is the smallest positive integer such that $q^{\ell_s}s \equiv s \pmod{n}$. Therefore, $\ell_s$ is the size of the
$q$-cyclotomic coset $C_s$. We use $cl(s)$ to denote the coset leader of $C_{s}$, which is the smallest integer belonging to $C_{s}$. Note that the subscript of $C_s$ is regarded as an integer modulo $n$. Thus, we have $C_{-s}=C_{n-s}$.

\subsection{BCH codes}

Let $n$ be a positive integer with $\gcd(n,q)=1$ and $m$ be the smallest positive integer such that $q^m \equiv 1 \bmod n$. Let $\alpha$ be a generator of $\gf(q^m)^*$ and put $\beta=\alpha^{\frac {q^m-1} n}$. Then $\beta$ is a primitive $n$-th root of unity. For $0 \le i \le n-1$, let $m_i(x)$ denote the minimal polynomial of $\beta^i$ over $\gf(q)$. We use $i \bmod n$ to denote the unique integer in the set $\{0, 1, \ldots, n-1\}$, which is congruent to $i$ modulo $n$. Thus, we have $m_i(x):=m_{i \bmod n}(x)$.

For an integer $\delta \ge 2$, define
\begin{equation*}
g_{(q,n,\delta,b)}(x)=\lcm(m_{b}(x), m_{b+1}(x), \cdots, m_{b+\delta-2}(x)),
\end{equation*}
where $\lcm$ denotes the least common multiple of these polynomials. Let $\mathcal C_{(q,n,\delta,b)}$ denote the cyclic code of length $n$ with generator polynomial $g_{(q,n,\delta,b)}(x)$. Then $\mathcal C_{(q,n,\delta,b)}$ is called a BCH code with \emph{designed distance} $\delta$. The \emph{BCH bound} implies that the minimum distance of $\mathcal C_{(q,n,\delta,b)}$ is greater than or equal to the designed distance $\de$. We call $\mathcal C_{(q,n,\delta,b)}$  a \emph{narrow-sense BCH code} if $b=1$. When $n=q^m-1$, $\mathcal C_{(q,n,\delta,b)}$ is called a \emph{primitive BCH code}.

So far, we have very limited knowledge of BCH codes, as the dimension and minimum distance of BCH codes are in general open. The narrow-sense primitive BCH codes form the most well-studied subclass of BCH codes, which have been investigated in a series of literature, including \cite{AKS,ACS92,AS94,Ber15,Berlekamp,Charp90,Charp98,YH96,D,DDZ15,KL72,Mann,MS,YF}. The reader is referred to \cite{DDZ15} for a recent survey on known results of narrow-sense primitive BCH codes and to \cite{LDXG} for some new results on narrow-sense nonprimitive BCH codes. As pointed out by Charpin in \cite{Charp98}, it is very difficult to determine the minimum distance of BCH codes. However, in some special cases, the minimum distance is known.

\begin{lemma} {\rm \cite[p. 247]{BBFKKW}} \label{Lemmamd}
For a narrow-sense BCH code $\mathcal C_{(q,n,\delta,1)}$  over $\text{GF}(q)$ of length $n$ with designed distance $\delta$, its
minimum distance $d=\delta$ if $\delta$ divides $n$.
\end{lemma}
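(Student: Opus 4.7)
The plan is to combine the BCH lower bound with an explicit construction of a codeword of weight exactly $\delta$. Since by the BCH bound we already have $d \geq \delta$, it suffices to exhibit a single codeword $c(x) \in \C_{(q,n,\delta,1)}$ of Hamming weight $\delta$.

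The natural candidate is
\[
f(x) = 1 + x^{n/\delta} + x^{2n/\delta} + \cdots + x^{(\delta-1)n/\delta} = \frac{x^n-1}{x^{n/\delta}-1},
\]
which is well-defined because $\delta \mid n$, and which visibly has weight $\delta$ and degree $n - n/\delta < n$. It remains to verify that $g_{(q,n,\delta,1)}(x)$ divides $f(x)$, so that $f(x)$ lies in the cyclic code generated by $g_{(q,n,\delta,1)}(x)$. Equivalently, every root of $g_{(q,n,\delta,1)}(x)$ in $\gf(q^m)$ must be a root of $f(x)$.

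The roots of $g_{(q,n,\delta,1)}(x)$ are exactly $\{\beta^j : j \in C_1 \cup C_2 \cup \cdots \cup C_{\delta-1}\}$, while the roots of $f(x)$, which comprise the $n$-th roots of unity that are not $(n/\delta)$-th roots of unity, are $\{\beta^j : 1 \leq j \leq n-1,\ \delta \nmid j\}$. So I need to show that for every $i$ with $1 \leq i \leq \delta-1$ and every $k \geq 0$, the representative $iq^k \bmod n$ is not divisible by $\delta$. The crux here is the observation that $\gcd(q,\delta) = 1$: indeed, since $\delta \mid n$ and $\gcd(q,n)=1$, we get $\gcd(q,\delta)=1$, so $q$ is a unit modulo $\delta$. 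Combined with $i \not\equiv 0 \pmod{\delta}$ (because $1 \leq i \leq \delta-1$), this yields $iq^k \not\equiv 0 \pmod{\delta}$, and since $\delta \mid n$ the condition $\delta \nmid (iq^k \bmod n)$ follows. Hence $g_{(q,n,\delta,1)}(x) \mid f(x)$, so $f(x) \in \C_{(q,n,\delta,1)}$ and $d \leq \delta$, which together with the BCH bound gives $d = \delta$.

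The only mildly delicate point is the modular arithmetic in the last step, namely the transition between ``not divisible by $\delta$ modulo $n$'' and ``not divisible by $\delta$ as an integer''; once one observes that $\delta \mid n$ makes divisibility by $\delta$ well-defined modulo $n$, the argument is immediate. No new estimates on cyclotomic cosets are required, and the proof is essentially a one-line application of the identity $x^n - 1 = (x^{n/\delta}-1) f(x)$.
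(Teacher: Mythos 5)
Your proposal is correct and uses exactly the codeword the paper relies on: the lemma itself is quoted from \cite{BBFKKW}, but the paper's proof of its generalization (Corollary \ref{corollarygener}) exhibits the same weight-$\delta$ polynomial $c(x)=(x^n-1)/(x^{n/\delta}-1)$ and checks it lies in the code. Your verification that $g_{(q,n,\delta,1)}(x)\mid f(x)$ via $\gcd(q,\delta)=1$ is sound (and slightly more explicit than needed, since $c(\beta^j)=0$ for $1\le j\le\delta-1$ already forces vanishing at all conjugates because $c$ has coefficients in $\gf(q)$), so this is essentially the paper's argument.
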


The following corollary is a generalization of Lemma \ref{Lemmamd} and will be employed later.

\begin{corollary} \label{corollarygener} Let $\mathcal C_{(q,n,\delta,b)}$ be the BCH code over $\text{GF}(q)$ of length
$n$ with designed distance $\delta$. Then its
minimum distance $d=\delta$ if $\delta$ divides $\gcd(n, b-1)$. \end{corollary}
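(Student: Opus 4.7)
The plan is to sandwich the minimum distance between $\delta$ (from below, via the BCH bound) and $\delta$ (from above, via an explicit low-weight codeword), so that equality holds.

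First, the BCH bound, which is built into the definition stated in the excerpt, already gives $d \geq \delta$. So the entire task is to exhibit a codeword of weight exactly $\delta$ in $\mathcal{C}_{(q,n,\delta,b)}$. Mimicking the standard construction used in Lemma~\ref{Lemmamd}, I would take
\begin{equation*}
c(x) \;=\; \frac{x^n-1}{x^{n/\delta}-1} \;=\; 1 + x^{n/\delta} + x^{2n/\delta} + \cdots + x^{(\delta-1)n/\delta},
\end{equation*}
which makes sense because $\delta \mid n$ by hypothesis. Clearly $c(x)$ has Hamming weight $\delta$ and degree less than $n$, so it represents a nonzero word in $\gf(q)^n$.

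Next I would check that $c(x)$ actually lies in the code, i.e.\ that $g_{(q,n,\delta,b)}(x) \mid c(x)$. Since $c(x)(x^{n/\delta}-1) = x^n-1$, the roots of $c(x)$ among the $n$-th roots of unity are exactly those $\beta^j$ with $j \in \{0,1,\ldots,n-1\}$ and $j \not\equiv 0 \pmod{\delta}$. Using the hypothesis $\delta \mid (b-1)$, write $b = 1 + k\delta$, so that the consecutive exponents $b, b+1,\ldots,b+\delta-2$ are congruent to $1, 2, \ldots, \delta-1$ modulo $\delta$, none of which vanishes modulo $\delta$. Because $\delta \mid n$, reducing any such exponent modulo $n$ preserves its residue modulo $\delta$, so each $\beta^{b+i}$ for $0 \leq i \leq \delta-2$ is indeed a root of $c(x)$. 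Hence $g_{(q,n,\delta,b)}(x) \mid c(x)$ and $c(x) \in \mathcal{C}_{(q,n,\delta,b)}$, yielding $d \leq \delta$.

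The main (minor) obstacle is just bookkeeping with the reduction modulo $n$: one must make sure that after replacing each $b+i$ by $(b+i) \bmod n$ the residue modulo $\delta$ is unchanged, which is precisely where the hypothesis $\delta \mid n$ (not merely $\delta \mid (b-1)$) is used. Beyond that, the argument is a direct generalization of the narrow-sense case in Lemma~\ref{Lemmamd}, recovered when $b=1$ so that the condition $\delta \mid \gcd(n,b-1)$ collapses to $\delta \mid n$.
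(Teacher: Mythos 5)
Your proposal is correct and follows essentially the same route as the paper: both exhibit the weight-$\delta$ codeword $c(x)=(x^n-1)/(x^{n/\delta}-1)$, use $\delta\mid(b-1)$ to see that the prescribed zeros $\beta^{b},\ldots,\beta^{b+\delta-2}$ all have exponents nonzero modulo $\delta$ and hence are roots of $c(x)$, and combine this with the BCH bound. Your extra remark about residues modulo $\delta$ being preserved under reduction modulo $n$ is a correct (and slightly more careful) elaboration of the same argument.
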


\begin{proof}
Denote $$c(x)=\frac {x^n-1} {x^{n/\delta}-1}=x^{(\delta-1)\frac n \delta}+\cdots+x^{\frac n \delta}+1.$$
Since $\delta \mid (b-1)$, we have $c(\beta^j)=0$ for each $b \le j \le b+\delta-2$ and $\delta \nmid j$,
where $\beta$ is a primitive $n$-th root of unity. It then follows that $c(x) \in \mathcal C_{(q,n,\delta,b)}$. It is clear
that the Hamming weight of $c(x)$ is equal to $\delta$.
\end{proof}

\section{Two families of LCD primitive BCH codes}

In this section, we introduce two families of LCD primitive BCH codes, whose parameters will be analyzed subsequently. From now on, we always assume that $n=q^m-1$. We always use $\bar n$ to denote $\lceil \frac{n}{2} \rceil$ and $\bar m$ to denote $\lceil \frac{m}{2} \rceil$.

For each integer $\delta$ with $2 \le \delta \le \lf \frac{n+1}{2} \rf$, define
\begin{equation} \label{LCDGPoly} g(x)=\begin{cases}
\lcm \Big(x+1, g_{(q, n, \delta, \frac n 2+1)}(x), g_{\big(q, n, \delta, \frac n 2-(\delta-1)\big)}(x)\Big), & \text{ if  $n$ is even;}\\
\lcm \Big(g_{(q, n, \delta, \frac {n+1} 2)}(x), g_{\big(q, n, \delta, \frac {n+1} 2-(\delta-1)\big)}(x)\Big), & \text{ if  $n$ is odd.}\end{cases}\end{equation}
It can be verified that \begin{equation} \label{GeneratorPoly} g(x)=\begin{cases}
g_{\big(q, n, 2\delta, \frac n 2-(\delta-1)\big)}(x), & \text{ if  $n$ is even;}\\
g_{\big(q, n, 2\delta-1, \frac {n+1} 2-(\delta-1)\big)}(x), & \text{ if  $n$ is odd.}\end{cases}\end{equation}
Let $\codeeven$ (resp. $\codeodd$) be the BCH code of length $n$ with the generator polynomial $g_{\big(q, n, 2\delta, \frac n 2-(\delta-1)\big)}(x)$ (resp. $g_{\big(q, n, 2\delta-1, \frac {n+1} 2-(\delta-1)\big)}(x)$). Note that $2 \leq \delta \leq \lf \frac{n+1}{2} \rf$ ensures $g(x) \ne x^n-1$. Thus, $\codeeven \ne \{\textbf{0}\}$ and $\codeodd \ne \{\textbf{0}\}$. It is easy to check that $\geven$  and $\godd$ are self-reciprocal. Therefore, it follows from Lemma \ref{TC} that $\codeeven$ and $\codeodd$ are LCD BCH codes.

For each $2 \leq \delta < \lf \frac{n+1}{2} \rf$, define
$$
\tilde{g}_{(q,n,2\delta,n-\delta+1)}(x)=\lcm(g_{(q,n,\delta,1)}(x),g_{(q,n,\delta,n-\de+1)}(x)).
$$
Let $\tcode$ denote the cyclic code of length $n$ with generator polynomial $\gt$. By Lemma \ref{TC}$, \tcode$ is an LCD cyclic code. For the minimum distance $d$ of $\tcode$, it was shown in \cite{TH} that
$$
\begin{cases}
d=\de & \mbox{if $\de \mid n$},\\
d \ge \de+1 & \mbox{otherwise}.
\end{cases}
$$
Moreover, if we consider the even-like subcode of $\tcode$, namely, the code $\ocode$ with length $n$ and generator polynomial
$$
\gx=(x-1)\gt,
$$
its minimum distance is at least $2\de$ by the BCH bound. Hence, a potentially great improvement on the minimum distance is expected by considering the even-like subcode of $\tcode$. This intuition motivates us to study the code $\ocode$, which is an LCD BCH code.

We remark that the above two families of codes are closely related. In fact, when $q$ is odd, $\codeeven$ and $\ocode$ are monomially equivalent \cite[p. 24]{HP}. Let $\al$ be the primitive element of $\gf(q^m)$. Note that $\codeeven$ has generator polynomial $\geven$. The parity-check matrix of $\codeeven$ consists of rows with the form
\begin{align*}
&(1,\al^{\frac{n}{2}+j},\al^{2(\frac{n}{2}+j)},\al^{3(\frac{n}{2}+j)},\ldots,\al^{(n-2)(\frac{n}{2}+j)},\al^{(n-1)(\frac{n}{2}+j)}) \\
=&(1,-\al^j,\al^{2j},-\al^{3j},\ldots,\al^{(n-2)j},-\al^{(n-1)j})
\end{align*}
where $-\de+1 \le j \le \de-1$. Meanwhile, the code $C_{(q,n,2\de,n-\de+1)}$ has generator polynomial $g_{(q,n,2\delta, n-\delta+1)}(x)$. The parity check matrix of $C_{(q,n,2\de,n-\de+1)}$ consists of rows with the form
$$
(1,\al^j,\al^{2j},\al^{3j},\ldots,\al^{(n-2)j},\al^{(n-1)j})
$$
where $-\de+1 \le j \le \de-1$. Hence, the parity-check matrix of $C_{(q,n,2\de,n-\de+1)}$ can be obtained from that of $\codeeven$, by multiplying $-1$ in some columns. Thus, $\codeeven$ and $C_{(q,n,2\de,n-\de+1)}$ are monomially equivalent when $q$ is odd. Consequently, they have the same parameters, including the dimension and minimum distance. It is worthy to note that this equivalence is generally not true when $q$ is even.

\section{Parameters of the primitive narrow-sense BCH codes $\mathcal C_{(q,n,\delta,1)}$}

In this section, we always assume that $u$ is an integer with $1 \le u \le q-1$.

\begin{lemma} \label{AKSYF} {\rm(\cite[Lemmas 8 and 9]{AKS}, \cite[Theorem 3]{YH96})}
Let $m \ge 2$. Then we have the following.
\begin{itemize}
\item[1)] When $m$ is odd, for $1 \le j \le q^{(m+1)/2}$, $|C_j|=|C_{-j}|=m$. For $1 \le j \le q^{(m+1)/2}$, $j$ is a coset leader of a $q$-cyclotomic coset if and only if $q \nmid j$.

\item[2)] When $m$ is even, $|C_{q^{m/2}+1}|=|C_{-q^{m/2}-1}|=\frac{m}{2}$ and $|C_j|=|C_{-j}|=m$ for $1 \le j \le 2q^{m/2}$, $j \ne q^{m/2}+1$. For $1 \le j \le 2q^{m/2}$, $j$ is a coset leader of a $q$-cyclotomic coset if and only if $q \nmid j$.
\end{itemize}
\end{lemma}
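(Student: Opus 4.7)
The plan is to analyze the $q$-cyclotomic cosets $C_j$ modulo $n=q^m-1$ directly. I would rely on two elementary observations: $|C_j|$ is the smallest positive integer $\ell$ with $j(q^\ell-1)\equiv 0\pmod{q^m-1}$, so $|C_j|$ always divides $m$; and since multiplication by $-1$ commutes with multiplication by $q$ modulo $n$, the symmetry $|C_{-j}|=|C_j|$ is automatic. It therefore suffices to bound $|C_j|$ and to decide when $j$ is the minimum of $C_j$.

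For the size assertions, suppose $|C_j|=\ell$ is a proper divisor of $m$. Since $\gcd(q^m-1,q^\ell-1)=q^\ell-1$, the divisibility $(q^m-1)\mid j(q^\ell-1)$ forces $\tfrac{q^m-1}{q^\ell-1}\mid j$, and this quotient strictly exceeds $q^{m-\ell}$. When $m$ is odd, every proper divisor satisfies $\ell\le m/3$, so $j>q^{2m/3}$; checking $m=3$ directly and using $2m/3\ge(m+1)/2$ for $m\ge 5$ contradicts $j\le q^{(m+1)/2}$. When $m$ is even, the largest proper divisor is $m/2$, for which the quotient equals $q^{m/2}+1$; the range $j\le 2q^{m/2}$ then leaves $j=q^{m/2}+1$ as the only candidate, and the direct computation $(q^{m/2}+1)q^{m/2}\equiv q^{m/2}+1\pmod{q^m-1}$ confirms $|C_{q^{m/2}+1}|=m/2$, matching the stated exception.

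For the coset-leader claim, first observe $q\cdot q^{m-1}\equiv 1\pmod n$, so if $q\mid j$ then $jq^{m-1}\equiv j/q<j$ lies in $C_j$ and $j$ is not the leader. Conversely, assume $q\nmid j$ and, for contradiction, that some $j'=jq^k\bmod n$ with $1\le k\le m-1$ satisfies $j'<j$. Writing $jq^k=j'+\lambda(q^m-1)$, the case $\lambda=0$ gives $j'\ge jq>j$, so $\lambda\ge 1$ and $j\ge q^{m-k}$. Applying the same reasoning to the inverse relation $j\equiv j'q^{m-k}\pmod n$ yields $j'\ge q^k$, and combining $q^k\le j'<j$ with $j\ge q^{m-k}$ squeezes $k$ to the value $(m-1)/2$ in the odd case or $m/2$ in the even case. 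At these boundary values a brief direct check shows the only solutions are $j=q^{(m+1)/2}$ (odd $m$) and $j\in\{q^{m/2},2q^{m/2}\}$ (even $m$), each of which is divisible by $q$ and hence contradicts $q\nmid j$.

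The main obstacle I anticipate is precisely the bookkeeping at these tight boundaries: the size inequalities become equalities, so one must carefully separate the exceptional value $j=q^{m/2}+1$---which has $|C_j|=m/2$ yet remains a coset leader because iterating by $q^{m/2}$ sends $j$ to itself rather than to something smaller---from the nearby values $q^{m/2}$ and $2q^{m/2}$, which fail to be leaders for the orthogonal reason that $q\mid j$. Keeping these two failure modes disentangled is where a direct proof becomes slightly delicate, but no deep machinery is required beyond the size estimate on $(q^m-1)/(q^\ell-1)$ and the inverse-duality trick above.
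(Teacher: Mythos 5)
Your proof is correct, but note that the paper itself offers no proof of Lemma~\ref{AKSYF} to compare against: the statement is quoted from \cite{AKS} and \cite{YH96}. The closest in-paper analogues are the generalizations in Propositions~\ref{PO} and~\ref{PE}, which are proved by writing out $q$-adic expansions of $i$ and $j$ and matching digits case by case over the shift $\ell$. Your route is genuinely different and arguably cleaner: for the sizes you use that $|C_j|=\ell$ with $\ell$ a proper divisor of $m$ forces $\frac{q^m-1}{q^\ell-1}\mid j$, which bounds $j$ below by $q^{m-\ell}$ and isolates $q^{m/2}+1$ as the unique exception; for the coset-leader claim you use that $jq^k=j'+\lambda(q^m-1)$ with $j'<j$ forces $\lambda\ge1$, hence $j\ge q^{m-k}$ and (via the inverse shift) $j'\ge q^k$, which pins $k$ to the middle value and leaves only $j\in\{q^{(m+1)/2}\}$, resp. $\{q^{m/2},2q^{m/2}\}$, all divisible by $q$. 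This avoids all digit bookkeeping at the cost of the two elementary divisibility estimates. Two points worth tightening: (i) in the inverse step, ``the same reasoning'' does not literally transfer --- ruling out $\mu=0$ in $j'q^{m-k}=j+\mu(q^m-1)$ is \emph{not} a consequence of $j'<j$ (which is consistent with $j=j'q^{m-k}$) but of the hypothesis $q\nmid j$, since $\mu=0$ would give $q\mid j$; the hypothesis is in force there, so this is a wording issue rather than a gap. (ii) In the even case you should say explicitly why proper divisors $\ell<m/2$ are excluded, namely that they satisfy $\ell\le m/3$ and give quotients exceeding $2q^{m/2}$ (clear for $m\ge6$ from $q^{2m/3}\ge 2q^{m/2}$, and checked directly for $m=4$, $q=2$ where the quotient is $15>8$), so that $\ell=m/2$ really is the only surviving case.
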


We present the size of each cyclotomic coset $C_j$ and characterize all coset leaders $j$ satisfying $1 \le j  \le uq^{\bar m}$ in the following proposition, where $m \ge 5$ is an odd integer.

\begin{proposition} \label{PO}
 Let $m \ge 5$ be an odd integer and let $j$ be an integer with $1 \le j \le u q^{\bar m}$ and $q \nmid j$, where $1 \le u \le q-1$.
 Then the following holds.
 \begin{enumerate}
   \item $|C_j|=m$,
   \item $j$ is a coset leader of the cyclotomic coset $C_j$ except $j \in J_1 \cup J_2$, where
\begin{equation} \label{J1} J_1=\{j_{\bar m}q^{\bar m}+j_1q+j_0: 1 \le j_{\bar m} \le u-1, 0 \le j_1 < j_{\bar m}, 1 \le j_0 \le q-1\}\end{equation} and
\begin{equation} \label{J2}J_2=\{j_{\bar m}q^{\bar m}+j_{\bar m-1}q^{\bar m-1}+j_0: 1 \le j_{\bar m} \le u-1, 1 \le j_{\bar m-1} \le q-1, 1 \le j_0 \le j_{\bar m}\}.\end{equation}
   \item $|J_1 \cup J_2|=(u^2-u)(q-1)$.
 \end{enumerate}
\end{proposition}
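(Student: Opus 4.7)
The plan is to address the three parts in order, using the base-$q$ expansion $j = \sum_{i=0}^{\bar m} j_i q^i$ throughout (valid since $j \le u q^{\bar m} < q^{\bar m + 1}$).

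For part (1), I would argue by contradiction. If $\ell := |C_j|$ is a proper divisor of $m$, then $j q^\ell \equiv j \pmod{q^m - 1}$ forces $j$ to be a multiple of $(q^m-1)/(q^\ell-1) \ge q^{m-\ell}$. Because $m \ge 5$ is odd, its smallest prime factor is at least $3$, so every proper divisor satisfies $\ell \le m/3$, and one checks directly that $\ell \le (m-3)/2$ in all cases (the primes $m = 5, 7$ contribute only $\ell = 1$; for composite $m \ge 9$ the inequality $m/3 \le (m-3)/2$ holds). Hence $q^{m-\ell} \ge q^{\bar m+1} > u q^{\bar m} \ge j$, contradicting the divisibility, so $|C_j| = m$.

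For part (2), the key observation is that multiplication by $q$ modulo $q^m - 1$ acts as a cyclic rotation of the $m$-tuple of base-$q$ digits $(d_{m-1}, \ldots, d_0) = (0, \ldots, 0, j_{\bar m}, j_{\bar m - 1}, \ldots, j_0)$, so $j$ is a coset leader iff this tuple is lexicographically smallest among its $m$ rotations. The crucial reduction is that since $j \le u q^{\bar m} < q^{\bar m + 1}$, any rotation $j q^k \bmod n$ whose top nonzero position exceeds $\bar m$ automatically exceeds $j$; tracking where the digit $j_{\bar m}$ lands (namely at position $(\bar m + k) \bmod m$) and where $j_0$ lands (namely at position $k$) narrows the candidate shifts that could violate coset-leader-ship down to $k \in \{\bar m - 1, \bar m\}$.

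I would then compute these two rotations explicitly in terms of the $j_i$ and perform a digit-by-digit lex comparison with $j$. The expected outcome is: $jq^{\bar m - 1} < j$ iff $j_2 = \cdots = j_{\bar m - 1} = 0$ and $j_1 < j_{\bar m}$, which is precisely $j \in J_1$; and $jq^{\bar m} < j$ iff $j_1 = \cdots = j_{\bar m - 2} = 0$ together with either $j_0 < j_{\bar m}$ or $(j_0 = j_{\bar m}$ and $j_{\bar m - 1} \ge 1)$, covering $j \in J_2$ and a subcase already in $J_1$. Taking the contrapositive, $j$ is a coset leader iff $j \notin J_1 \cup J_2$; the residual case $j_{\bar m} = 0$ (so $j \le q^{\bar m} - 1$) is immediate from Lemma \ref{AKSYF} given $q \nmid j$.

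For part (3), I count directly: $|J_1| = (q-1) \sum_{j_{\bar m} = 1}^{u-1} j_{\bar m} = (q-1)(u-1)u/2$, and symmetrically $|J_2| = (q-1)(u-1)u/2$. The two sets are disjoint since $j \in J_1$ forces $j_{\bar m - 1} = 0$ whereas $j \in J_2$ requires $j_{\bar m - 1} \ge 1$ (and positions $1$ and $\bar m - 1$ are distinct because $m \ge 5$). Hence $|J_1 \cup J_2| = (u^2 - u)(q - 1)$. The main obstacle is the rotation bookkeeping in part (2): verifying cleanly that no $k$ outside $\{\bar m - 1, \bar m\}$ can give $j q^k < j$, and then matching the resulting lex conditions to the exact forms defining $J_1$ and $J_2$.
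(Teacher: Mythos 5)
Your proposal is correct and follows essentially the same route as the paper: after disposing of $j\le q^{\bar m}$ via Lemma \ref{AKSYF}, you reduce to the two rotations $\ell=\bar m-1$ and $\ell=\bar m$ by tracking where the nonzero digits $j_0$ and $j_{\bar m}$ land, and then read off $J_1$ and $J_2$ (and the overlap case $j_{\bar m-1}=0$, $j_0<j_{\bar m}$ absorbed into $J_1$) from a digit-by-digit comparison, exactly as in the paper's Cases 1--4. The only minor difference is in part (1), where your divisibility argument $\frac{q^m-1}{q^\ell-1}\mid j$ together with $\ell\le\frac{m-3}{2}$ handles $m\in\{5,7\}$ and $m\ge 9$ uniformly, whereas the paper treats these ranges separately; both are valid.
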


\begin{proof} For each $j$ with $1 \le j \le u q^{\bar m}$, let $\ell=|C_j|$. Since $m$ is odd, we have $1 \le \ell \le \frac m 3$ if $\ell <m$.
For $m \ge 9$, one can check that
$$j < jq^\ell < n \text{ for all $1 \le j \le u q^{\bar m}$},$$
which means that
$$jq^\ell \equiv j \pmod n$$ does not hold for each $\ell <m$. Thus we have $|C_j|=m$ if $m \ge 9$.

For $m\in \{5,7\}$, if $|C_j| < m$, then $|C_j|=1$. Therefore, $qj \equiv j \bmod {n}$, which means that $j \equiv 0 \bmod{\frac{q^m-1}{q-1}}$. This is impossible as $j<\frac{q^m-1}{q-1}$. Hence, $|C_j|=m$. 

Below we characterize all coset leaders $j$ satisfying $1 \le j  \le uq^{\bar m}$. To this end, we have to find all integers $j$ satisfying
$j \in C_i$, i.e.,
\begin{equation}\label{IJinSC}jq^\ell \bmod n=i\end{equation} for some integer $\ell$ with $1 \le \ell \le m-1$ and some integer $i<j$.
 Let $i$ and $j$ be two integers with $q \nmid i, q \nmid j$, and $i<j \le uq^{\bar m}$.
By Lemma \ref{AKSYF}, $j$ is a coset leader if $1 \le j \le q^{\bar m}$ and $q \nmid j$, so we can further assume that $j \ge q^{\bar m}+1$.
Then we have the two $q$-adic expansions
$$i=i_{\bar m}q^{\bar m}+i_{\bar m-1}q^{\bar m-1}+\cdots+i_1q+i_0$$ and
$$j=j_{\bar m}q^{\bar m}+j_{\bar m-1}q^{\bar m-1}+\cdots+j_1q+j_0,$$
where $1 \le i_0, j_0 \le q-1$, $1\le j_{\bar m} \le u-1$, and $0 \le i_{\bar m} \le j_{\bar m}$.

 \emph{Case 1:} When $1 \le \ell \le \bar m-2$, it is easy to check that $i < jq^\ell <n$, so \eqref{IJinSC} does not hold.

\emph{Case 2:} When $\ell=\bar m-1$, we have
  $$jq^\ell=j_{\bar m}q^m+j_{\bar m-1}q^{m-1}+\cdots+j_1q^{\bar m}+j_0q^{\bar m-1}$$
  by noting that $\bar m=\frac {m+1} 2$. Then
  $$jq^\ell \bmod n=j_{\bar m-1}q^{m-1}+\cdots+j_2q^{\bar m+1}+j_1q^{\bar m}+j_0q^{\bar m-1}+j_{\bar m}.$$
By \eqref{IJinSC}, we obtain
\begin{equation} \label{JO1} j_{\bar m}=i_0, \  j_{\bar m-1}=j_{\bar m-2}=\cdots=j_2=i_{\bar m-2}=i_{\bar m -3}=\cdots=i_1=0, \ j_1=i_{\bar m}, \ j_0=i_{\bar m-1}. \end{equation}
 Thus $j=j_{\bar m}q^{\bar m}+j_1q+j_0$.

 Notice that $i < j$. Then $i_{\bar m} \le j_{\bar m}$. We assert that the equality $i_{\bar m} = j_{\bar m}$ does not hold. Otherwise, it follows from \eqref{JO1}
 and $i<j$ that $i_{\bar m-1} \le j_{\bar m-1}=0$ and $j_0=i_{\bar m-1}=0$, which is a contradiction. We then deduce that
$0 \le j_1=i_{\bar m}< j_{\bar m} \le u-1$. Denote
$$J_1=\{j_{\bar m}q^{\bar m}+j_1q+j_0: 1 \le j_{\bar m} \le u-1, 0 \le j_1 < j_{\bar m}, 1 \le j_0 \le q-1\}.$$
Then when $\ell=\bar m-1$, \eqref{IJinSC} holds if and only if $j \in J_1$ for $\ell=\bar m-1$.

\emph{Case 3:} When $\ell=\bar m$, we have
$$jq^\ell=j_{\bar m}q^{m+1}+j_{\bar m-1}q^m+\cdots+j_1q^{\bar m+1}+j_0q^{\bar m}.$$
Then $$jq^\ell \bmod n =j_{\bar m-2}q^{m-1}+\cdots+j_1q^{\bar m+1}+j_0q^{\bar m}+j_{\bar m}q+j_{\bar m-1}.$$
By \eqref{IJinSC}, we obtain
\begin{equation} \label{JO2} j_{\bar m}=i_1, \ j_{\bar m-1}=i_0, \ j_{\bar m-2}=\cdots=j_2=j_1=i_{\bar m-1}=i_{\bar m-2}=\cdots=i_{2}=0, \ j_0=i_{\bar m}.\end{equation}
 Thus $j=j_{\bar m}q^{\bar m}+j_{\bar m-1}q^{\bar m-1}+j_0$.

\emph{ Case 3.1:} If $i_{\bar m}< j_{\bar m}$, it then follows from \eqref{JO2} that \eqref{IJinSC} holds if and only if $j \in J_{21}$, where
$$J_{21}=\{j_{\bar m}q^{\bar m}+j_{\bar m-1}q^{\bar m-1}+j_0: 1 \le j_{\bar m} \le u-1, 1 \le j_{\bar m-1} \le q-1, 1 \le j_0<j_{\bar m}\}.$$

\emph{ Case 3.2:} If $i_{\bar m}=j_{\bar m}$ and $j_{\bar m-1}>0$, it then follows from \eqref{JO2} that \eqref{IJinSC} holds if and only if $j \in J_{22}$, where
$$J_{22}=\{j_{\bar m}q^{\bar m}+j_{\bar m-1}q^{\bar m-1}+j_0: 1 \le j_{\bar m} \le u-1, 1 \le j_{\bar m-1} \le q-1, j_0=j_{\bar m}\ge1\}.$$

\emph{ Case 3.3:} If $i_{\bar m}=j_{\bar m}$, $j_{\bar m-1}=0$, then $i_0=j_{\bar m-1}=0$. This is a contradiction to the assumption that $1 \le i_0 \le q-1$.

Denote
$$J_2=J_{21} \cup J_{22}=\{j_{\bar m}q^{\bar m}+j_{\bar m-1}q^{\bar m-1}+j_0: 1 \le j_{\bar m} \le u-1, 1 \le j_{\bar m-1} \le q-1, 1 \le j_0 \le j_{\bar m}\}.$$
Then when $\ell=\bar m$, \eqref{IJinSC} holds if and only if $j \in J_2$ for $\ell=\bar m$.

\emph{Case 4:} When $\bar m+1 \le \ell \le m-1$, let $\ell=\bar m+\epsilon$, where $1 \le \epsilon \le \bar m-2$. Then we have
$$jq^\ell=j_{\bar m}q^{2\bar m+\epsilon}+\cdots+j_{\bar m-\epsilon-1}q^m+j_{\bar m-\epsilon-2}q^{m-1}+\cdots+j_0q^{\bar m+\epsilon}.$$
Then $$jq^\ell \bmod n =j_{\bar m-\epsilon-2}q^{m-1}+\cdots+j_0q^{\bar m+\epsilon}+j_{\bar m}q^{\epsilon+1}+\cdots+j_{\bar m-\epsilon-1}.$$
Note that $j_0 \ge 1$. Then $jq^\ell \bmod n > i$, which implies that \eqref{IJinSC} is impossible in this case.

Combining Cases 1, 2, 3, and 4, we obtain the conclusion on the characterization of coset leaders.
Note that $|J_1|=|J_2|=\frac {u(u-1)} 2(q-1)$. Since $J_1 \cap J_2=\emptyset$, we $|J_1 \cup J_2|=(u^2-u)(q-1)$.
\end{proof}

Employing Proposition~\ref{PO}, we obtain the dimension of certain narrow-sense primitive BCH code.

\begin{theorem} Let $m \ge 5$ be an odd integer and $\delta=uq^{\frac {m+1} 2}+1$, where $1 \le u \le q-1$.
Then the code $\mathcal C_{(q,n, \delta, 1)}$ has length $n$, dimension
$$k=q^m-1-(uq^{\frac {m-1} 2}-u^2+u)(q-1)m,$$ and minimum distance $d \ge \delta$.
Furthermore, the generator polynomial is given by $$g_{(q,n, \delta, 1)}(x)=\prod_{\substack {1 \le j \le uq^{\frac {m+1} 2} \\ q \nmid j, j \not \in J_1 \cup J_2}}m_j(x),$$
where $J_1$ and $J_2$ are defined in Proposition~\ref{PO}.
\end{theorem}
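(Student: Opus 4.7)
The plan is to read off the dimension and generator polynomial directly from the structural data provided by Proposition~\ref{PO}, then invoke the BCH bound for the lower bound on the minimum distance. The length $n=q^m-1$ is immediate from the setup.

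First I would rewrite the defining lcm for the generator polynomial as a product over distinct cyclotomic cosets. Since $g_{(q,n,\delta,1)}(x)=\lcm(m_1(x),m_2(x),\ldots,m_{\delta-1}(x))$, and since $m_i(x)$ depends only on the coset $C_i$, we have
\[
g_{(q,n,\delta,1)}(x)=\prod_{j} m_j(x),
\]
where $j$ ranges over all coset leaders with $1\le j\le \delta-1=uq^{(m+1)/2}$. (Every coset $C$ meeting $\{1,\ldots,\delta-1\}$ has its leader in $\{1,\ldots,\delta-1\}$ since the leader is the minimum of $C$, and conversely every such leader lies in its own coset.) Coset leaders must satisfy $q\nmid j$ unless $j=0$, so the product can be restricted to $j$ with $q\nmid j$. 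Proposition~\ref{PO} then tells us exactly which such $j$ fail to be coset leaders, namely $j\in J_1\cup J_2$, which yields the stated expression for $g_{(q,n,\delta,1)}(x)$.

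Next I would compute $\deg g_{(q,n,\delta,1)}(x)$ by counting the factors and using that each factor has degree $m$. Proposition~\ref{PO} guarantees $|C_j|=m$ for every $j$ in the range with $q\nmid j$, hence $\deg m_j(x)=m$ for every factor appearing in $g_{(q,n,\delta,1)}(x)$. The number of integers $j$ with $1\le j\le uq^{(m+1)/2}$ and $q\nmid j$ equals $uq^{(m+1)/2}-uq^{(m-1)/2}=uq^{(m-1)/2}(q-1)$. Subtracting the $|J_1\cup J_2|=(u^2-u)(q-1)$ non-leaders from Proposition~\ref{PO} gives the number of coset leaders:
\[
\bigl(uq^{(m-1)/2}-u^2+u\bigr)(q-1).
\]
Multiplying by $m$ yields $\deg g_{(q,n,\delta,1)}(x)=(uq^{(m-1)/2}-u^2+u)(q-1)m$, so the dimension is $k=n-\deg g_{(q,n,\delta,1)}(x)=q^m-1-(uq^{(m-1)/2}-u^2+u)(q-1)m$.

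Finally, the minimum distance bound $d\ge \delta$ is just the BCH bound for a code of designed distance $\delta$. There is essentially no hard step left here: all the delicate combinatorial work has been done inside Proposition~\ref{PO}, and the theorem is a bookkeeping consequence of that proposition together with the BCH bound. The only thing to be mildly careful about is the double-counting check showing $J_1\cap J_2=\emptyset$, which is already established in the proof of Proposition~\ref{PO}, so no obstacle remains.
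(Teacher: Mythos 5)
Your proposal is correct and follows exactly the route the paper intends: the paper's own proof is the one-line remark that the theorem "follows from Proposition~\ref{PO} and the BCH bound immediately," and your write-up simply supplies the routine bookkeeping (leaders of cosets meeting $\{1,\ldots,\delta-1\}$, the count $uq^{(m-1)/2}(q-1)$ of non-multiples of $q$, subtraction of $|J_1\cup J_2|$, and degree $m$ for each factor) that the authors leave implicit. No gaps.
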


\begin{proof}
The desired conclusions follow from Proposition \ref{PO} and the BCH bound immediately.
\end{proof}

\begin{example}
\begin{enumerate}
  \item When $(q, m, u)=(2, 5, 1)$ in the above theorem, the code $\mathcal C_{(q,n, \delta, 1)}$ has parameters $[31, 11, 11]$, which is an optimal code according to the Database.
  \item When $(q, m, u)=(2, 7, 1)$ in the above theorem, the code $\mathcal C_{(q,n, \delta, 1)}$ has parameters $[127, 71, 19]$, which are the best parameters for linear codes according to the Database.
\end{enumerate}
\end{example}

The following proposition gives the size of each cyclotomic coset $C_j$ and  characterizes all coset leaders $j$ satisfying $1 \le j  \le uq^{\bar m}$, where $m \ge 2$ is an even integer.

\begin{proposition} \label{PE}
Let $m \ge 2$ be an even integer and let $j$ be an integer with $1 \le j \le u q^{\bar m}$ and $q \nmid j$, where $1 \le u \le q-1$. Then the following holds.
 \begin{enumerate}
   \item $|C_j|=m$, except $|C_{v(q^{\bar m}+1)}|=\bar m$, where $1 \le v \le u-1$.
   \item $j$ is a coset leader of the cyclotomic coset $C_j$ except $j \in J$, where
\begin{equation} \label{JE} J=\{j_{\bar m}q^{\bar m}+j_0: 1\le j_0< j_{\bar m} \le u-1\}.\end{equation}
   \item $|J|=\frac {(u-1)(u-2)} 2$.
 \end{enumerate}
\end{proposition}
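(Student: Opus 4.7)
The plan is to mirror the structure of the proof of Proposition~\ref{PO}, adapted to the even case $m = 2\bar m$, where $n = q^m - 1 = (q^{\bar m} - 1)(q^{\bar m} + 1)$ and Lemma~\ref{AKSYF} already guarantees that every integer in $[1, 2q^{\bar m}]$ coprime to $q$ is a coset leader with $|C_j| = m$ (save $|C_{q^{\bar m}+1}| = \bar m$). Throughout, I would write $j$ in its $q$-adic expansion
$j = j_{\bar m} q^{\bar m} + j_{\bar m - 1} q^{\bar m - 1} + \cdots + j_1 q + j_0$,
with $1 \le j_0 \le q - 1$ (as $q \nmid j$) and $1 \le j_{\bar m} \le u - 1$ (the only way to have $j_{\bar m} = u$ compatibly with $j \le uq^{\bar m}$ is $j = uq^{\bar m}$, which is ruled out by $q \nmid j$), and then compare $j$ with $jq^\ell \bmod n$ digit by digit for each relevant shift~$\ell$.

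For part~(1), if $|C_j| = \ell < m$, then $\ell \mid m$ and $jq^\ell \equiv j \pmod n$ forces $\frac{q^m - 1}{q^\ell - 1} \mid j$. Because this divisor is at least $q^{m - \ell}$ while $j \le uq^{\bar m} < q^{\bar m + 1}$, one needs $m - \ell \le \bar m$, i.e.\ $\ell \ge \bar m$; together with $\ell \mid m$ and $\ell < m$, this pins down $\ell = \bar m$. Then $j$ must be a positive multiple of $q^{\bar m} + 1$ not exceeding $uq^{\bar m}$, and since $u(q^{\bar m} + 1) > uq^{\bar m}$, the admissible values are precisely $j = v(q^{\bar m} + 1)$ for $1 \le v \le u - 1$.

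For part~(2), by Lemma~\ref{AKSYF} I may restrict to $j > 2q^{\bar m}$ and look for some $i < j$ with $jq^\ell \bmod n = i$ and $1 \le \ell \le m - 1$. I would split into three cases. \emph{Case $1 \le \ell \le \bar m - 1$:} $jq^\ell \le (q - 1)q^{m - 1} < n$, so $i = jq^\ell > j$, contradiction. \emph{Case $\ell = \bar m$:} Using $q^m \equiv 1 \pmod n$, the $q$-adic digits of $i = jq^{\bar m} \bmod n$ are $j_{\bar m}$ at position $0$, zeros at positions $1, \ldots, \bar m - 1$, and $j_k$ at position $\bar m + k$ for $0 \le k \le \bar m - 1$. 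A position-by-position comparison from the top shows that $i < j$ forces $j_1 = j_2 = \cdots = j_{\bar m - 1} = 0$ (otherwise $i$ has a nonzero digit strictly above position $\bar m$, making $i > j$) and then $j_0 < j_{\bar m}$ (equality yielding $i = j$), which is precisely the set $J$. \emph{Case $\bar m + 1 \le \ell \le m - 1$:} Writing $\ell = \bar m + \epsilon$ with $\epsilon \ge 1$, reducing modulo $n$ places the digit $j_0 \ge 1$ at position $\bar m + \epsilon$, so $i \ge q^{\bar m + \epsilon} \ge q^{\bar m + 1} > j$, again a contradiction. Part~(3) then follows by direct counting: $|J| = \sum_{t = 2}^{u - 1}(t - 1) = \frac{(u - 1)(u - 2)}{2}$.

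I expect the main technical hurdle to be the digit bookkeeping in the case $\ell = \bar m$: one must keep track of where every digit of $j$ lands after multiplication by $q^{\bar m}$ and reduction modulo $q^m - 1$, and then justify the inequality $i < j$ by a careful top-down comparison that forces the middle digits of $j$ to vanish, leaving only the two-term shape $j = j_{\bar m}q^{\bar m} + j_0$ described in $J$. The other two cases reduce to crude size bounds, and the enumeration in part~(3) is immediate.
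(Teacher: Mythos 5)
Your proposal is correct and follows essentially the same route as the paper: reduce to $j>2q^{\bar m}$ via Lemma~\ref{AKSYF}, then compare $jq^{\ell}\bmod n$ with $j$ digit by digit, with only $\ell=\bar m$ capable of producing a smaller coset element, which forces the two-term shape $j=j_{\bar m}q^{\bar m}+j_0$ with $j_0<j_{\bar m}$. The one small divergence is part~(1), where you use the divisibility $\frac{q^m-1}{q^{\ell}-1}\mid j$ to pin down $\ell=\bar m$ directly instead of extracting it from the case analysis as the paper does; both arguments are valid and yours is self-contained.
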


\begin{proof}
 Let $i$ and $j$ be two integers with $q \nmid i, q \nmid j$, and $i<j \le uq^{\bar m}$. Suppose that $j \in C_i$. Then there exists some integer $\ell$ with $1 \le \ell \le m-1$ such that
 \begin{equation}\label{IJinSC2}jq^\ell \bmod n=i.\end{equation}
By Lemma \ref{AKSYF}, $j$ is a coset leader if $1 \le j \le 2q^{\bar m}$ and $q \nmid j$, so we can further assume that $j \ge 2q^{\bar m}+1$.
Then we have the two $q$-adic expansions
$$i=i_{\bar m}q^{\bar m}+i_{\bar m-1}q^{\bar m-1}+\cdots+i_1q+i_0$$ and
$$j=j_{\bar m}q^{\bar m}+j_{\bar m-1}q^{\bar m-1}+\cdots+j_1q+j_0,$$
where $1 \le i_0, j_0 \le q-1$, $2 \le j_{\bar m} \le u-1$, and $0 \le i_{\bar m} \le j_{\bar m}$.

\emph{Case 1:} When $1 \le \ell \le \bar m-1$, it is easy to check that $i < jq^\ell <n$, so \eqref{IJinSC2} does not hold.

\emph{Case 2:} When $\ell=\bar m$, we have
  $$jq^\ell=j_{\bar m}q^m+j_{\bar m-1}q^{m-1}+\cdots+j_1q^{\bar m+1}+j_0q^{\bar m}.$$
 Then
  $$jq^\ell \bmod n=j_{\bar m-1}q^{m-1}+\cdots+j_1q^{\bar m+1}+j_0q^{\bar m}+j_{\bar m}.$$
By \eqref{IJinSC2}, we obtain
\begin{equation} \label{JE1} j_{\bar m}=i_0, \ j_{\bar m-1}=j_{\bar m-2}=\cdots=j_1=i_{\bar m-1}=i_{\bar m-2}\cdots=i_1=0, \ j_0=i_{\bar m}. \end{equation}
 Thus $j=j_{\bar m}q^{\bar m}+j_0$.

 \emph{ Case 2.1:} If $i_{\bar m}< j_{\bar m}$, it then follows from \eqref{JE1} that \eqref{IJinSC2} holds if and only if $j \in J$, where
$$J=\{j_{\bar m}q^{\bar m}+j_0: 1 \le j_0 < j_{\bar m} \le u-1\}.$$

\emph{ Case 2.2:} If $i_{\bar m}=j_{\bar m}$, since $i_{\bar m-1}=\cdots=i_1=0$, we have $i_0<j_0$. Then
$$i_0<j_0=i_{\bar m}=j_{\bar m}=i_0,$$
which is a contradiction. Thus \eqref{IJinSC2} does not hold.

\emph{Case 3:} When $\bar m+1 \le \ell \le m-1$, let $\ell=\bar m+\epsilon$, where $1 \le \epsilon \le \bar m-1$. Then
$$jq^\ell=j_{\bar m}q^{2\bar m+\epsilon}+\cdots+j_{\bar m-\epsilon}q^m+j_{\bar m-\epsilon-1}q^{m-1}+\cdots+j_0q^{\bar m+\epsilon}.$$
Then $$jq^\ell \bmod n =j_{\bar m-\epsilon-1}q^{m-1}+\cdots+j_1q^{\bar m+\epsilon+1}+j_0q^{\bar m+\epsilon}+j_{\bar m}q^{\epsilon}+j_{\bar m-1}q^{\epsilon-1}+\cdots+j_{\bar m-\epsilon}.$$
Note that $j_0 \ge 1$. Then $jq^\ell \bmod n > i$, which implies that \eqref{IJinSC2} is impossible in this case.

Summarizing all the discussions in Cases 1, 2, and 3, we get the desired conclusion of 2). It is easy to see that
$$jq^\ell \bmod n > j$$ in both Cases 1 and 3. Then we have
$|C_j|=\bar m$ if $|C_j|< m$. Moreover, it follows from Case 2 that
$$j=j_{\bar m}q^{\bar m}+j_0=j_0q^{\bar m}+j_{\bar m} \text{ and } j_0=j_{\bar m}.$$
Then we proved 1). It is clear that $|J|=\frac {(u-1)(u-2)} 2$. This completes the proof.
\end{proof}

Employing Proposition~\ref{PE}, we can obtain the dimension of certain narrow-sense primitive BCH code.

\begin{theorem} Let $m \ge 2$ be an even integer and $\delta=uq^{\frac m 2}+1$. Then the code $\mathcal C_{(q,n, \delta, 1)}$ has length
$n$, dimension $$k=q^m-1-uq^{\frac m 2-1}(q-1)m+\frac {(u-1)^2} 2 m,$$ and minimum distance $d \ge \delta$. When $u=1$, we have $d =\delta$.
Furthermore, the generator polynomial is given by $$g_{(q,n, \delta, 1)}(x)=\prod_{\substack {1 \le j \le uq^{\frac m 2} \\ q \nmid j, j \not \in J}}m_j(x),$$
where $J$ is defined in Proposition~\ref{PE}.
\end{theorem}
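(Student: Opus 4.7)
The plan is to apply Proposition~\ref{PE} to count the distinct minimal polynomial factors of $g_{(q,n,\delta,1)}(x)$ and aggregate their degrees. Recall that $g_{(q,n,\delta,1)}(x) = \lcm(m_1(x), m_2(x), \ldots, m_{\delta-1}(x))$ and $m_i(x) = m_j(x)$ if and only if $C_i = C_j$. Hence the distinct factors are indexed by coset leaders lying in $[1, \delta-1] = [1, uq^{m/2}]$ (since the leader, being the smallest element of its coset, is automatically in the range whenever any element is).

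Next, I would invoke Proposition~\ref{PE} to describe these coset leaders explicitly: they are the integers $j$ with $1 \le j \le uq^{m/2}$, $q \nmid j$, and $j \notin J$, where $J$ is as in \eqref{JE}. Counting gives
$$N = uq^{m/2-1}(q-1) - \frac{(u-1)(u-2)}{2}$$
such leaders, which are precisely the indices appearing in the product formula for $g_{(q,n,\delta,1)}(x)$.

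For the degree, Proposition~\ref{PE}(1) says $|C_j| = m$ for all such $j$ with the sole exceptions $j = v(q^{m/2}+1)$, $1 \le v \le u-1$, where $|C_j| = \bar m = m/2$. A brief check confirms each of these $u-1$ exceptional values lies in $[1, uq^{m/2}]$, satisfies $q \nmid j$, and avoids $J$ (since the $q$-adic expansion has equal ``high'' and ``low'' digits, whereas membership in $J$ would require $j_0 < j_{\bar m}$). Therefore
$$\deg(g) = (N - (u-1))\,m + (u-1)\cdot \frac{m}{2},$$
which, after collecting terms, simplifies to $uq^{m/2 - 1}(q-1)m - \frac{(u-1)^2}{2} m$, yielding the stated dimension $k = n - \deg(g)$. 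The explicit product formula for $g_{(q,n,\delta,1)}(x)$ is then immediate from the characterization of the coset leaders.

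Finally, the BCH bound immediately gives $d \ge \delta$. For the sharpness claim when $u=1$, observe that $\delta = q^{m/2}+1$ divides $n = q^m - 1 = (q^{m/2}-1)(q^{m/2}+1)$, so Lemma~\ref{Lemmamd} forces $d = \delta$. The main bookkeeping obstacle is Step 3: verifying that the $u-1$ half-size cosets $C_{v(q^{m/2}+1)}$ are exactly the exceptional ones inside the range and carrying the $\frac{(u-1)^2}{2}m$ correction through the arithmetic cleanly. Everything else is a direct application of Proposition~\ref{PE} and the BCH bound.
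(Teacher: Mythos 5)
Your proposal is correct and follows essentially the same route as the paper, which simply cites Proposition~\ref{PE}, Lemma~\ref{Lemmamd} and the BCH bound; you have merely filled in the counting and degree bookkeeping (correctly: $Nm-(u-1)\tfrac{m}{2}=uq^{\frac m2-1}(q-1)m-\tfrac{(u-1)^2}{2}m$). The verification that the $u-1$ exceptional indices $v(q^{\frac m2}+1)$ lie in range, are prime to $q$, and avoid $J$ is exactly the detail the paper leaves implicit, and your argument for it is sound.
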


\begin{proof}  When $u=1$, it is clear that $\delta | n$.
The desired conclusions then follow from Lemma \ref{Lemmamd}, Proposition \ref{PE} and the BCH bound.
\end{proof}

\begin{example}
\begin{enumerate}
  \item When $(q, m, u)=(2, 4, 1)$ in the above theorem, the code $\mathcal C_{(q,n, \delta, 1)}$ has parameters $[15, 7, 5]$, which is an optimal code according to the Database.
  \item When $(q, m, u)=(3, 4, 1)$ or $(q, m, u)=(3, 4, 2)$ in the above theorem, the code $\mathcal C_{(q,n, \delta, 1)}$ has parameters $[80, 56, 10]$ and $[80, 34, 20]$, respectively.
 The former has the best parameters for linear codes according to the Database.
\end{enumerate}
\end{example}

\section{Parameters of LCD BCH code $\codeeven$ when $q$ is odd}

In this section, we always assume that $q$ is odd, $u$ is an integer with $1 \le u \le q-1$. The following proposition will be used later.
\begin{proposition} \label{PRelation} Let $q$ be odd and $m \ge 2$. Then we have the following.
\begin{enumerate}
  \item $|C_{\bar n+i}|=|C_i|=|C_{-i}|=|C_{\bar n-i}|$.
  \item $|C_{\bar n+qi}|=|C_{\bar n+i}|$ and $|C_{\bar n-qi}|=|C_{\bar n-i}|$.
  \item $C_i=C_j$ if and only if $C_{\bar n+i}=C_{\bar n+j}$.
  \item $C_i=C_j$ if and only if $C_{\bar n-i}=C_{\bar n-j}$.
\end{enumerate}
\end{proposition}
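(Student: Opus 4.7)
The plan is to reduce everything to the single observation that $\bar n$ is a fixed point of multiplication by $q$ modulo $n$. Since $q$ is odd, $n = q^m - 1$ is even and $\bar n = n/2$, so
$$q\bar n - \bar n = (q-1)\cdot\frac{n}{2} = \frac{q-1}{2}\cdot n \equiv 0 \pmod n,$$
using that $(q-1)/2 \in \Z$. A routine induction then gives $q^\ell \bar n \equiv \bar n \pmod n$ for every $\ell \ge 0$, and hence the key identity
$$q^\ell(\bar n \pm i) \equiv \bar n \pm q^\ell i \pmod n \quad \text{for all } \ell \ge 0.$$

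First I would prove item 1. By definition $|C_i|$ is the smallest $\ell \ge 1$ with $q^\ell i \equiv i \pmod n$. The key identity shows that $q^\ell(\bar n + i) \equiv \bar n + i \pmod n$ if and only if $q^\ell i \equiv i \pmod n$, so $|C_{\bar n+i}| = |C_i|$. The equality $|C_i|=|C_{-i}|$ is standard, since $q^\ell(-i)\equiv -i$ if and only if $q^\ell i \equiv i$, and $|C_{\bar n - i}| = |C_{-i}|$ follows from the first part applied with $i$ replaced by $-i$.

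Item 2 falls out of the key identity with $\ell = 1$: since $q(\bar n+i) \equiv \bar n + qi \pmod n$, we have $\bar n + qi \in C_{\bar n + i}$, and therefore $C_{\bar n + qi} = C_{\bar n + i}$; the second equality follows by replacing $i$ with $-i$. For item 3, assume $C_i = C_j$ and pick $\ell$ with $j \equiv q^\ell i \pmod n$; the key identity then gives $\bar n + j \equiv q^\ell(\bar n + i) \pmod n$, so $C_{\bar n + j} = C_{\bar n + i}$. The converse is symmetric: from $\bar n + j \equiv q^\ell(\bar n + i) \equiv \bar n + q^\ell i \pmod n$ one cancels $\bar n$ to deduce $j \equiv q^\ell i \pmod n$. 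Item 4 is the identical argument with $i, j$ replaced by $-i, -j$.

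The only place the argument has any real content is the identity $q\bar n \equiv \bar n \pmod n$; this is where the hypothesis that $q$ is odd is used essentially, through the divisibility $(q-1)/2 \in \Z$. Once this is in hand, all four statements reduce to the same piece of bookkeeping applied uniformly to the four shifts $\bar n + i$, $\bar n - i$, $\bar n + j$, $\bar n - j$.
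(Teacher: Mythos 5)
Your proof is correct and rests on exactly the same observation as the paper's: since $q$ is odd, $\bar n = n/2$ satisfies $q^\ell \bar n \equiv \bar n \pmod n$, so the congruence $\bar n \pm i \equiv (\bar n \pm j)q^\ell \pmod n$ is equivalent to $\pm i \equiv \pm j q^\ell \pmod n$. The paper merely states this more tersely (declaring 1) and 2) obvious), while you spell out the bookkeeping; no substantive difference.
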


\begin{proof}
The proof of 1) and 2) is obvious. Note that $q$ is odd. It is clear that
$$\frac n 2 \pm i \equiv (\frac n 2 \pm j)q^\ell \pmod n,$$
 is equivalent to
$$i \equiv jq^\ell \pmod n$$ for each $\ell$ with $0 \le \ell \le m-1$.
Then the conclusions of 3) and 4) follow.
\end{proof}

Let $1 \le u \le q-1$ be an integer. Define
$$J_{(q,n, u)}^+=\bigcup_{1 \le j \le uq^{\bar m}}C_{\bar n+j} \mbox{ and } J_{(q,n, u)}^-=\bigcup_{1 \le j \le uq^{\bar m}}C_{\bar n-j},$$
where $q$ is odd.
It can be deduced from Proposition \ref{PRelation} that $C_{\bar n+i} \ne C_{\bar n+j}$ and $C_{\bar n-i} \ne C_{\bar n-j}$ if and only if
$C_i \ne C_j$. The following corollary then follows from Propositions \ref{PO} and \ref{PE} directly.

\begin{corollary} \label{CorollaryQO} Let $q$ be odd and $j$ be an integer with $1 \le j \le uq^{\bar m}$.
\begin{enumerate}
  \item If $m \ge 5$ is odd, then $|C_{\bar n+j}|=|C_{\bar n-j}|=m$ and
  $$|J_{(q,n, u)}^+|=|J_{(q,n, u)}^-|=(uq^{\bar m-1}-u^2+u)(q-1)m.$$
  \item If $m \ge 2$ is even, then $|C_{\bar n+j}|=|C_{\bar n-j}|=m$ except $j=v(q^{\bar m}+1)$ with
$|C_{\bar n+v(q^{\bar m}+1)}|=|C_{\bar n-v(q^{\bar m}+1)}|=\frac m 2$, where $v=1, 2, \ldots, u-1$. In this case,
$$|J_{(q,n, u)}^+|=|J_{(q,n, u)}^-|=uq^{\bar m-1}(q-1)m-\frac {(u-1)^2} 2 m.$$
\end{enumerate}
\end{corollary}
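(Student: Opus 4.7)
The plan is to reduce everything about the cosets $C_{\bar n\pm j}$ to the already-understood cosets $C_j$ via Proposition~\ref{PRelation}, and then quote the coset-leader counts from Propositions~\ref{PO} and \ref{PE}. First I would make the reduction from all $j\in[1,uq^{\bar m}]$ to those with $q\nmid j$: since $q$ is odd and $n=q^m-1$ is even, we have $\bar n=n/2$ and $(q-1)n/2\equiv 0\pmod n$, so $q\bar n\equiv\bar n\pmod n$. Hence $q(\bar n+i)\equiv\bar n+qi\pmod n$, giving $C_{\bar n+qi}=C_{\bar n+i}$ and symmetrically $C_{\bar n-qi}=C_{\bar n-i}$; both unions $J_{(q,n,u)}^{+}$ and $J_{(q,n,u)}^{-}$ are therefore unchanged by restricting to indices $j$ with $q\nmid j$.

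Next I would invoke parts (1), (3) and (4) of Proposition~\ref{PRelation}. Part~(1) gives $|C_{\bar n\pm j}|=|C_j|$ directly, so the asserted cardinalities of individual cosets in each item follow immediately from Propositions~\ref{PO} and \ref{PE}. Parts~(3) and (4) yield $C_{\bar n\pm i}=C_{\bar n\pm j}$ if and only if $C_i=C_j$, so the number of distinct cosets appearing in $J_{(q,n,u)}^{\pm}$ equals the number of distinct cosets $C_j$ with $j\in[1,uq^{\bar m}]$ and $q\nmid j$---which is precisely the number of coset leaders in that range, handed to us by Propositions~\ref{PO} and \ref{PE}.

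In the odd case every such coset has size $m$, and the non-leaders form $J_1\cup J_2$ of size $(u^2-u)(q-1)$; among the $uq^{\bar m-1}(q-1)$ candidate $j$'s this leaves $(uq^{\bar m-1}-u^2+u)(q-1)$ distinct cosets, and multiplying by $m$ gives the stated formula. The even case is the main bookkeeping hurdle: Proposition~\ref{PE} produces the exceptional family $C_{v(q^{\bar m}+1)}$ of size $\bar m=m/2$ for $1\le v\le u-1$, and I would first verify that each $v(q^{\bar m}+1)=vq^{\bar m}+v$ is itself a coset leader (its $q$-adic digits $(j_{\bar m},j_0)=(v,v)$ fail the strict inequality $j_0<j_{\bar m}$ defining $J$) and satisfies $q\nmid v(q^{\bar m}+1)$ (since $1\le v\le q-2$). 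Splitting the coset leaders into the $u-1$ of size $\bar m$ and the remaining $uq^{\bar m-1}(q-1)-(u-1)(u-2)/2-(u-1)$ of size $m$, a routine simplification collapses the total to $uq^{\bar m-1}(q-1)m-\tfrac{(u-1)^2}{2}m$. The only real subtlety is this even-case separation---making sure the exceptional small cosets are classified as leaders (and not confused with elements of $J$), and that the half-sized contribution simplifies exactly to the claimed form.
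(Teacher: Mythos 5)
Your proposal is correct and follows essentially the same route the paper intends: reduce $C_{\bar n\pm j}$ to $C_j$ via Proposition~\ref{PRelation}, observe that the union may be restricted to $q\nmid j$ and that distinct cosets correspond to coset leaders, then count using Propositions~\ref{PO} and~\ref{PE}; your even-case bookkeeping (checking that $v(q^{\bar m}+1)\notin J$, is coprime to $q$, and contributes $m/2$ each) reproduces the stated formula exactly. No gaps.
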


\begin{theorem} \label{THEOREMQO}
Let $m \ge 2$ be an integer and $\delta=uq^{\bar m }+1$.
\begin{enumerate}
  \item If $m \ge 5$ is odd, then $\mathcal C_{(q,n, \delta, \frac n 2+1)}$ and $\mathcal C_{(q,n, \delta, \frac n 2-(\delta-1))}$ both have length
$n$, dimension $$k=q^m-1-(uq^{\frac {m-1} 2}-u^2+u)(q-1)m,$$ and minimum distance $d \ge \delta$.
In addition, the generator polynomials are given by $$g_{(q,n, \delta, \frac n 2+1)}(x)=\prod_{\substack {1 \le j \le uq^{\frac {m+1} 2} \\ q \nmid j, j \not \in J_1 \cup J_2}}m_{\frac n 2+j}(x)
\text{ and } g_{(q,n, \delta, \frac n 2-(\delta-1))}(x)=\prod_{\substack {1 \le j \le uq^{\frac {m+1} 2} \\ q \nmid j, j \not \in J_1 \cup J_2}}m_{\frac n 2-j}(x),$$
where $J_1$ and $J_2$ are defined in Proposition~\ref{PO}.
  \item If $m \ge 2$ is even, then $\mathcal C_{(q,n, \delta, \frac n 2+1)}$ and $\mathcal C_{(q,n, \delta, \frac n 2-(\delta-1))}$ both have length
$n$, dimension $$q^m-1-uq^{\frac m 2-1}(q-1)m+\frac {(u-1)^2} 2 m,$$ and minimum distance $d \ge \delta$.
In addition, the generator polynomials are given by $$g_{(q,n, \delta, \frac n 2+1)}(x)=\prod_{\substack {1 \le j \le uq^{\frac m 2} \\ q \nmid j, j \not \in J}}m_{\frac n 2+j}(x)
\text{ and } g_{(q,n, \delta, \frac n 2-(\delta-1))}(x)=\prod_{\substack {1 \le j \le uq^{\frac m 2} \\ q \nmid j, j \not \in J}}m_{\frac n 2-j}(x),$$
where $J$ is defined in Proposition~\ref{PE}.
\end{enumerate}
\end{theorem}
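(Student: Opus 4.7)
The plan is to read off the generator polynomial as a product of minimal polynomials indexed by the defining set, convert the degree into the cardinality computed in Corollary \ref{CorollaryQO}, and invoke the BCH bound for the minimum distance. Concretely, for $\mathcal{C}_{(q,n,\delta,\frac n 2+1)}$ with $\delta=uq^{\bar m}+1$, the defining set of the generator polynomial is $\bigcup_{j=1}^{uq^{\bar m}}C_{\bar n+j}=J^+_{(q,n,u)}$, so the degree of $g_{(q,n,\delta,\frac n 2+1)}(x)$ equals $|J^+_{(q,n,u)}|$. By Corollary \ref{CorollaryQO} this degree is $(uq^{\bar m-1}-u^2+u)(q-1)m$ when $m$ is odd and $uq^{\bar m-1}(q-1)m-\frac{(u-1)^2}{2}m$ when $m$ is even. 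Substituting $\bar m=(m+1)/2$ or $m/2$ and subtracting from $n=q^m-1$ yields the two claimed dimension formulas, and the bound $d\ge\delta$ is immediate from the BCH bound.

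For the explicit product form, the key ingredient is Proposition \ref{PRelation}(3), which asserts $C_i=C_j$ if and only if $C_{\bar n+i}=C_{\bar n+j}$. Thus choosing one representative per distinct coset of $\{C_{\bar n+j}\}_{j=1}^{uq^{\bar m}}$ is equivalent to choosing one representative per distinct coset of $\{C_j\}_{j=1}^{uq^{\bar m}}$, and Propositions \ref{PO} and \ref{PE} characterize the latter leaders as precisely the integers $j\in[1,uq^{\bar m}]$ with $q\nmid j$ and $j\notin J_1\cup J_2$ (for $m$ odd) or $j\notin J$ (for $m$ even). Feeding these leaders into $m_{\bar n+j}(x)$ reproduces the stated product formulas.

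The companion code $\mathcal{C}_{(q,n,\delta,\frac n 2-(\delta-1))}$ is handled identically: its defining set is $\bigcup_{j=1}^{uq^{\bar m}}C_{\bar n-j}=J^-_{(q,n,u)}$, Proposition \ref{PRelation}(4) provides the analogous bijection $C_i=C_j \iff C_{\bar n-i}=C_{\bar n-j}$, and Corollary \ref{CorollaryQO} guarantees $|J^-_{(q,n,u)}|=|J^+_{(q,n,u)}|$, so the dimension agrees. I expect no serious obstacle, as the theorem is essentially a bookkeeping corollary of the earlier results; the one subtle point is the even-$m$ case, where the cosets $C_{\bar n+v(q^{\bar m}+1)}$ for $1\le v\le u-1$ have size $\bar m$ rather than $m$, which is precisely the source of the $\frac{(u-1)^2}{2}m$ correction and is already absorbed into Corollary \ref{CorollaryQO}.
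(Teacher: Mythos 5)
Your proposal is correct and takes essentially the same route the paper intends: the paper omits the proof, stating only that it "follows from Corollary \ref{CorollaryQO} and the BCH bound," and your write-up supplies exactly that bookkeeping — identifying the defining set with $J^{\pm}_{(q,n,u)}$, reading the degree from Corollary \ref{CorollaryQO}, using Proposition \ref{PRelation} together with Propositions \ref{PO} and \ref{PE} to select one minimal polynomial per distinct coset, and invoking the BCH bound. No gaps.
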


\begin{proof} The proof follows from Corollary \ref{CorollaryQO} and the BCH bound, and is omitted here.
\end{proof}

\begin{example}
\begin{enumerate}
  \item When $(q, m, u)=(3, 5, 1), (3, 5, 2)$ in the above theorem, the code $\mathcal C_{(q,n, \delta, 1)}$ has parameters $[242, 152, d \ge 28]$ and $[242, 82, d \ge 55]$, respectively.
  \item When $(q, m, u)=(4, 4, 1),(4, 4, 2),(4, 4, 3)$ in the above theorem, the code $\mathcal C_{(q,n, \delta, 1)}$ has parameters $[255, 207, d \ge 17]$, $[242, 161, d \ge 33]$, and $[242, 119, d \ge 49]$, respectively.
\end{enumerate}
\end{example}

\emph{A. Parameters of $\codeeven$ when $m$ is odd}

The following proposition plays an important role in determining the dimension of the BCH code $\codeeven$ when $m \ge 5$ is odd and $\delta=uq^{\frac {m+1} 2}+1$, where $1 \le u \le q-1$.

\begin{proposition} \label{PI1}
For odd $m \ge 5$, we have
$$J_{(q,n,u)}^+ \cap J_{(q,n,u)}^-=\bigcup_{l \in \mathcal J_O}(C_{\bar n+l} \cup C_{\bar n-l}),$$
where the union is disjoint and
$$\mathcal J_O=\{l_{\bar m}q^{\bar m}+l_{\bar m-1}q^{\bar m-1}+(q-1)\sum_{i=1}^{\bar m-2}q^i+l_0:
0 \le l_{\bar m} \le u-1, 0 \le l_{\bar m-1} \le q-2, q-u \le l_0 \le q-1\}.$$
Moreover, $$|J_{(q,n,u)}^+ \cap J_{(q,n,u)}^-|=2u^2(q-1)m.$$
\end{proposition}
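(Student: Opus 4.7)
The plan is to exploit the identity $\bar n\, q^s \equiv \bar n \pmod n$---valid because $q$ is odd and $2\bar n = n$---to reduce the coset equality $C_{\bar n+j_1} = C_{\bar n-j_2}$ to the congruence
\[
j_1 + j_2 q^\ell \equiv 0 \pmod n
\]
for some $0 \le \ell \le m-1$. Hence every coset contributing to $J_{(q,n,u)}^+ \cap J_{(q,n,u)}^-$ arises from a triple $(j_1,j_2,\ell)$ with $1 \le j_1,j_2 \le uq^{\bar m}$ satisfying this congruence.

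Next I would translate everything into $q$-adic digits. Writing $j_1 = \sum_{i=0}^{m-1} a_i q^i$ forces $j_2 = \sum_{k=0}^{m-1} b_k q^k$ with $b_k = q - 1 - a_{(k+\ell) \bmod m}$. The constraint $j_1 \le uq^{\bar m}$ imposes $a_i = 0$ for $i \in \{\bar m+1,\ldots,m-1\}$ and $a_{\bar m} \le u-1$, while $j_2 \le uq^{\bar m}$ imposes $a_i = q-1$ for $i$ in the cyclic translate $S_\ell = \{(\bar m + 1 + \ell)\bmod m,\ldots,(m - 1 + \ell)\bmod m\}$ together with $a_{(\bar m+\ell)\bmod m} \ge q-u$. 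A cyclic-interval argument then rules out every $\ell$ outside $\{\bar m-2,\bar m-1,\bar m\}$: only for these three shifts is the length-$(\bar m - 2)$ set $S_\ell$ disjoint from the length-$(\bar m - 2)$ forced-zero window $\{\bar m+1,\ldots,m-1\}$. Of these three, $\ell = \bar m-2$ is excluded because $(\bar m+\ell)\bmod m = m-1$ lies in the forced-zero window, giving $b_{\bar m} = q-1 > u-1$. Only $\ell = \bar m - 1$ and $\ell = \bar m$ survive; this cyclic-interval bookkeeping is the main obstacle of the proof.

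For $\ell = \bar m - 1$ the free digits $(a_0, a_{\bar m - 1}, a_{\bar m})$ range over $[q-u,q-1] \times [0,q-1] \times [0,u-1]$, and restricting to $a_{\bar m-1} \le q-2$ identifies $j_1$ with an element of $\mathcal J_O$ under $(l_0, l_{\bar m - 1}, l_{\bar m}) = (a_0, a_{\bar m - 1}, a_{\bar m})$. The explicit formula for the partner then confirms $1 \le j_2 \le uq^{\bar m}$ and hence $C_{\bar n + l} \subseteq J_{(q,n,u)}^+ \cap J_{(q,n,u)}^-$ for every $l \in \mathcal J_O$. The main case of $\ell = \bar m$ analogously gives $j_2 \in \mathcal J_O$ and yields $C_{\bar n - l} \subseteq J_{(q,n,u)}^+ \cap J_{(q,n,u)}^-$. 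The remaining sub-cases $a_{\bar m - 1} = q - 1$ (for $\ell = \bar m - 1$) and $a_0 = 0$ (for $\ell = \bar m$) produce $j_1$'s whose cyclotomic coset $C_{j_1}$ or $C_{-j_1}$ contains an element of $\mathcal J_O$ via a single cyclic shift of $j_1$ or $n - j_1$, so they contribute no cosets outside $\{C_{\bar n \pm l} : l \in \mathcal J_O\}$; this yields the reverse inclusion.

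Finally, I would establish disjointness and the count. Because the digit pattern of every $l \in \mathcal J_O$ carries the value $q - 1$ at positions $1,\ldots,\bar m - 2$, it matches neither the $J_1$-pattern nor the $J_2$-pattern of Proposition~\ref{PO}, so every $l \in \mathcal J_O$ is a coset leader; consequently the cosets $\{C_{\bar n + l}\}_{l \in \mathcal J_O}$ are pairwise distinct, and similarly for $\{C_{\bar n - l}\}_{l \in \mathcal J_O}$. A mixed coincidence $C_{\bar n + l} = C_{\bar n - l'}$ would require a cyclic shift of $n - l'$ to lie in $\mathcal J_O$; checking each of the three admissible shifts shows this is impossible because the nonzero digits $q - 1 - l'_0$ and $q - 1 - l'_{\bar m - 1}$ of $n - l'$ end up at positions where $\mathcal J_O$ demands a zero. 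Combining $|C_{\bar n \pm l}| = m$ from Corollary~\ref{CorollaryQO} with $|\mathcal J_O| = u^2(q-1)$ yields
\[
|J_{(q,n,u)}^+ \cap J_{(q,n,u)}^-| = 2 u^2 (q-1) m.
\]
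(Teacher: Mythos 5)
Your reduction of $C_{\bar n+i}=C_{\bar n-j}$ to $i+jq^\ell\equiv 0\pmod n$ and the subsequent digit bookkeeping over the shift $\ell$ is essentially the paper's argument (the paper computes $i+jq^\ell \bmod n$ explicitly over four ranges of $\ell$; you repackage this via the complementation $b_k=q-1-a_{(k+\ell)\bmod m}$). However, there is a concrete error at the crux of your case elimination. Two cyclic intervals of length $\bar m-2$ in $\Z_m$ with $m=2\bar m-1$ are disjoint for exactly $m-2(\bar m-2)+1=4$ relative shifts, not three: $S_\ell\cap\{\bar m+1,\ldots,m-1\}=\emptyset$ holds for all $\ell\in\{\bar m-2,\bar m-1,\bar m,\bar m+1\}$. (For $m=5$, $\bar m=3$, the forced-zero window is the single position $\{4\}$ and $S_\ell=\{(4+\ell)\bmod 5\}$ avoids it for every $\ell\in\{1,2,3,4\}$.) Your argument therefore never eliminates $\ell=\bar m+1$. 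That shift can be eliminated, but only by invoking the constraint $a_{\bar m}\le u-1$, which you state and then never use: since $S_{\bar m+1}=\{3,\ldots,\bar m\}$ contains the position $\bar m$, the requirement $j_2\le uq^{\bar m}$ forces $a_{\bar m}=q-1$, contradicting $a_{\bar m}\le u-1\le q-2$. (This is exactly the role of the paper's Case 4 observation that the coefficient of $q^{\bar m}$ in $i+jq^\ell\bmod n$ stays below $q-1$.) The clean repair is to run the interval argument with the full length-$(\bar m-1)$ windows $\{\bar m,\ldots,m-1\}$ and its translate by $\ell$: every digit in the first window is forced to be at most $u-1<q-1$ while every digit in the second is forced to be at least $q-u>0$, so the two windows must be disjoint, and two cyclic intervals of length $\bar m-1$ in $\Z_{2\bar m-1}$ are disjoint for exactly the two shifts $\ell\in\{\bar m-1,\bar m\}$.

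The remainder of the proposal is sound and consistent with the paper: the identification of the surviving configurations at $\ell=\bar m-1$ and $\ell=\bar m$ with $\mathcal J_O$, the reduction of the boundary sub-cases $a_{\bar m-1}=q-1$ and $a_0=0$ (equivalently, $q\mid j_2$ or $q\mid j_1$) to cosets already present in the union, the observation that elements of $\mathcal J_O$ avoid $J_1\cup J_2$ and are hence coset leaders (which gives the disjointness), and the final count $2m\,|\mathcal J_O|=2u^2(q-1)m$ all check out. Only the elimination of $\ell=\bar m+1$ needs to be added.
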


\begin{proof}
We are going to find the integers $i$ and $j$ with $1 \le i \le uq^{\bar m}$ and $1 \le j \le uq^{\bar m}$ such that
$$
C_{\bar n +i}=C_{\bar n -j}.
$$
This is equivalent to
\begin{equation}\label{EIJ1} \bar n +i \equiv (\bar n-j)q^\ell \pmod n \text{  and } i+jq^\ell \equiv 0 \pmod n\end{equation} for
some $1 \le \ell \le m-1$.

By Proposition \ref{PRelation}, we can further assume that $q \nmid i$ and $q \nmid j$. Then we have the $q$-adic expansions
   $$i=i_{\bar m}q^{\bar m}+i_{\bar m-1}q^{\bar m-1}+\cdots+i_1q+i_0$$ and
$$j=j_{\bar m}q^{\bar m}+j_{\bar m-1}q^{\bar m-1}+\cdots+j_1q+j_0,$$
where $0 \le i_{\bar m}, j_{\bar m} \le u-1$, $1 \le i_0, j_0 \le q-1$, and $0 \le i_k, j_k \le q-1$ for all $k$ with $1 \le k \le \bar m-1$.

  \emph{Case 1:} When $1 \le \ell \le \bar m-2$, it is easy to check that $0 < i+jq^\ell < n$ by noticing
  that $j_{\bar m} \le u-1 <q-1$, so $i+jq^\ell \equiv 0 \pmod n$ does not hold.

   \emph{Case 2:} When $\ell=\bar m-1$, it can be verified that $i+jq^\ell \equiv \Delta \pmod n$, where
$$\Delta=j_{\bar m-1}q^{m-1}+\cdots+j_2q^{\bar m+1}+(j_1+i_{\bar m})q^{\bar m}+(j_0+i_{\bar m-1})q^{\bar m-1}+
i_{\bar m-2}q^{\bar m-2}+\cdots+i_1q+(i_0+j_{\bar m}).$$
It is clear that $0< \Delta < 2n$. It then follows from \eqref{EIJ1} that $\Delta=n$. Thus
$$j_{\bar m-1}=\cdots=j_2=j_1+i_{\bar m}=j_0+i_{\bar m-1}=i_{\bar m-2}=\cdots=i_1=i_0+j_{\bar m}=q-1.$$
Then $$i=i_{\bar m}q^{\bar m}+i_{\bar m-1}q^{\bar m-1}+(q-1)(q^{\bar m-2}+\cdots+q^2+q)+i_0,$$
where $$0 \le i_{\bar m} \le u-1, \ 0 \le i_{\bar m-1} \le q-2, \text{ and } q-u \le i_0 \le q-1.$$
Hence, there exists exactly one integer $j$ with $1 \le j \le uq^{\bar {m}}$, such that
$$
C_{\bar n+i}=C_{\bar n-j},
$$
if and only if $i$ has the above form.
Therefore,
$$J_{(q,n,u)}^+ \cap J_{(q,n,u)}^- \supset \bigcup_{l \in \mathcal J_O} C_{\bar n+l}.$$

   \emph{Case 3:} When $\ell=\bar m$, we have $i+jq^\ell \equiv \Delta \pmod n$, where
$$\Delta=j_{\bar m-2}q^{m-1}+\cdots+j_1q^{\bar m+1}+(j_0+i_{\bar m})q^{\bar m}+i_{\bar m-1}q^{\bar m-1}+
\cdots+i_2q^2+(j_{\bar m}+i_1)q+(j_{\bar m-1}+i_0).$$
Notice that $0< \Delta < 2n$. It then follows from \eqref{EIJ1} that $\Delta=n$. Thus
$$j_{\bar m-2}=\cdots=j_1=j_0+i_{\bar m}=i_{\bar m-1}=\cdots=i_2=j_{\bar m}+i_1=j_{\bar m-1}+i_0=q-1.$$
Then $$j=j_{\bar m}q^{\bar m}+j_{\bar m-1}q^{\bar m-1}+(q-1)(q^{\bar m-2}+\cdots+q^2+q)+j_0,$$
where $$0 \le j_{\bar m} \le u-1, \ 0 \le j_{\bar m-1} \le q-2, \text{ and } q-u \le j_0 \le q-1.$$
Hence, there exists exactly one integer $i$ with $1 \le i \le uq^{\bar {m}}$, such that
$$
C_{\bar n +i}=C_{\bar n -j},
$$
if and only if $j$ has the above form. Therefore,
$$J_{(q,n,u)}^+ \cap J_{(q,n,u)}^- \supset \bigcup_{l \in \mathcal J_O} C_{\bar n-l}.$$

   \emph{Case 4:} When $\bar m+1 \le \ell \le m-1$, denote $\ell=\bar m+\epsilon$, where $1 \le \epsilon \le \bar m-2$. Then
  $i+jq^\ell \equiv \Delta \pmod n$, where
  \begin{eqnarray*}\Delta &=&j_{\bar m-\epsilon-2}q^{m-1}+\cdots+j_0q^{\bar m+\epsilon}+i_{\bar m}q^{\bar m}+i_{\bar m-1}q^{\bar m-1}+\cdots+i_{\epsilon+2}q^{\epsilon+2}\\ &+&
 (i_{\epsilon+1}+j_{\bar m})q^{\epsilon+1}+\cdots+(i_1+j_{\bar m-\epsilon})q+(i_0+j_{\bar m-\epsilon-1}). \end{eqnarray*}
   It is easy to see that the coefficient of $q^{\bar m}$ in the $q$-adic expansion of $\Delta$ is less than $q-1$.
 Thus we have $0<\Delta <n$, which means that \eqref{EIJ1} is impossible.

Note that Cases 1, 2, 3, and 4 contain all possible pairs $(i,j)$, such that $1 \le i,j \le uq^{\bar m}$ and $C_{\bar n+i}=C_{\bar n-j}$. Thus, we have $J_{(q,n,u)}^+ \cap J_{(q,n,u)}^-=\bigcup_{l \in \mathcal J_O}(C_{\bar n+l} \cup C_{\bar n-l})$. Next, we are going to show that this union is disjoint. By Proposition~\ref{PO}, each $l \in \mathcal J_O$ is a coset leader and $|C_{\bar n+l}|=|C_{\bar n-l}|=m$. Hence, by Proposition~\ref{PRelation}, we have $C_{\bar n+l} \ne C_{\bar n+l^{\prime}}$ and $C_{\bar n-l} \ne C_{\bar n-l^{\prime}}$ for distinct $l,l^{\prime} \in \mathcal J_O$. In addition, suppose $C_{\bar n+l} = C_{\bar n-l^{\prime}}$. If $l \in \mathcal J_O$, by the arguments in Case 2, we have
$$
l^{\prime}=l_{\bar m}^{\prime}q^{\bar m}+(q-1)(q^{\bar m-1}+\cdots+q^2)+l_1^{\prime}q+l_0^{\prime},
$$
where
$$
0 \le l_{\bar m}^{\prime} \le u-1, q-u \le l_1^{\prime} \le q-1, 1 \le l_0^{\prime} \le q-1.
$$
Hence, $l^{\prime} \not\in \mathcal J_O$. Similarly, if $l^{\prime} \in \mathcal J_O$, by the arguments in Case 3, we must have $l \not\in \mathcal J_O$. Therefore, the union $\bigcup_{l \in \mathcal J_O}(C_{\bar n+l} \cup C_{\bar n-l})$ is disjoint and $|J_{(q,n,u)}^+ \cap J_{(q,n,u)}^-|=2m|\mathcal J_O|=2u^2(q-1)m$.
\end{proof}

\begin{remark}\label{rem-1}
Let $m \ge 5$ be an odd integer. Let $1 \le i,j \le uq^{\bar m}$ be two integers with $q$-adic expansions
$$
i=i_{\bar m}q^{\bar m}+i_{\bar m-1}q^{\bar m-1}+\cdots+i_1q+i_0
$$
and
$$
j=j_{\bar m}q^{\bar m}+j_{\bar m-1}q^{\bar m-1}+\cdots+j_1q+j_0.
$$
The proof of Proposition~\ref{PI1} shows that there exists an unique $1 \le j \le uq^{\bar m}$, such that
$$
i+jq^\ell \equiv 0 \pmod n
$$
for some $1 \le \ell \le m-1$, if and only if one of the following holds:
\begin{itemize}
\item $i \in \mathcal J_O, j \not\in J_1 \cup J_2 \cup \mathcal J_O$
with
$$
j_{\bar m-1}=\cdots=j_2=j_1+i_{\bar m}=j_0+i_{\bar m-1}=i_{\bar m-2}=\cdots=i_1=i_0+j_{\bar m}=q-1.
$$

\item $i \not\in J_1 \cup J_2 \cup \mathcal J_O, j \in \mathcal J_O$
with
$$
j_{\bar m-2}=\cdots=j_1=j_0+i_{\bar m}=i_{\bar m-1}=\cdots=i_2=j_{\bar m}+i_1=j_{\bar m-1}+i_0=q-1.
$$
\end{itemize}
We remark that this result does not depend on the parity of $q$ and $n$. Namely, the above result is true when $q$ is odd, $n$ is even or $q$ is even, $n$ is odd.
\end{remark}

The following result gives the dimension of the LCD code $\codeeven$ when $m \ge 5$ is odd and $\delta=uq^{\frac {m+1} 2}+1$, where $1 \le u \le q-1$.

\begin{theorem} \label{theoremLCDQOMO} Let $m \ge 5$ be an odd integer, $q$ odd,  and $\delta=uq^{\frac {m+1} 2}+1$, where $1 \le u \le q-1$.
Then $\codeeven$  has length $n$, dimension
$$k=q^m-2-2(uq^{\frac {m-1} 2}-2u^2+u)(q-1)m,$$ and minimum distance $d \ge 2\delta$.
In addition, the generator polynomial is given by \begin{equation} \label{GQOMO} g(x)=(x+1)\prod_{\substack {1 \le l \le uq^{\frac {m+1} 2} \\ q \nmid l, l \not \in J_1 \cup J_2\cup \mathcal J_O} }m_{\frac n 2+l}(x)m_{\frac n 2-l}(x),\end{equation}
where $J_1$, $J_2$ are defined in Proposition~\ref{PO} and $\mathcal J_O$ is defined in Proposition~\ref{PI1}.
\end{theorem}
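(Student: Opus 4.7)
The plan is to read off the generator polynomial from~\eqref{LCDGPoly} and then count its roots using the cyclotomic coset machinery already developed in Propositions~\ref{PRelation}, \ref{PO}, \ref{PI1} and Corollary~\ref{CorollaryQO}.

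First, the minimum distance bound is essentially free: equation~\eqref{GeneratorPoly} identifies $g(x)$ with the BCH generator polynomial $g_{(q,n,2\delta,\frac{n}{2}-(\delta-1))}(x)$, whose defining set of zeros contains the $2\delta-1$ consecutive powers $\beta^{\frac{n}{2}-(\delta-1)},\ldots,\beta^{\frac{n}{2}+(\delta-1)}$, so the BCH bound immediately gives $d\ge 2\delta$.

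Next, I would compute the dimension. Because $q$ and $m$ are odd, $n=q^m-1$ is even and $\bar n=n/2$. The roots of $g(x)$ are the disjoint union of the singleton $\{\tfrac{n}{2}\}$ and the set $J^+_{(q,n,u)}\cup J^-_{(q,n,u)}$. The singleton claim uses $q\cdot\tfrac{n}{2}\equiv\tfrac{n}{2}\pmod n$ (valid because $q-1$ is even), and disjointness from $J^{\pm}_{(q,n,u)}$ follows from Proposition~\ref{PRelation}(3), since $C_{n/2}=C_{n/2+j}$ would force $C_0=C_j$, which fails for $1\le j\le uq^{\bar m}$. Corollary~\ref{CorollaryQO}(1) provides $|J^+_{(q,n,u)}|=|J^-_{(q,n,u)}|=(uq^{(m-1)/2}-u^2+u)(q-1)m$, and Proposition~\ref{PI1} provides $|J^+_{(q,n,u)}\cap J^-_{(q,n,u)}|=2u^2(q-1)m$. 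Inclusion--exclusion then yields
$$\deg g(x) = 1 + 2\bigl(uq^{(m-1)/2} - 2u^2 + u\bigr)(q-1)m,$$
and subtracting from $n=q^m-1$ gives the claimed formula for $k$.

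For the explicit factorization~\eqref{GQOMO}, I would enumerate the distinct cyclotomic cosets contributing to $g(x)$. The singleton $\{n/2\}$ contributes $x+1 = m_{n/2}(x)$. By Proposition~\ref{PRelation}(3)--(4), labelling cosets in $J^{\pm}_{(q,n,u)}$ reduces to labelling cosets $C_l$ with $1\le l\le uq^{\bar m}$ and $q\nmid l$, and by Proposition~\ref{PO}(2) these are parameterized by $l\notin J_1\cup J_2$. To eliminate duplication between the $+$ and $-$ sides I would invoke Remark~\ref{rem-1}: a coincidence $C_{n/2+l_1}=C_{n/2-l_2}$ between two such coset leaders forces $l_1\in\mathcal J_O$ or $l_2\in\mathcal J_O$. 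Hence restricting to $L=\{l: 1\le l\le uq^{\bar m},\ q\nmid l,\ l\notin J_1\cup J_2\cup\mathcal J_O\}$ and forming the product $\prod_{l\in L}m_{n/2+l}(x)\,m_{n/2-l}(x)$ picks up each distinct minimal polynomial of $g(x)$ (other than $x+1$) exactly once; cosets indexed by $l\in\mathcal J_O$ appear via their paired counterparts in $L$ guaranteed by Remark~\ref{rem-1}.

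The main obstacle is this last no-double-counting step. Everything else is a straightforward assembly of the preceding results, but verifying that removing $\mathcal J_O$ from $L$ is both necessary (to discard the $|\mathcal J_O|$ coincident pairs counted in Proposition~\ref{PI1}) and sufficient (no minimal polynomial is dropped) relies on the precise bidirectional statement of Remark~\ref{rem-1}, not merely on the cardinality in Proposition~\ref{PI1}. A final degree check that the product in~\eqref{GQOMO} has degree $2(uq^{(m-1)/2}-2u^2+u)(q-1)m$ confirms consistency with the dimension already derived.
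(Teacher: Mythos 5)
Your proposal is correct and follows essentially the same route as the paper's proof: the BCH bound for $d\ge 2\delta$, Theorem~\ref{THEOREMQO}/Corollary~\ref{CorollaryQO} for the two constituent coset families, and Remark~\ref{rem-1} together with Proposition~\ref{PI1} to control the overlap between the $C_{\frac n2+l}$ and $C_{\frac n2-l}$ cosets. You spell out the inclusion--exclusion, the singleton coset $C_{n/2}=\{n/2\}$ contributing $x+1$, and the bidirectional "no coset dropped, none repeated" check more explicitly than the paper does, but the underlying argument is the same.
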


\begin{proof}
Let $1 \le i,j \le uq^{\frac{m+1}{2}}$ be two integers satisfying $C_{\frac{n}{2}+i}=C_{\frac{n}{2}-j}$. By Remark~\ref{rem-1}, we must have either $i \in \mathcal J_O$, $j \not\in J_1 \cup J_2 \cup \mathcal J_O$ or $j \in \mathcal J_O$, $i \not\in J_1 \cup J_2 \cup \mathcal J_O$. Together with Theorem \ref{THEOREMQO}, we can see that the generator polynomial is given by \eqref{GQOMO}, and its degree is equal to $1+2(uq^{\frac {m-1} 2}-2u^2+u)(q-1)m$. Hence, the dimension follows easily. The minimum distance $d \ge 2\delta$ follows from the BCH bound.
\end{proof}

\begin{example}
When $(q, m, u)=(3, 7, 1),(3, 7, 2)$ in the above theorem, the code $\codeeven$ has parameters $[2186, 1457, d \ge 164]$, and $[2186, 841, d \ge 326]$, respectively.
\end{example}

\emph{B. Parameters of $\codeeven$ when $m$ is even}

To investigate the parameters of the LCD BCH code $\codeeven$ when $m \ge 2$ is even, we will need
the following conclusion.

\begin{proposition} \label{PI2}
Let $m \ge 2$ be an even number. Suppose
$$
\begin{cases}
  1 \le u \le \frac{q-1}{2} & \mbox{if $m=2$,} \\
  1 \le u \le q-1 & \mbox{if $m \ge 4$.}
\end{cases}
$$
Then we have
$$J_{(q,n,u)}^+ \cap J_{(q,n,u)}^-=\bigcup_{l \in {\mathcal J}_E} C_{\bar n-l},$$
where the union is disjoint and $${\mathcal J}_E=\{l_{\bar m}q^{\bar m}+(q-1)(q^{\bar m-1}+q^{\bar m-2}+\cdots+q)+l_0: 0 \le l_{\bar m} \le u-1 \text{ and } q-u \le l_0 \le q-1\}.$$
Moreover,
$$|J_{(q,n,u)}^+ \cap J_{(q,n,u)}^-|=u^2m.$$
\end{proposition}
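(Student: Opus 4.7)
My plan is to mirror the proof of Proposition \ref{PI1}. The task is to enumerate all pairs $(i,j)$ with $1 \le i, j \le uq^{\bar m}$ satisfying $C_{\bar n + i} = C_{\bar n - j}$, or equivalently
\[
i + jq^\ell \equiv 0 \pmod n \quad \text{for some } 1 \le \ell \le m-1.
\]
Using Proposition \ref{PRelation}(2), it suffices to consider $q \nmid ij$ and write $q$-adic expansions $i = \sum_{k=0}^{\bar m} i_k q^k$, $j = \sum_{k=0}^{\bar m} j_k q^k$ with $0 \le i_{\bar m}, j_{\bar m} \le u - 1$ and $1 \le i_0, j_0 \le q - 1$. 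The analysis splits into three cases on $\ell$.

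In Case 1 ($1 \le \ell \le \bar m - 1$), the range is vacuous when $m = 2$, and when $m \ge 4$ a direct bound
\[
i + jq^\ell \le (q-1)\bigl(q^{\bar m} + q^{m-1}\bigr) < q^m - 1 = n
\]
precludes any solution. In Case 3 ($\bar m + 1 \le \ell \le m - 1$, which requires $m \ge 4$), I write $\ell = \bar m + \epsilon$ and reduce $jq^\ell$ modulo $n$ using $q^m \equiv 1$. Matching the $q$-adic digits of $i + (jq^\ell \bmod n)$ against the all-$(q-1)$ digits of $n$: when $\epsilon \ge 2$, position $\bar m + 1$ is zero before carries while the carry from position $\bar m$ is at most $1 < q - 1$, a contradiction; when $\epsilon = 1$, the digit at position $\bar m$ reduces to $i_{\bar m}$ (plus at most a unit carry), which cannot reach $q - 1$ since $i_{\bar m} \le u - 1 \le q - 2$.

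The decisive Case 2 is $\ell = \bar m$. Since $q^{2\bar m} = q^m \equiv 1 \pmod n$, the reduction
\[
jq^{\bar m} \equiv j_{\bar m} + j_0 q^{\bar m} + j_1 q^{\bar m + 1} + \cdots + j_{\bar m - 1} q^{m - 1} \pmod n
\]
places $i + (jq^{\bar m} \bmod n)$ in $(0, 2n)$, so the congruence forces equality to $n$. Digit-by-digit comparison, together with the bound $i_0 + j_{\bar m},\, i_{\bar m} + j_0 \le 2q - 3 < 2q - 1$ ruling out carries, extracts the system
\[
i_0 + j_{\bar m} = q - 1, \quad i_{\bar m} + j_0 = q - 1, \quad i_1 = \cdots = i_{\bar m - 1} = q - 1, \quad j_1 = \cdots = j_{\bar m - 1} = q - 1.
\]
Solutions are parametrized by $(i_{\bar m}, j_{\bar m}) \in \{0, \ldots, u-1\}^2$, giving $u^2$ pairs, and both the $i$-values and $j$-values coincide exactly with $\mathcal J_E$.

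It remains to show the union $\bigcup_{l \in \mathcal J_E} C_{\bar n - l}$ is disjoint with every summand of size $m$. The hypothesis on $u$ is tailored for this: every $l \in \mathcal J_E$ satisfies $q \nmid l$ (since $l_0 \ge q - u \ge 1$), lies outside the exceptional set $J$ of Proposition \ref{PE} (the middle digits $q - 1$ handle $m \ge 4$, while $u \le (q-1)/2$ forces $l_0 \ge (q+1)/2 > (q-3)/2 \ge l_{\bar m}$ when $m = 2$), and differs from every $v(q^{\bar m} + 1)$ by the same digit-pattern considerations. Hence $l$ is a coset leader with $|C_l| = m$, and Proposition \ref{PRelation}(4) converts distinctness of $l$ into distinctness of $C_{\bar n - l}$, yielding $|J_{(q,n,u)}^+ \cap J_{(q,n,u)}^-| = u^2 m$. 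The main technical obstacle will be executing the Case 2 carry analysis cleanly enough to cover $m = 2$ (where the middle-digit conditions are vacuous) and $m \ge 4$ uniformly, and in particular recognizing that the term $j_{\bar m} q^m$ wraps around under $q^m \equiv 1$ to land at position $0$, which is what makes the pair $(i_0, j_{\bar m})$ symmetric to $(i_{\bar m}, j_0)$.
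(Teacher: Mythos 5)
Your proposal follows essentially the same route as the paper: the same reduction to $i+jq^{\ell}\equiv 0 \pmod n$, the same three-way case split on $\ell$ with the digit/carry analysis forcing $\Delta=n$ in the case $\ell=\bar m$, and the same appeal to Propositions \ref{PE} and \ref{PRelation} to get disjointness and $|J_{(q,n,u)}^+ \cap J_{(q,n,u)}^-|=u^2m$. One small slip: in Case 3 with $\epsilon=1$ the digit at position $\bar m$ is $i_{\bar m}$ plus a possible unit carry, which \emph{can} reach $q-1$ when $i_{\bar m}=q-2$; this is harmless because a carry into position $\bar m$ would force the digit at position $\bar m-1$ to be $0\ne q-1$ (or, more directly, one bounds $\Delta<n$ by magnitude, which is what the paper's argument amounts to).
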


\begin{proof}
We are going to find all the integers $i$ and $j$ with $1 \le i \le uq^{\bar m}$ and $1 \le j \le uq^{\bar m}$ such that
$$
C_{\bar n +i}=C_{\bar n -j}.
$$
This is equivalent to
\begin{equation*} \bar n +i \equiv (\bar n-j)q^\ell \pmod n \text{  and } i+jq^\ell \equiv 0 \pmod n\end{equation*} for
some $1 \le \ell \le m-1$.

By Proposition \ref{PRelation}, we can assume that $q \nmid i$ and $q \nmid j$.
For $i, j \le uq^{\bar m}$, let $$i=i_{\bar m}q^{\bar m}+i_{\bar m-1}q^{\bar m-1}+\cdots+i_1q+i_0$$ and
$$j=j_{\bar m}q^{\bar m}+j_{\bar m-1}q^{\bar m-1}+\cdots+j_1q+j_0,$$
where $0 \le i_{\bar m}, j_{\bar m}\le u-1$, $1 \le i_0, j_0 \le q-1$, and $0 \le i_k, j_k \le q-1$ for all $1 \le k \le \bar m-1$.

\emph{Case 1:} When $1 \le \ell \le \bar m-1$, we can easily see that $0 < i+jq^\ell < n$ as $j_{\bar m} \le u-1 < q-1$, which implies
  that $i+jq^\ell \equiv 0 \pmod n$ does not hold.

\emph{Case 2:} When $\ell=\bar m$, it can be verified that
  \begin{eqnarray*}
  i+jq^\ell \equiv \Delta \pmod n,
  \end{eqnarray*}
  where $$\Delta=j_{\bar m-1}q^{m-1}+\cdots+j_1q^{\bar m+1}+(j_0+i_{\bar m})q^{\bar m}+i_{\bar m-1}q^{\bar m-1}+\cdots+i_1q+(i_0+j_{\bar m}).$$
  Notice that $0 <\Delta < 2n$. If $\Delta \equiv 0 \pmod n$, then $\Delta=n$ and
$$j_{\bar m-1}=\cdots=j_1=j_0+i_{\bar m}=i_{\bar m-1}=\cdots=i_1=i_0+j_{\bar m}=q-1.$$
Thus $$j=j_{\bar m}q^{\bar m}+(q-1)(q^{\bar m-1}+q^{\bar m-2}+\cdots+q)+j_0,$$ where
$$0 \le j_{\bar m} \le u-1 \text{ and } q-u \le j_0 \le q-1.$$
Hence, there exists exactly one integer $i$ with $1 \le i \le uq^{\bar {m}}$, such that
$$
C_{\bar n +i}=C_{\bar n -j},
$$
if and only if $j$ has the above form. Therefore,
$$J_{(q,n,u)}^+ \cap J_{(q,n,u)}^- \supset \bigcup_{l \in \mathcal J_E} C_{\bar n-l}.$$

\emph{Case 3:} When $\bar m+1 \le \ell \le m-1$, let $\ell=\bar m+\epsilon$, where $1 \le \epsilon \le \bar m-1$.
Then one can check that $i+jq^\ell \equiv \Delta \pmod n$, where
$$\Delta=j_{\bar m-\epsilon-1}q^{m-1}+\cdots+j_0q^{\bar m+\epsilon}+i_{\bar m}q^{\bar m}+\cdots+
i_{\epsilon+1}q^{\epsilon+1}+(i_\epsilon+j_{\bar m})q^{\epsilon}+\cdots+(i_0+j_{\bar m-\epsilon}).$$
Note that the coefficient of $q^{\bar m}$ in the $q$-adic expansion of $\Delta$ is equal to $i_{\bar m} \le u-1 < q-1$. Then $0 < \Delta < n$, which means that
$$(i+jq^\ell) \bmod n=\Delta \not \equiv 0 \pmod n.$$

Note that Cases 1, 2 and 3 deal with all possible pairs $(i,j)$, such that $1 \le i,j \le uq^{\bar m}$ and $C_{\bar n+i}=C_{\bar n-j}$. Thus, $J_{(q,n,u)}^+ \cap J_{(q,n,u)}^-=\bigcup_{l \in \mathcal J_E} C_{\bar n-l}$. By Proposition~\ref{PE}, each $l \in \mathcal J_E$ is a coset leader and $|C_{\bar n-l}|=m$. In particular, when $m=2$, we need $1 \le u \le \frac{q-1}{2}$ to ensure that each $l \in \mathcal J_E$ is a coset leader and $|C_{\bar n-l}|=m$. Hence, by Proposition~\ref{PRelation}, we have $C_{\bar n-l} \ne C_{\bar n-l^{\prime}}$ for distinct $l,l^{\prime} \in \mathcal J_E$.  Therefore, the union $\bigcup_{l \in {\mathcal J}_E} C_{\bar n-l}$ is disjoint and $|J_{(q,n,u)}^+ \cap J_{(q,n,u)}^-|=m|\mathcal J_E|=u^2m$.
\end{proof}

\begin{remark}\label{rem-2}
Let $m \ge 2$ be an even number. Suppose
$$
\begin{cases}
  1 \le u \le \frac{q-1}{2} & \mbox{if $m=2$,} \\
  1 \le u \le q-1 & \mbox{if $m \ge 4$.}
\end{cases}
$$
Let $1 \le i,j \le uq^{\bar m}$ be two integers with $q$-adic expansions
$$
i=i_{\bar m}q^{\bar m}+i_{\bar m-1}q^{\bar m-1}+\cdots+i_1q+i_0
$$
and
$$
j=j_{\bar m}q^{\bar m}+j_{\bar m-1}q^{\bar m-1}+\cdots+j_1q+j_0.
$$
The proof of Proposition~\ref{PI2} shows that for $1 \le j \le uq^{\bar m}$, there exists a unique $1 \le i \le uq^{\bar m}$, such that
$$
i+jq^\ell \equiv 0 \pmod n
$$
for some $1 \le \ell \le m-1$, if and only if $i, j \in \mathcal J_E$ with
$$
j_{\bar m-1}=\cdots=j_1=j_0+i_{\bar m}=i_{\bar m-1}=\cdots=i_1=i_0+j_{\bar m}=q-1.
$$
We remark that this result does not depend on the parity of $q$ and $n$. Namely, the above result is true when $q$ is odd, $n$ is even or $q$ is even, $n$ is odd.
\end{remark}

\begin{theorem} \label{TheoremCQOME} Let $q$ be odd and $m \ge 2$ be even. Let $\delta=uq^{\frac m 2}+1$, where
$$
\begin{cases}
  1 \le u \le \frac{q-1}{2} & \mbox{if $m=2$,} \\
  1 \le u \le q-1 & \mbox{if $m \ge 4$.}
\end{cases}
$$
Then $\codeeven$ has length $n$, dimension $$k=q^m-2-2uq^{\frac m 2-1}(q-1)m+(2u^2-2u+1)m,$$ and minimum distance $d \ge 2\delta$.
In addition, the generator polynomial is given by $$g(x)=(x+1)\prod_{\substack {1 \le l \le uq^{\frac m 2} \\ q \nmid l, l \not \in J}}m_{\frac n 2+l}(x)\prod_{\substack {1 \le l \le uq^{\frac m 2} \\ q \nmid l, l \not \in J \cup \mathcal J_E}}m_{\frac n 2-l}(x),$$
where $J$ is defined in Proposition~\ref{PE} and $\mathcal J_E$ is defined in Proposition~\ref{PI2}, respectively.
\end{theorem}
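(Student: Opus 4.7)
The plan is to mimic the proof of Theorem \ref{theoremLCDQOMO}, substituting the even-$m$ counting results (Corollary \ref{CorollaryQO}(2), Proposition \ref{PE}, Proposition \ref{PI2}, and Remark \ref{rem-2}) in place of their odd-$m$ analogues. First I would unpack
$$g(x) = \geven = \lcm\!\bigl(m_{\bar n-(\delta-1)}(x),\,\ldots,\,m_{\bar n+(\delta-1)}(x)\bigr).$$
Since $q$ is odd and $m$ is even, $n=q^m-1$ is even and $\bar n=n/2$, so $\beta^{\bar n}=-1\in\gf(q)$. Hence $C_{\bar n}=\{\bar n\}$ is a singleton cyclotomic coset and contributes the factor $(x+1)$ to $g(x)$. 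The remaining indices split into $\bar n+j$ and $\bar n-j$ for $1\le j\le uq^{m/2}=\delta-1$.

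The second step is to count the two sides separately. Corollary \ref{CorollaryQO}(2), which is the consequence of Proposition \ref{PE} for shifted cosets, tells me that the distinct cosets among $\{C_{\bar n+j}\}_{1\le j\le uq^{m/2}}$ are exactly those indexed by $j$ with $q\nmid j$ and $j\notin J$, and that they cover
$$|J^+_{(q,n,u)}| = uq^{m/2-1}(q-1)m - \tfrac{(u-1)^2}{2}m$$
residues modulo $n$; the same formula holds for $|J^-_{(q,n,u)}|$. The main bookkeeping step — and the step most prone to error — is the overlap computation. By Proposition \ref{PI2}, $|J^+_{(q,n,u)}\cap J^-_{(q,n,u)}|=u^2m$, and Remark \ref{rem-2} pinpoints, for each $l\in\mathcal J_E$, the unique $i\in\mathcal J_E$ with $C_{\bar n+i}=C_{\bar n-l}$. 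Excluding $\mathcal J_E$ from only the $-$ side product therefore removes each coincidence exactly once, yielding precisely the factorisation stated in the theorem.

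Putting the three contributions together,
$$\deg g = 1 + 2\!\left(uq^{m/2-1}(q-1)m - \tfrac{(u-1)^2}{2}m\right) - u^2 m = 1 + 2uq^{m/2-1}(q-1)m - (2u^2-2u+1)m,$$
and subtracting from $n=q^m-1$ gives the claimed dimension. The bound $d\ge 2\delta$ follows immediately from the BCH bound, since the defining set of $g(x)$ contains the $2\delta-1$ consecutive residues $\bar n-(\delta-1),\ldots,\bar n+(\delta-1)$.

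The subtlety to watch for is that Corollary \ref{CorollaryQO}(2) flags the exceptional cosets $C_{\bar n\pm v(q^{m/2}+1)}$ of size $m/2$ for $1\le v\le u-1$, while Proposition \ref{PI2} involves only cosets of full size $m$ (all elements of $\mathcal J_E$ have $q$-adic digits $q-1$ in every middle position, so $\mathcal J_E$ is disjoint from $\{v(q^{m/2}+1)\}$ whenever $\bar m\ge 2$, and the residual $m=2$ case is absorbed by the hypothesis $u\le(q-1)/2$). Once this disjointness is verified, the count is additive and the arithmetic is routine.
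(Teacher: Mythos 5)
Your argument is correct and takes essentially the same route as the paper's proof: the factor $(x+1)$ from the singleton coset $C_{n/2}$, the per-side counts from Corollary~\ref{CorollaryQO}(2) (equivalently Theorem~\ref{THEOREMQO}), the overlap $|J^+_{(q,n,u)}\cap J^-_{(q,n,u)}|=u^2m$ from Proposition~\ref{PI2} together with Remark~\ref{rem-2}, and inclusion--exclusion, with the BCH bound giving $d\ge 2\delta$. The only detail the paper states that you leave implicit is $J\cap\mathcal J_E=\emptyset$ (needed so that each $l\in\mathcal J_E$ genuinely indexes a coset removed from the minus-side product), but your disjointness discussion covers the same concern in substance.
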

\begin{proof}
By Remark~\ref{rem-2}, if for $1 \le i,j \le u\delta^{\frac{m}{2}}$, $C_{\bar n+i}=C_{\bar n-j}$, then $i,j \in \mathcal J_E$. Note that $J \cap \mathcal J_E = \emptyset$. The dimension and the generator polynomial follow from Theorem \ref{THEOREMQO} and Proposition \ref{PI2}. The minimum distance $d \ge 2\delta$ follows from the BCH bound.
\end{proof}

\begin{example}
When $(q, m, u)=(5, 2, 1)$ in the above theorem, the code $\C_{5,24,12,7}$ has parameters $[24, 9, 12]$, which are the best parameters for linear codes according to the Database.
\end{example}

\begin{corollary}
Let $u=1$ and $\delta=q^{\frac m 2}+1$, where $q \equiv 3 \pmod 4$ and $m \equiv 2 \pmod 4$. Then the true minimum distance of the code $\codeeven$ presented in Theorem \ref{TheoremCQOME} is equal to $2\delta$.
\end{corollary}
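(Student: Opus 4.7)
The plan is to apply Corollary~\ref{corollarygener} directly with designed distance $2\delta$ and starting position $b=\frac{n}{2}-\delta+1$. Since $\codeeven=\mathcal{C}_{(q,n,2\delta,\frac{n}{2}-\delta+1)}$ and Theorem~\ref{TheoremCQOME} already supplies the BCH lower bound $d\geq 2\delta$, it suffices to exhibit a codeword of Hamming weight exactly $2\delta$. Corollary~\ref{corollarygener} manufactures such a codeword provided $2\delta\mid\gcd(n,b-1)=\gcd\bigl(n,\tfrac{n}{2}-\delta\bigr)$, so the entire proof reduces to verifying these two divisibilities.

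First I would check $2\delta\mid n$. Since $m$ is even, $n=q^m-1=(q^{m/2}-1)(q^{m/2}+1)=\delta(q^{m/2}-1)$, and the factor $q^{m/2}-1$ is even because $q$ is odd, so $2\delta\mid n$. Next I would check $2\delta\mid(b-1)$. Writing
\[
b-1=\frac{n}{2}-\delta=\delta\left(\frac{q^{m/2}-1}{2}-1\right)=\delta\cdot\frac{q^{m/2}-3}{2},
\]
the required divisibility reduces to showing that $\frac{q^{m/2}-3}{2}$ is even, equivalently $q^{m/2}\equiv 3\pmod 4$. This is precisely where the hypotheses on $q$ and $m$ enter: from $m\equiv 2\pmod 4$ we have $m/2$ odd, and combined with $q\equiv 3\pmod 4$ this yields $q^{m/2}\equiv 3^{m/2}\equiv 3\pmod 4$, as needed.

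Combining the two divisibilities gives $2\delta\mid\gcd(n,b-1)$, and Corollary~\ref{corollarygener} then produces a codeword of weight $2\delta$, forcing $d=2\delta$ by the BCH bound already established. The argument is essentially a parity bookkeeping; the only subtlety is that the numerical hypotheses $q\equiv 3\pmod 4$ and $m\equiv 2\pmod 4$ are calibrated exactly so that $q^{m/2}\equiv 3\pmod 4$, which is precisely what makes $(q^{m/2}-3)/2$ even. Under any other combination of residues the second divisibility fails, so there is no meaningful obstacle beyond recognizing that the hypotheses are tailored for this single congruence.
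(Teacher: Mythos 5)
Your proposal is correct and follows exactly the same route as the paper: both reduce the claim to the divisibility $2\delta\mid\gcd(n,b-1)$ with $b=\frac{n}{2}-\delta+1$ and then invoke Corollary~\ref{corollarygener} together with the BCH lower bound. The paper merely asserts that this divisibility is "easy to check," whereas you have supplied the explicit parity computation showing $q^{m/2}\equiv 3\pmod 4$; the substance is identical.
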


\begin{proof}
Note that $b=\frac n 2-\delta+1$. It is easy to check that $2 \delta \mid \gcd(n, b-1)$ in this case. The desired result then
follows from Corollary \ref{corollarygener}.
\end{proof}

\begin{example}
When $(q, m, u)=(7, 2, 1)$ in the above corollary, the code $\C_{(7,48,16,17)}$ has parameters $[48, 25, 16]$, which are the best parameters for linear codes according to the Database.
\end{example}

\emph{C. Parameters of $\codeeven$ with designed distance $q^t-1$, where $1 \le t \le \bar m$}

The dimension of the LCD code $\codeeven$ is described in the following theorem when $\codeeven$ has designed distance $2\delta=q^t-1$ for an integer $t$ with $1 \le t \le \bar m$.

\begin{theorem} \label{theoremLCDDDQTMO} Let $q$ be odd and $m \ge 2$.
Suppose $\codeeven$ has designed distance $2\delta=q^t-1$, where $1 \le t \le \bar m$. Then $\codeeven$ has length
$n$, dimension
$$k=q^m-2-(q^t-q^{t-1}-2)m$$
and minimum distance $d \ge q^t-1$.
\end{theorem}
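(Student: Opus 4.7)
The inequality $d \ge 2\delta = q^t - 1$ is immediate from the BCH bound, so only the dimension requires work. Since $q$ is odd, $\delta = (q^t-1)/2$ is a positive integer, and the defining set of $\codeeven$ is
$$T = \bigcup_{s=-(\delta-1)}^{\delta-1} C_{\bar n + s} = C_{\bar n} \cup \bigcup_{s=1}^{\delta-1}\bigl(C_{\bar n+s} \cup C_{\bar n-s}\bigr),$$
with $k = n - |T|$. Because $q$ is odd, $q\bar n \equiv \bar n \pmod n$, forcing $|C_{\bar n}| = 1$. For $1 \le s \le \delta - 1$ the bound $s \le (q^t-3)/2 < q^{\bar m}$ (using $t \le \bar m$) places $s$ in the regime of Lemma~\ref{AKSYF}, giving $|C_s| = m$; Proposition~\ref{PRelation} then upgrades this to $|C_{\bar n \pm s}| = m$ and provides the identifications $C_{\bar n \pm s} = C_{\bar n \pm s'}$ iff $C_s = C_{s'}$. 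Hence each of the families $\{C_{\bar n+s}\}_{s=1}^{\delta-1}$ and $\{C_{\bar n-s}\}_{s=1}^{\delta-1}$ contains exactly $L := (\delta-1) - \lfloor(\delta-1)/q\rfloor$ distinct cosets, and the identity $(q^t-3)/(2q) - (q^{t-1}-1)/2 = (q-3)/(2q) \in [0,1)$ (valid for odd $q \ge 3$) yields $L = (q^t - q^{t-1} - 2)/2$.

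The core of the proof is the cross-disjointness $C_{\bar n+s} \ne C_{\bar n-s'}$ for all $s, s' \in \{1, \ldots, \delta-1\}$. Since $q$ is odd, this is equivalent to $s + s'q^\ell \not\equiv 0 \pmod n$ for every $0 \le \ell \le m-1$. The case $\ell = 0$ is excluded by $s + s' \le 2(\delta-1) < n$. For $1 \le \ell \le m-1$, Remark~\ref{rem-1} (odd $m$) and Remark~\ref{rem-2} (even $m$), applied with $u = 1$ since $s, s' < q^{\bar m}$, characterize all solutions via explicit $q$-adic expansions. A short inspection of those expansions shows that in any such solution one of $s, s'$ must be at least $q^{\bar m} - q + 1$ for odd $m$, or $q^{\bar m} - 1$ for even $m$; a one-line check (using $t \le \bar m$ and $q \ge 3$) confirms both lower bounds strictly exceed $\delta - 1 = (q^t-3)/2$, yielding the required contradiction.

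Combining, $|T| = 1 + 2mL = 1 + m(q^t - q^{t-1} - 2)$, whence $k = q^m - 2 - (q^t - q^{t-1} - 2)m$ as claimed. The principal obstacle is the cross-disjointness step: although Remarks~\ref{rem-1} and~\ref{rem-2} reduce it to a clean $q$-adic pattern inspection, one has to extract the minimum element of that pattern and verify the threshold inequality, with the small-$m$ boundary cases (notably $m = 3$ and the degenerate $\bar m = 1$ arising only when $m = 2$) requiring essentially the same analysis but not literally covered verbatim by the cited remarks.
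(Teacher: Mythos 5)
Your proposal is correct and follows essentially the same route as the paper's proof: decompose the defining set into $C_{\bar n}$ plus the two families $\{C_{\bar n+s}\}$ and $\{C_{\bar n-s}\}$, count the distinct cosets via Lemma~\ref{AKSYF} and Proposition~\ref{PRelation}, and rule out coincidences $C_{\bar n+s}=C_{\bar n-s'}$ using the pair characterizations of Remarks~\ref{rem-1} and~\ref{rem-2} (equivalently Propositions~\ref{PI1} and~\ref{PI2}). If anything, your explicit threshold step --- that in any solution pair one index is at least $q^{\bar m}-q+1$ (odd $m$) or $q^{\bar m}-1$ (even $m$), which exceeds $\delta-1=(q^t-3)/2$ --- makes precise an argument the paper leaves implicit, including the $m=3$ boundary case.
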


\begin{proof}
Set $\delta=\frac {q^t-1} 2$. Recall that the generator polynomial of the code $\codeeven$ is $g_{(q, n, 2\delta, \frac n 2-(\delta-1))}(x)$, we have
$$
\deg(g_{(q, n, 2\delta, \frac n 2-(\delta-1))}(x))=1+|\Big(\bigcup_{1 \le j \le \delta-1}C_{\bar n+j}\Big) \bigcap \Big(\bigcup_{1 \le j \le \delta-1}C_{\bar n-j}\Big)|
$$

It follows from Propositions \ref{PI1} and \ref{PI2} that
\begin{equation}\label{Eempty}\Big(\bigcup_{1 \le j \le \delta-1}C_{\bar n+j}\Big) \bigcap \Big(\bigcup_{1 \le j \le \delta-1}C_{\bar n-j}\Big)=\emptyset
\end{equation}
for each integer $m$ with $m \ge 2$ and $m \ne 3$. Using Remark~\ref{rem-1}, it can be checked that \eqref{Eempty} also holds for $m=3$. It then follows from Lemma~\ref{AKSYF} that 
$$
\deg(g_{(q, n, 2\delta, \frac n 2-(\delta-1))}(x))=(q^t-q^{t-1}-2)m+1 \text{ for } 1 \le t \le \bar m.
$$
Thus, the dimension is obtained. Moreover, $d \ge q^t-1$ follows from the BCH bound.
\end{proof}

\begin{example}
\begin{enumerate}
  \item When $(q, m, t)=(3, 5, 1), (3, 5, 2), (3, 5, 3)$ in the above theorem, the code $\codeeven$ has parameters $[242, 241, 2]$, $[242, 221, 8]$ and $[242, 161, 26]$, respectively. All of them are the best known parameters for linear codes according to the Database.
  \item When $(q, m, t)=(3, 4, 1), (3, 4, 3)$ in the above theorem, the code $\codeeven$ has parameters $[80, 79, 2]$ and $[80, 63, 8]$, respectively. Both of them are the best known parameters for linear codes according to the Database.
\end{enumerate}
\end{example}

In the above theorem, each triple $(q,m,t)$ satisfying $(q,m) \in \{(3,2),(3,3),(3,4),(3,5),(5,2),(7,2)\}$ and $1 \le t \le \bar m$ has been tested in numerical experiments and the experimental results suggest the following conjecture.

\begin{conj}
The code $\codeeven$ in Theorem \ref{theoremLCDDDQTMO} has true minimum distance $q^t-1$.
\end{conj}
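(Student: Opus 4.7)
The strategy is to exhibit an explicit codeword of Hamming weight exactly $q^t-1$ in $\codeeven$, which together with the BCH-bound inequality $d \ge q^t-1$ from Theorem~\ref{theoremLCDDDQTMO} would pin down the true minimum distance. The natural first candidate, guided by the subgroup structure of $\gf(q^m)^*$ and inspired by the construction used in Corollary~\ref{corollarygener}, is the indicator polynomial
$$c(x) = \frac{x^n - 1}{x^{n/(q^t-1)}-1} = \sum_{i=0}^{q^t-2} x^{in/(q^t-1)},$$
which is defined whenever $t\mid m$. This polynomial has Hamming weight exactly $q^t-1$, and one checks directly that $c(\beta^j)=0$ if and only if $(q^t-1) \nmid j$.

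The first step is to verify that $c(x)\in\codeeven$ whenever the designed defining interval $\{b,b+1,\ldots,b+2\delta-2\}$, with $b=\frac{n}{2}-\delta+1$ and $2\delta=q^t-1$, contains no multiple of $q^t-1$. A direct calculation gives $b-1=q^t(q^{m-t}-1)/2$, so the condition reduces to $(q^t-1)\mid q^t(q^{m-t}-1)/2$, and since $\gcd(q^t-1,q^t)=1$ it further reduces to $(q^t-1)\mid (q^{m-t}-1)/2$. When $t\mid m$ one has the integer identity $q^{m-t}-1=(q^t-1)(q^{m-2t}+q^{m-3t}+\cdots+1)$; a careful analysis of the $2$-adic valuation of this second factor (combined with $q$ odd) should settle a large family of cases directly via this $c(x)$.

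When the resulting divisibility fails, or when $t\nmid m$, the polynomial $c(x)$ fails to vanish on the unique multiple of $q^t-1$ lying in the designed interval, and one must modify it. The plan is to work in the extension field and consider, for $\gamma\in\gf(q^m)^*$, the multiplicative translates $c_\gamma(x)=\sum_{i=0}^{q^t-2} x^{e_i(\gamma)}$ whose supports correspond to $\gamma\cdot\gf(q^t)^*\subset\gf(q^m)^*$; each such polynomial has weight $q^t-1$ and evaluates at $\beta^j$ to $\gamma^j$ times a sum over the subgroup, so it vanishes precisely on the same set of indices as $c(x)$. The idea is then to take a $\gf(q)$-linear combination, obtained by applying a trace map $\tr_{\gf(q^m)/\gf(q)}$ to a carefully chosen MDS codeword of the Reed--Solomon code of designed distance $q^t-1$ over $\gf(q^m)$, so as to produce a $\gf(q)$-valued codeword that still has weight $q^t-1$ but vanishes on the full designed interval, including the problematic index.

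The main obstacle is the last step: simultaneously controlling the Hamming weight, the $\gf(q)$-rationality, and the vanishing at the bad designed index produced by the fact that the interval $[\frac{n}{2}-\delta+1,\frac{n}{2}+\delta-1]$ is centred at $\frac{n}{2}$, which is the unique spot where the interaction between the subgroup $\gf(q^t)^*$ and the interval becomes rigid. An alternative route that would sidestep some of this difficulty is to identify $\codeeven$ with a subfield subcode of a Reed--Solomon code of designed distance $q^t-1$ over $\gf(q^m)$ and argue, using the self-reciprocal symmetry of $g(x)$ from~\eqref{GeneratorPoly}, that among the $q^m-1$ Reed--Solomon minimum-weight codewords there must be at least one fixed by the Frobenius action and hence $\gf(q)$-rational. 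The numerical evidence across the parameter triples listed before the conjecture strongly suggests that such a construction is always available, but a uniform proof will require delicate control of both the Galois invariance and the weight of the descended codeword.
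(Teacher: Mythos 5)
First, note that the statement you are proving is stated in the paper only as a \emph{conjecture}, supported by numerical experiments for a handful of triples $(q,m,t)$; the paper offers no proof, so the only question is whether your argument actually closes it. It does not: there is a genuine gap. Your concrete construction, the polynomial $c(x)=(x^n-1)/(x^{n/(q^t-1)}-1)$, requires $(q^t-1)\mid n$, i.e.\ $t\mid m$, and even then it lies in $\codeeven$ only when the defining interval $[\frac n2-\delta+1,\frac n2+\delta-1]$ misses every multiple of $q^t-1$, which (as your own computation shows once it is finished) is equivalent to $2\mid(q^{m-2t}+q^{m-3t}+\cdots+q^t+1)$, i.e.\ to $m/t$ being \emph{odd}, since each of the $m/t-1$ summands is odd. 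So the explicit codeword covers only the cases $t\mid m$ with $m/t$ odd; it already fails for $(q,m,t)=(3,4,1)$ and $(3,4,3)$ and for every $t\nmid m$, which includes most of the parameters on which the paper's numerical evidence rests. (This part of your argument is essentially Corollary~\ref{cor-distance} of the paper, restated via Corollary~\ref{corollarygener}, and it is fine as far as it goes.)

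For all remaining cases your proposal is a research plan rather than a proof, and the obstruction is exactly where you locate it. The translates $c_\gamma$ supported on the cosets $\gamma\cdot\gf(q^t)^*$ all vanish on the \emph{same} set of exponents as $c(x)$, so each one individually fails to vanish at the unique multiple $j_0$ of $q^t-1$ inside the defining interval; a $\gf(q)$-linear combination $\sum_\gamma a_\gamma c_\gamma$ kills $\beta^{j_0}$ only if $\sum_\gamma a_\gamma\gamma^{j_0}=0$, and such a combination generically has support the union of several cosets, hence weight a multiple of $q^t-1$ larger than $q^t-1$; forcing the weight back down to $q^t-1$ requires massive cancellation between cosets that you do not exhibit. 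Likewise, the claim that among the minimum-weight codewords of the ambient Reed--Solomon code there is always a Frobenius-fixed (hence $\gf(q)$-rational) one of weight $q^t-1$ is asserted, not proved, and no counting or Galois-descent argument is supplied. Since you yourself flag both steps as unresolved, the proposal establishes $d=q^t-1$ only in the subfamily $t\mid m$, $m/t$ odd, and leaves the conjecture open in general.
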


\section{Parameters of LCD BCH code $\codeodd$ when $q$ is even}

In this section, we always assume that $q$ is even and $u$ is an integer with $1 \le u \le q-1$. The following proposition will be used later.

\begin{proposition} \label{PRQE} Let $q$ be even and $m \ge 2$. Then we have the following.
\begin{enumerate}
  \item $|C_{\bar n+i}|=|C_{2i+1}|$ and $|C_{\bar n-i}|=|C_{2i-1}|$.
  \item $C_{2i+1}=C_{2j+1}$ if and only if $C_{\bar n+i}=C_{\bar n+j}$.
  \item $C_{2i-1}=C_{2j-1}$ if and only if $C_{\bar n-i}=C_{\bar n-j}$.
\end{enumerate}
\end{proposition}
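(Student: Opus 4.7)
The plan is to exploit the single arithmetic identity $2\bar n\equiv 1\pmod n$ and the fact that $n=q^m-1$ is odd (because $q$ is even), which together make multiplication by $2$ a bijection on $\Z_n$ that preserves $q$-cyclotomic cosets.

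First I would observe that, with $q$ even, $n=q^m-1$ is odd, hence $\bar n=\lceil n/2\rceil=(n+1)/2$, and therefore
\[
2\bar n \;=\; n+1 \;\equiv\; 1 \pmod n.
\]
This is the only nontrivial ingredient in the whole argument. In particular $2\bar n\equiv 1\pmod n$ gives
\[
2(\bar n+i)\equiv 2i+1\pmod n,\qquad 2(\bar n-i)\equiv 1-2i\equiv -(2i-1)\pmod n.
\]

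Second, since $\gcd(2,n)=1$, the map $s\mapsto 2s\bmod n$ is a bijection of $\Z_n$ that commutes with multiplication by $q$, so it sends the coset $C_s$ bijectively onto $C_{2s}$. Consequently $|C_{2s}|=|C_s|$, and for any $s,t\in\Z_n$ one has $C_{2s}=C_{2t}$ iff $C_s=C_t$. Likewise, because $C_{-s}=-C_s$ (using $q\cdot(-s)\equiv-(qs)\pmod n$), we get $|C_{-s}|=|C_s|$ and $C_{-s}=C_{-t}$ iff $C_s=C_t$.

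To finish part 1, I would apply these two facts with $s=\bar n+i$ to obtain $|C_{\bar n+i}|=|C_{2(\bar n+i)}|=|C_{2i+1}|$, and with $s=\bar n-i$ to obtain $|C_{\bar n-i}|=|C_{-(2i-1)}|=|C_{2i-1}|$. For part 2, the bijection $s\mapsto 2s$ gives $C_{\bar n+i}=C_{\bar n+j}$ iff $C_{2i+1}=C_{2j+1}$, the converse direction being handled by multiplying the congruence $(2i+1)q^\ell\equiv 2j+1\pmod n$ by $\bar n$ and using $(2i+1)\bar n\equiv i+\bar n\pmod n$. Part 3 is identical after composing with $s\mapsto -s$: from $C_{\bar n-i}=C_{\bar n-j}$ multiplying by $2$ yields $-(2i-1)q^\ell\equiv-(2j-1)\pmod n$, hence $C_{2i-1}=C_{2j-1}$, and vice versa. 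There is no real obstacle here; the only thing one must make sure to state cleanly is the bijection argument on cyclotomic cosets, so that both size-preservation and coset equalities are recorded once and then applied three times.
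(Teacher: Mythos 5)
Your proposal is correct and follows essentially the same route as the paper: the paper's proof also rests on the single identity $2\cdot\frac{n+1}{2}\equiv 1\pmod n$ together with the invertibility of $2$ modulo the odd integer $n$, translating $\frac{n+1}{2}\pm i\equiv(\frac{n+1}{2}\pm j)q^\ell\pmod n$ into $2i\pm1\equiv(2j\pm1)q^\ell\pmod n$. You merely spell out the coset-preserving bijections $s\mapsto 2s$ and $s\mapsto -s$ more explicitly than the paper does.
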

\begin{proof}
The proof of 1) is trivial. Since $q$ is even and $\gcd(2, n)=1$, it is clear that
$$\frac {n+1} 2 \pm i \equiv (\frac {n+1} 2 \pm j)q^\ell \pmod n,$$
which is equivalent to
$$2i \pm 1 \equiv (2j \pm 1)q^\ell \pmod n$$ for each $\ell$ with $0 \le \ell \le m-1$.
Conclusions 2) and 3) then follow.
\end{proof}

Let $1 \le u \le q-1$ be an integer. Define
$$\tilde J_{(q,n, u)}^+=\bigcup_{0 \le j \le uq^{\bar m}/2-1}C_{\bar n+j} \mbox{ and } \tilde J_{(q,n, u)}^-=\bigcup_{1 \le j \le uq^{\bar m}/2}C_{\bar n-j},$$
where $q$ is even.

\emph{A. Parameters of $\codeodd$ when $m$ is odd}

In this subsection, we always assume that $m \ge 5$ and $m$ is odd. It can be deduced from Proposition \ref{PRQE} that $C_{\bar n+i} \ne C_{\bar n+j}$ if and only if
$C_{2i+1} \ne C_{2j+1}$ (resp. $C_{\bar n-i} \ne C_{\bar n-j}$ if and only if
$C_{2i-1} \ne C_{2j-1}$). Let $J_1$ and $J_2$ be the sets of integers that are not coset leaders, which are given by \eqref{J1} and \eqref{J2}.
Note that $1 \le 2j+1 \le uq^{\bar m}-1$ if $0 \le j \le uq^{\bar m}/2-1$ and $1 \le 2j-1 \le uq^{\bar m}-1$ if $1 \le j \le uq^{\bar m}/2$.
Therefore, we have
$$
|\tilde J_{(q,n, u)}^+|=|\tilde J_{(q,n, u)}^-|=|\bigcup_{\substack{ 1 \le l \le uq^{\bar m}-1 \\ l \; odd}}C_{l}|.
$$
By Proposition~\ref{PO}, we have $|C_l|=m$ for each $1 \le l \le uq^{\bar m}-1$. When $m \ge 5$, by the definition of $J_1$ and $J_2$ in Proposition~\ref{PO}, if $j \in J_1 \cup J_2$, then $|C_j \cap (J_1 \cup J_2)|=1$. Thus, we have
\begin{align*}
|\tilde J_{(q,n, u)}^+|=|\tilde J_{(q,n, u)}^-|=&m|\{ 1 \le l \le uq^{\bar m}-1 \mid \mbox{$l$ is an odd coset leader}\}|\\
&+m|\{ 1 \le l \le uq^{\bar m}-1 \mid \mbox{$l \in J_1 \cup J_2$ is odd and $cl(l)$ is even}\}|.
\end{align*}
Define
\begin{equation} \label{Lambda1} \lambda_1:=\lambda_1(u,\bar m)=|\{1 \le  \tilde{j} \le uq^{\bar m}-1:  \tilde{j} \text{ is an odd coset leader }\}| \end{equation}
and
\begin{equation} \label{Lambda2} \lambda_2:=\lambda_2(u,\bar m)=|\{1 \le \tilde{j} \le uq^{\bar m}-1:  \tilde{j} \in J_1 \cup J_2  \text{ is odd, $cl(l)$ is even}\}|. \end{equation}
It then follows that
\begin{equation}\label{LL} |\tilde J_{(q,n, u)}^+|=|\tilde J_{(q,n, u)}^-|=(\lambda_1+\lambda_2)m.\end{equation}

\begin{lemma} \label{LemmaLL}
Let $q$ be even and $m \ge 5$ be odd. Then the following holds.
\begin{enumerate}
  \item $$\lambda_1=\begin{cases}
uq^{\bar m}/2-(u^2-u)q/4-u^2(q-1)/4, & \text{ if $u$ is even; } \\
uq^{\bar m}/2-(u^2-u)q/4-(u^2-1)(q-1)/4, & \text{ if $u$ is odd.} \end{cases}$$
  \item $$\lambda_2=\begin{cases}
((u^2-u)q-u^2)/4, & \text{ if $u$ is even; } \\
(u^2-1)(q-1)/4, & \text{ if $u$ is odd.} \end{cases}$$
\end{enumerate}
\end{lemma}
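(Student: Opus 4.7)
The plan is to compute $\lambda_1$ and $\lambda_2$ by a direct enumeration of odd elements of $J_1$ and $J_2$, combined with an explicit identification of coset leaders read off from the derivations already carried out in the proof of Proposition \ref{PO}. The key starting observation is that since $q$ is even, every odd integer automatically satisfies $q \nmid \tilde{j}$; by Proposition \ref{PO}, an odd $\tilde{j} \in [1, uq^{\bar m}-1]$ is therefore a coset leader if and only if $\tilde{j} \notin J_1 \cup J_2$.

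I would first handle $\lambda_1$. There are exactly $uq^{\bar m}/2$ odd integers in $[1, uq^{\bar m}-1]$. Since $q^{\bar m}$ and $q$ are both even, an element $j_{\bar m}q^{\bar m}+j_1 q+j_0 \in J_1$ is odd iff $j_0$ is odd, contributing $\frac{u(u-1)}{2}\cdot\frac{q}{2}$ odd elements of $J_1$. Similarly, an element $j_{\bar m}q^{\bar m}+j_{\bar m-1}q^{\bar m-1}+j_0 \in J_2$ is odd iff $j_0$ is odd, so summing over $j_{\bar m}$ the number of odd $j_0 \in [1, j_{\bar m}]$, namely $\lceil j_{\bar m}/2 \rceil$, and multiplying by the $q-1$ free choices of $j_{\bar m-1}$, gives the total. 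The elementary identity $\sum_{k=1}^{u-1}\lceil k/2 \rceil$ evaluates to $(u^2-1)/4$ if $u$ is odd and $u^2/4$ if $u$ is even; combining this with $J_1 \cap J_2=\emptyset$ (noted in Proposition \ref{PO}) and subtracting from $uq^{\bar m}/2$ yields the claimed formula for $\lambda_1$ in each parity class.

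For $\lambda_2$, I would read off the coset leader of each $j \in J_1 \cup J_2$ from the congruences \eqref{JO1} and \eqref{JO2}. For $j = j_{\bar m}q^{\bar m}+j_1 q+j_0 \in J_1$, equation \eqref{JO1} exhibits $i := j_1 q^{\bar m}+j_0 q^{\bar m-1}+j_{\bar m} \in C_j$; because $m \ge 5$ forces the four digit positions $0, 1, \bar m-1, \bar m$ to be pairwise distinct, a short inspection of the $q$-adic expansion of $i$, combined with the constraint $j_1<j_{\bar m}$ defining $J_1$, shows $i \notin J_1 \cup J_2$, so $i = cl(j)$. Hence $cl(j)$ has the same parity as $j_{\bar m}$. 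An entirely analogous argument for $j \in J_2$ using \eqref{JO2} gives $cl(j) = j_0 q^{\bar m}+j_{\bar m} q+j_{\bar m-1}$, whose parity equals that of $j_{\bar m-1}$.

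With the parity of $cl(j)$ pinned down, the counting is routine. For $J_1$ I sum $j_{\bar m}$ over the even values in $[1, u-1]$ and multiply by $q/2$ (the choices of odd $j_0$); for $J_2$ I multiply $(q/2-1)$ (the even choices of $j_{\bar m-1}$) by the same $\lceil j_{\bar m}/2 \rceil$ sum as before. Adding the two contributions and simplifying case-by-case according to the parity of $u$ produces the stated formulas for $\lambda_2$. The principal obstacle is verifying that the representative $i$ produced in the proof of Proposition \ref{PO} really is the coset leader, not merely a strictly smaller element of $C_j$; this is precisely the step where the assumption $m \ge 5$ (equivalently $\bar m \ge 3$) enters, ensuring the digit-position argument that excludes $i$ from $J_1 \cup J_2$ goes through cleanly.
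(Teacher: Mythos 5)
Your proposal is correct and follows essentially the same route as the paper: both compute $\lambda_1$ by subtracting the odd elements of $J_1$ and $J_2$ (odd iff $j_0$ is odd, with the $\sum\lceil k/2\rceil$ count for $J_2$) from $uq^{\bar m}/2$, and both compute $\lambda_2$ by reading the coset leaders $j_1q^{\bar m}+j_0q^{\bar m-1}+j_{\bar m}$ and $j_0q^{\bar m}+j_{\bar m}q+j_{\bar m-1}$ off the congruences in the proof of Proposition~\ref{PO} and counting parities. Your explicit verification that these representatives lie outside $J_1\cup J_2$ (hence really are the coset leaders) is a detail the paper leaves implicit by citing Proposition~\ref{PO}, but it is the same argument.
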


\begin{proof}
Notice that $q$ is even. It then follows from Proposition \ref{PO} and \eqref{Lambda1} that
$$\lambda_1=uq^{\bar m}/2-|\{\tilde{j} \in J_1: \tilde{j} \text{ is odd}\}|-|\{\tilde{j} \in J_2: \tilde{j} \text{ is odd}\}|.$$
By \eqref{J1} and \eqref{J2}, it is easy to see that
\begin{eqnarray*}|\{\tilde{j} \in J_1: \tilde{j} \text{ is odd}\}|&=&|\{j_{\bar m}q^{\bar m}+j_1q+j_0: 1 \le j_{\bar m} \le u-1, 0 \le j_1 < j_{\bar m}, 1 \le
\text{ odd } j_0 \le q-1\}| \\ &=&(u^2-u)q/4\end{eqnarray*} and
$$|\{\tilde{j} \in J_2: \tilde{j} \text{ is odd}\}|=\begin{cases}
u^2(q-1)/4, & \text{ if $u$ is even; } \\
(u^2-1)(q-1)/4, & \text{ if $u$ is odd.} \end{cases}$$
Then we prove the conclusion on $\lambda_1$.

Define $$\text{CL}_1=\{cl(\tilde{j}): \tilde{j} \in J_1\} \text{ and } \text{CL}_2=\{cl(\tilde{j}): \tilde{j} \in J_2\}.$$
By Proposition \ref{PO}, we have
$$\text{CL}_1=\{j_1q^{\bar m}+j_0q^{\bar m-1}+j_{\bar m}: 1 \le j_{\bar m} \le u-1, 0 \le j_1 < j_{\bar m}, 1 \le j_0 \le q-1\}$$ and
$$\text{CL}_2=\{j_0q^{\bar m}+j_{\bar m}q+j_{\bar m-1}: 1 \le j_{\bar m} \le u-1, 1 \le j_{\bar m-1} \le q-1, 1 \le j_0 \le j_{\bar m}\}.$$
It then follows from \eqref{Lambda2} that
$$\lambda_2=|\{\tilde{j} \in J_1: j_0 \text{ is odd and } j_{\bar m} \text{ is even}\}|+|\{\tilde{j} \in J_2: j_0 \text{ is odd and } j_{\bar m-1} \text{ is even}\}|.$$
One can easily check that
$$\lambda_2=\begin{cases}
((u^2-u)q-u^2)/4, & \text{ if $u$ is even; } \\
(u^2-1)(q-1)/4, & \text{ if $u$ is odd.} \end{cases}$$
This completes the proof.
\end{proof}

The following proposition follows from Lemma \ref{LemmaLL} and \eqref{LL} directly.

\begin{proposition} \label{propositionJPM} Let $m \ge 5$ be odd. Then
$$|\tilde J_{(q,n, u)}^+|=|\tilde J_{(q,n, u)}^-|=\begin{cases}
(uq^{\bar m}/2-u^2q/4)m, & \text{ if $u$ is even; } \\
\Big(uq^{\bar m}/2-(u^2-u)q/4\Big)m, & \text{ if $u$ is odd.} \end{cases}$$
\end{proposition}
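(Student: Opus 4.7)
The plan is immediate from the way the proposition has been set up: the proof is a direct substitution. Equation \eqref{LL} already records $|\tilde J_{(q,n, u)}^+|=|\tilde J_{(q,n, u)}^-|=(\lambda_1+\lambda_2)m$, and Lemma~\ref{LemmaLL} hands over closed forms for $\lambda_1$ and $\lambda_2$ separated by the parity of $u$. So the only real work is to verify that the telescoping between the two terms collapses to the stated answer.

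First I would treat the case $u$ even. There $\lambda_1=uq^{\bar m}/2-(u^2-u)q/4-u^2(q-1)/4$ and $\lambda_2=((u^2-u)q-u^2)/4$, and the terms $(u^2-u)q/4$ cancel. What is left is $uq^{\bar m}/2 - u^2(q-1)/4 - u^2/4 = uq^{\bar m}/2 - u^2 q/4$, matching the claimed formula. Next I would handle $u$ odd. Here $\lambda_1=uq^{\bar m}/2-(u^2-u)q/4-(u^2-1)(q-1)/4$ and $\lambda_2=(u^2-1)(q-1)/4$, so the $(u^2-1)(q-1)/4$ terms cancel exactly, leaving $uq^{\bar m}/2-(u^2-u)q/4$ as desired.

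Multiplying the sum $\lambda_1+\lambda_2$ by the factor $m$ coming from \eqref{LL} then yields the two expressions in the statement of the proposition. Because every intermediate identity is an arithmetic cancellation rather than a structural claim, there is no real obstacle: the only thing to be careful about is that the formulas in Lemma~\ref{LemmaLL} were derived under the assumption $m\ge 5$ odd and $q$ even (as used in the counting of $\{\tilde j \in J_2 : \tilde j \text{ odd}\}$, which splits according to the parity of $u$), and these are exactly the hypotheses of the proposition. So I would simply record the two computations above and conclude.
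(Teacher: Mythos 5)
Your proof is correct and matches the paper exactly: the paper simply states that the proposition "follows from Lemma~\ref{LemmaLL} and \eqref{LL} directly," and your two cancellation computations are precisely that verification. Your remark about the standing hypothesis that $q$ is even (inherited from the section) is also apt, since Lemma~\ref{LemmaLL} is stated under that assumption.
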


\begin{theorem} \label{theoremQEMO} Let $m \ge 5$ be an odd integer, $q$ even, and $\delta=uq^{\frac {m+1} 2}/2+1$.
\begin{enumerate}
  \item If $u$ is even, then $\mathcal C_{(q,n, \delta, \frac {n+1} 2)}$ and $\mathcal C_{(q,n, \delta, \frac {n+1} 2-(\delta-1))}$ both have length
$n$, dimension $$q^m-1-(uq^{\frac {m+1} 2}/2-u^2q/4)m,$$ and minimum distance $d \ge \delta$.
  \item If $u$ is odd, then $\mathcal C_{(q,n, \delta, \frac {n+1} 2)}$ and $\mathcal C_{(q,n, \delta, \frac {n+1} 2-(\delta-1))}$ both have length
$n$, dimension $$k=q^m-1-\Big(uq^{\frac {m+1} 2}/2-(u^2-u)q/4\Big)m,$$ and minimum distance $d \ge \delta$.
\end{enumerate}
\end{theorem}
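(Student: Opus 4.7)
The plan is essentially to read off the result by identifying the degrees of the two generator polynomials with the cardinalities $|\tilde J^+_{(q,n,u)}|$ and $|\tilde J^-_{(q,n,u)}|$ already computed in Proposition~\ref{propositionJPM}, and then invoking the BCH bound for the minimum distance. So this is a consolidation step rather than a new calculation.

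First I would unwind the definitions. Since $\delta = uq^{(m+1)/2}/2 + 1 = uq^{\bar m}/2 + 1$ and $\bar n = (n+1)/2$ when $n = q^m - 1$ is odd (which is forced by $q$ even), the generator polynomial of $\mathcal C_{(q,n,\delta,\bar n)}$ is $\mathrm{lcm}(m_{\bar n}(x), m_{\bar n+1}(x), \ldots, m_{\bar n+\delta-2}(x))$, whose set of roots corresponds to the union $\bigcup_{0 \le j \le uq^{\bar m}/2 - 1} C_{\bar n + j} = \tilde J^+_{(q,n,u)}$. Likewise, the generator polynomial of $\mathcal C_{(q,n,\delta,\bar n - (\delta-1))}$ has roots indexed by $\bigcup_{1 \le j \le uq^{\bar m}/2} C_{\bar n - j} = \tilde J^-_{(q,n,u)}$. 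Hence the degree of each generator polynomial equals the corresponding cardinality, and the dimension is $k = n - |\tilde J^{\pm}_{(q,n,u)}|$.

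Next I would plug in Proposition~\ref{propositionJPM}, which gives
\[
|\tilde J^+_{(q,n,u)}| = |\tilde J^-_{(q,n,u)}| =
\begin{cases}
(uq^{\bar m}/2 - u^2 q/4)\,m, & u \text{ even},\\
\bigl(uq^{\bar m}/2 - (u^2-u)q/4\bigr)\,m, & u \text{ odd}.
\end{cases}
\]
Subtracting from $n = q^m - 1$ immediately yields the two dimension formulas stated in parts (1) and (2). Finally, the lower bound $d \ge \delta$ on the minimum distance follows from the BCH bound, since both codes have $\delta - 1$ consecutive powers of the primitive $n$-th root of unity as roots.

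There is no real obstacle remaining at this stage: all of the cyclotomic-coset bookkeeping has been absorbed into Propositions~\ref{PO}, \ref{PRQE}, Lemma~\ref{LemmaLL}, and Proposition~\ref{propositionJPM}. The only care required is in the very first step, namely confirming that the index ranges $0 \le j \le \delta - 2$ and $1 \le j \le \delta - 1$ for $b = \bar n$ and $b = \bar n - (\delta - 1)$ match exactly the ranges $0 \le j \le uq^{\bar m}/2 - 1$ and $1 \le j \le uq^{\bar m}/2$ used in the definition of $\tilde J^{\pm}_{(q,n,u)}$; this is a direct substitution of $\delta = uq^{\bar m}/2 + 1$. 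After that, the proof is a one-line invocation of Proposition~\ref{propositionJPM} together with the BCH bound, and can be written in two or three sentences.
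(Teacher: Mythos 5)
Your proposal is correct and follows exactly the paper's route: the paper's own proof is the one-line observation that the dimensions follow from Proposition~\ref{propositionJPM} (which gives $|\tilde J^{\pm}_{(q,n,u)}|$, the number of roots of each generator polynomial) together with the BCH bound. Your extra care in matching the index ranges $0 \le j \le \delta-2$ and $1 \le j \le \delta-1$ to the defining ranges of $\tilde J^{+}_{(q,n,u)}$ and $\tilde J^{-}_{(q,n,u)}$ is exactly the (routine) verification the paper leaves implicit.
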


\begin{proof}
The desired conclusions follow from Proposition \ref{propositionJPM} and the BCH bound directly.
\end{proof}

\begin{example}
\begin{enumerate}
  \item When $(q, m, u)=(2, 7, 1)$ in the above theorem, the code $\mathcal C_{(q,n, \delta, \frac {n+1} 2)}$ or $(\mathcal C_{(q,n, \delta, \frac {n+1} 2-(\delta-1))})$ has parameters $[127, 71, 19]$, which are the best parameters for linear codes according to the Database.
  \item When $(q, m, u)=(4, 5, 1), (4, 5, 2), (4, 5, 3)$ in the above theorem, the code $\mathcal C_{(q,n, \delta, \frac {n+1} 2)}$ or $(\mathcal C_{(q,n, \delta, \frac {n+1} 2-(\delta-1))})$ has parameters $[1023, 863, d \ge 33]$, $[1023, 723, d \ge 65]$, and $[1023, 573, d \ge 97]$, respectively.
\end{enumerate}
\end{example}

The following conclusion will be employed to determine the dimension of the code $\codeodd$ when $m \ge 5$ is odd.

\begin{proposition} \label{PIEQOM}
For odd $m \ge 5$, we have
$$\tilde J_{(q,n,u)}^+ \cap \tilde J_{(q,n,u)}^-=\bigcup_{l \in \tilde {\mathcal J}_O}C_{\bar n+(l-1)/2} \cup C_{\bar n-(l+1)/2},$$
where the union is disjoint and
\begin{align*}
\tilde {\mathcal J}_O=\{&l_{\bar m}q^{\bar m}+l_{\bar m-1}q^{\bar m-1}+(q-1)(q^{\bar m-2}+\cdots+q^2+q)+l_0:0 \le l_{\bar m} \le u-1, \\
                        & 0 \le \text{ even } l_{\bar m-1} \le q-2, q-u \le \text{ odd } l_0 \le q-1\}.
\end{align*}
Moreover,
$$|\tilde J_{(q,n,u)}^+ \cap \tilde J_{(q,n,u)}^-|=\begin{cases}
\frac{u^2qm}{2}, & \text{ if $u$ is even;} \\
\frac{u(u+1)qm}{2}, &\text{ if $u$ is odd.}\end{cases}$$
\end{proposition}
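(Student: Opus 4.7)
The plan is to transport the problem, via the substitution $a' = 2a+1$, $b' = 2b-1$, into the framework already established by Remark~\ref{rem-1}, and then impose the odd-parity constraints on $a'$ and $b'$. Since $n = q^m - 1$ is odd (because $q$ is even) and $2\bar n \equiv 1 \pmod n$, multiplying the congruence $\bar n + a \equiv (\bar n - b)q^\ell \pmod n$ by $2$ transforms $C_{\bar n + a} = C_{\bar n - b}$ into $(2a+1) + (2b-1)q^\ell \equiv 0 \pmod n$. A size comparison rules out $\ell = 0$, so for $0 \le a \le uq^{\bar m}/2 - 1$ and $1 \le b \le uq^{\bar m}/2$ one lands exactly in the setting of Remark~\ref{rem-1} with $i = a'$ and $j = b'$ odd and $1 \le a', b' \le uq^{\bar m} - 1$.

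Remark~\ref{rem-1} describes the admissible pairs $(a', b')$ in two symmetric families: either $a' \in \mathcal J_O$ (with $b'$ determined by the Case~2 relations from the proof of Proposition~\ref{PI1}) or $b' \in \mathcal J_O$ (with $a'$ determined by Case~3). Writing $l = a' = l_{\bar m}q^{\bar m} + l_{\bar m-1}q^{\bar m-1} + (q-1)\sum_{i=1}^{\bar m-2} q^i + l_0$ in the first family, oddness of $a'$ forces $l_0$ odd (since $q$ is even), and reading off $(b')_0 = q - 1 - l_{\bar m-1}$ from the Case~2 relations combined with oddness of $b'$ forces $l_{\bar m-1}$ even. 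The symmetric computation in the second family yields the same constraints, so in both cases $l \in \tilde{\mathcal J}_O$. The first family then contributes $C_{\bar n + (l-1)/2}$ (via $a = (l-1)/2$) and the second contributes $C_{\bar n - (l+1)/2}$ (via $b = (l+1)/2$), yielding the claimed equality
\[
\tilde J_{(q,n,u)}^+ \cap \tilde J_{(q,n,u)}^- = \bigcup_{l \in \tilde{\mathcal J}_O} \bigl( C_{\bar n + (l-1)/2} \cup C_{\bar n - (l+1)/2} \bigr).
\]

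For disjointness, one checks directly that each $l \in \tilde{\mathcal J}_O \subset \mathcal J_O$ avoids $J_1 \cup J_2$, hence is a coset leader of a size-$m$ cyclotomic coset by Proposition~\ref{PO}. Proposition~\ref{PRQE} then separates the families $\{C_{\bar n + (l-1)/2}\}_l$ and $\{C_{\bar n - (l+1)/2}\}_l$ internally. A cross-collision $C_{\bar n + (l-1)/2} = C_{\bar n - (l'+1)/2}$ would, after the same transformation, require $l + l' q^\ell \equiv 0 \pmod n$ with both $l, l' \in \mathcal J_O$; but Remark~\ref{rem-1} forbids this since the Case~2/3 images carry digit $q-1$ in position $\bar m - 1$ and thus never lie in $\mathcal J_O$. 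Finally, $|\tilde{\mathcal J}_O| = u \cdot (q/2) \cdot N$, where $N$ is the number of odd $l_0 \in [q-u,q-1]$ and equals $u/2$ or $(u+1)/2$ according as $u$ is even or odd; multiplying by $2m$ (two size-$m$ cosets per $l$) produces the stated cardinality. The delicate point is the parity-digit translation linking oddness of $b'$ to $l_{\bar m-1}$ even, but this is forced cleanly by the Case~2/3 equations; everything else is bookkeeping on top of Remark~\ref{rem-1}.
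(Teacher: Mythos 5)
Your proposal is correct and follows essentially the same route as the paper: multiply the defining congruence by $2$ (via Proposition~\ref{PRQE}) to reduce to the characterization in Remark~\ref{rem-1}, then impose oddness of $2a+1$ and $2b-1$ to force $l_0$ odd and $l_{\bar m-1}$ even, and count. You have in fact written out the disjointness check and the cardinality bookkeeping that the paper dismisses as ``a straightforward calculation,'' and these details check out.
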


\begin{proof} We are going to find the integers $i$ and $j$ with $1 \le i \le uq^{\bar m}$ and $1 \le j \le uq^{\bar m}$ such that
$$
C_{\bar n +i}=C_{\bar n -j}.
$$
This is equivalent to
$$(2i+1)+(2j-1) q^\ell \equiv 0 \pmod n$$ for some $1 \le \ell \le m-1$. Recall that in Remark~\ref{rem-1}, for $m \ge 5$ being odd, the integers $1 \le i_1,j_1 \le uq^{\bar m}$ satisfying
$$
i_1+j_1q^{\ell} \equiv 0 \pmod n
$$
have been characterized. Using this result, we can further characterize the odd integers $i_1$ and $i_2$ satisfying $i_1=2i+1$, $j_1=2j-1$ such that
$$
C_{\bar n +(i_1-1)/2}=C_{\bar n -(j_1+1)/2}.
$$
The remaining part of the theorem follows from Remark~\ref{rem-1} by employing a straightforward calculation.
\end{proof}

\begin{theorem} Let $m \ge 5$ be an odd integer, $q$ even, and $\delta=uq^{\frac {m+1} 2}/2+1$. Then $\codeodd$ has length
$n$, dimension $$k=q^m-1-(uq^{\frac {m+1} 2}-u^2q)m,$$ and minimum distance $d \ge 2\delta-1$.
\end{theorem}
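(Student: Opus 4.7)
The plan is to compute the degree of the generator polynomial of $\codeodd$ by expressing its zero set as the union $\tilde J_{(q,n,u)}^+ \cup \tilde J_{(q,n,u)}^-$ and applying inclusion--exclusion, using the two sizes already determined in Propositions~\ref{propositionJPM} and \ref{PIEQOM}. The minimum distance claim is an immediate consequence of the BCH bound.

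First, recall from the identity \eqref{GeneratorPoly} for odd $n$ (which holds here since $n=q^m-1$ with $q$ even) that
\[
g_{\big(q,n,2\delta-1,\tfrac{n+1}{2}-(\delta-1)\big)}(x) = \lcm\!\Big(g_{(q,n,\delta,\frac{n+1}{2})}(x),\, g_{(q,n,\delta,\frac{n+1}{2}-(\delta-1))}(x)\Big).
\]
Thus the set of zeros of the generator polynomial of $\codeodd$ among the powers of $\beta$ is indexed exactly by $\tilde J_{(q,n,u)}^+ \cup \tilde J_{(q,n,u)}^-$. Hence the dimension equals
\[
k = q^m-1 - \big(|\tilde J_{(q,n,u)}^+| + |\tilde J_{(q,n,u)}^-| - |\tilde J_{(q,n,u)}^+ \cap \tilde J_{(q,n,u)}^-|\big).
\]

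Next, I would split on the parity of $u$ and substitute the values provided by the two propositions. For $u$ even, Proposition~\ref{propositionJPM} gives $|\tilde J_{(q,n,u)}^\pm| = (uq^{\bar m}/2 - u^2 q/4)m$ and Proposition~\ref{PIEQOM} gives $|\tilde J_{(q,n,u)}^+ \cap \tilde J_{(q,n,u)}^-| = u^2 qm/2$, so the three-term inclusion--exclusion collapses to $(uq^{\bar m} - u^2 q)m$. For $u$ odd, the analogous substitution $(u^2-u)q/4$ and $u(u+1)qm/2$ also telescopes to $(uq^{\bar m} - u^2 q)m$, since $(u^2-u) + u(u+1) = 2u^2$. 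In both parity cases, therefore,
\[
|\tilde J_{(q,n,u)}^+ \cup \tilde J_{(q,n,u)}^-| = (uq^{(m+1)/2} - u^2 q)m,
\]
which yields the announced dimension $k = q^m - 1 - (uq^{(m+1)/2} - u^2 q)m$.

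Finally, the minimum distance bound $d \ge 2\delta-1$ is precisely the BCH bound applied to $\codeodd$, whose generator polynomial by construction has the $2\delta-1$ consecutive roots $\beta^{(n+1)/2 - (\delta-1)}, \beta^{(n+1)/2 - (\delta-2)}, \ldots, \beta^{(n+1)/2 + (\delta-1)}$. I do not anticipate any genuine obstacle in this proof: the conceptual work has been done in Propositions~\ref{propositionJPM} and \ref{PIEQOM}, and what remains is a clean arithmetic verification that the $u$-even and $u$-odd expressions coincide after inclusion--exclusion. The only mild subtlety to check is that no extra identifications among cosets occur beyond those counted by $\tilde J_{(q,n,u)}^+ \cap \tilde J_{(q,n,u)}^-$, but this follows because each integer in the relevant range (being odd and at most $uq^{\bar m}-1$) has coset size exactly $m$ by Proposition~\ref{PO}, so no coset is over-counted within $\tilde J_{(q,n,u)}^+$ or $\tilde J_{(q,n,u)}^-$ separately.
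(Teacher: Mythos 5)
Your proposal is correct and follows essentially the same route as the paper: the paper's one-line proof invokes Theorem \ref{theoremQEMO} (equivalently Proposition \ref{propositionJPM}) for $|\tilde J_{(q,n,u)}^{\pm}|$ and Proposition \ref{PIEQOM} for the intersection, which is exactly your inclusion--exclusion computation, and your arithmetic in both parity cases checks out. One trivial slip: the consecutive roots of the generator polynomial run only from $\beta^{\frac{n+1}{2}-(\delta-1)}$ to $\beta^{\frac{n+1}{2}+\delta-2}$ (that is, $2\delta-2$ of them, not $2\delta-1$), which is precisely what the BCH bound needs to give $d \ge 2\delta-1$.
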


\begin{proof}
The desired conclusion follows from Theorem \ref{theoremQEMO}, Proposition \ref{PIEQOM}, and the BCH bound.
\end{proof}

\begin{example}
\begin{enumerate}
  \item When $(q, m, u)=(2, 7, 1)$ in the above theorem, the code $\C_{(2,127,17,56)}$ has parameters $[127, 29, 37]$.
  \item When $(q, m, u)=(4, 5, 1), (4, 5, 2), (4, 5, 3)$ in the above theorem, the code $\codeodd$ has parameters $[1023, 723, d \ge 65]$, $[1023, 463, d \ge 129]$, and $[1023, 243, d \ge 193]$, respectively.
\end{enumerate}
\end{example}

\emph{B. Parameters of $\codeodd$ when $m$ is even}

It has been seen from Proposition \ref{PRQE} that $C_{\bar n+i} \ne C_{\bar n+j}$ if and only if
$C_{2i+1} \ne C_{2j+1}$ (resp. $C_{\bar n-i} \ne C_{\bar n-j}$ if and only if
$C_{2i-1} \ne C_{2j-1}$). Let $J$ be the set of integers that are not coset leaders, which are given by \eqref{JE}.
Note that $1 \le 2j+1 \le uq^{\bar m}-1$ if $0 \le j \le uq^{\bar m}/2-1$ and $1 \le 2j-1 \le uq^{\bar m}-1$ if $1 \le j \le uq^{\bar m}/2$. Using the same arguments at the beginning of previous subsection, we can see that
\begin{equation}\label{TTT} |\tilde J_{(q,n, u)}^+|=|\tilde J_{(q,n, u)}^-|=\theta_1 m+\theta_2 m/2+\theta_3 m,\end{equation}
where
\begin{equation} \label{Theta1} \theta_1=|\{1 \le  \tilde{j} \le uq^{\bar m}-1:
\tilde{j} \text{ is an odd coset leader and } |C_{\tilde j}|=m\}|, \end{equation}
\begin{equation} \label{Theta2} \theta_2=|\{1 \le  \tilde{j} \le uq^{\bar m}-1:
\tilde{j} \text{ is an odd coset leader and } |C_{\tilde j}|=m/2\}|, \end{equation}
and
\begin{equation} \label{Theta3} \theta_3=|\{1 \le \tilde{j} \le uq^{\bar m}-1:
\tilde{j} \in J  \text{ is odd, $cl(\tilde{l})$ is even}\}|. \end{equation}

\begin{lemma} \label{LemmaTTT}
Let $q$ be even and $m \ge 2$ be even. Then we have the following.
\begin{enumerate}
  \item $$\theta_1=\begin{cases}
uq^{\bar m}/2-u^2/4, & \text{ if $u$ is even; } \\
uq^{\bar m}/2-(u^2-1)/4, & \text{ if $u$ is odd.} \end{cases}$$
\item $$\theta_2=\begin{cases}
u/2, & \text{ if $u$ is even; } \\
(u-1)/2, & \text{ if $u$ is odd.} \end{cases}$$
  \item $$\theta_3=\begin{cases}
u(u-2)/8, & \text{ if $u$ is even; } \\
(u^2-1)/8, & \text{ if $u$ is odd.} \end{cases}$$
\end{enumerate}
\end{lemma}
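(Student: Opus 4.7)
The plan is to express each of $\theta_1, \theta_2, \theta_3$ as a parity-counting problem on the integers classified in Proposition~\ref{PE}. Because $q$ is even, $q^{\bar m}$ is even, so every odd integer $\tilde j$ automatically satisfies $q \nmid \tilde j$, and the number of odd integers in $[1, uq^{\bar m}-1]$ is exactly $uq^{\bar m}/2$. I intend to handle $\theta_2$ first, then $\theta_3$, and use both to deduce $\theta_1$.

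For $\theta_2$, Proposition~\ref{PE} tells us that the only $\tilde j$ in the range with $|C_{\tilde j}| = \bar m = m/2$ are the elements $v(q^{\bar m}+1) = vq^{\bar m}+v$ for $1 \le v \le u-1$, and each such element is a coset leader because its two $q$-adic coefficients coincide while $J$ requires the strict inequality $j_0 < j_{\bar m}$. Since $q^{\bar m}+1$ is odd, $v(q^{\bar m}+1)$ is odd iff $v$ is odd, so $\theta_2$ reduces to counting odd $v$ in $[1,u-1]$, which yields $u/2$ or $(u-1)/2$ according to the parity of $u$.

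For $\theta_3$, I would write a generic $\tilde j \in J$ as $\tilde j = j_{\bar m}q^{\bar m}+j_0$ with $1 \le j_0 < j_{\bar m} \le u-1$. The parity of $\tilde j$ equals the parity of $j_0$, while Case~2 of the proof of Proposition~\ref{PE} identifies $cl(\tilde j) = j_0 q^{\bar m}+j_{\bar m}$, whose parity equals that of $j_{\bar m}$. Hence $\theta_3$ counts pairs $(j_{\bar m}, j_0)$ with $j_{\bar m}$ even, $j_0$ odd, and $1 \le j_0 < j_{\bar m} \le u-1$; for each even $j_{\bar m}$ there are exactly $j_{\bar m}/2$ admissible odd values of $j_0$, and summing $\sum_{i=1}^{\lfloor (u-1)/2\rfloor} i$ over the eligible even $j_{\bar m}$ produces the stated closed forms after splitting on the parity of $u$.

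Finally, for $\theta_1$ I would first count \emph{all} odd coset leaders in $[1,uq^{\bar m}-1]$ as $uq^{\bar m}/2$ minus the number of odd elements of $J$; the latter is a companion count to $\theta_3$, obtained by summing odd $j_0 \in [1,j_{\bar m}-1]$ over \emph{all} $j_{\bar m} \in [2,u-1]$ rather than just the even ones, giving $u(u-2)/4$ if $u$ is even and $(u-1)^2/4$ if $u$ is odd. Subtracting $\theta_2$ then isolates the odd coset leaders whose coset has full size $m$, and a brief simplification produces the advertised formulas. The only real obstacle is the bookkeeping: keeping the $u$-even and $u$-odd regimes straight while summing short arithmetic progressions. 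No new structural input is needed beyond Proposition~\ref{PE} and the explicit description of $cl(\tilde j)$ extracted from its proof.
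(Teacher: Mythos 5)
Your proposal is correct and takes essentially the same approach as the paper: the paper likewise obtains $\theta_2$ by counting the odd multiples $v(q^{\bar m}+1)$, obtains $\theta_3$ by counting pairs $(j_{\bar m},j_0)$ with $j_0$ odd and $j_{\bar m}$ even using the explicit coset leaders $cl(\tilde j)=j_0q^{\bar m}+j_{\bar m}$ from Case~2 of Proposition~\ref{PE}, and derives $\theta_1$ as $uq^{\bar m}/2-\theta_2-|\{\tilde j\in J:\tilde j\text{ odd}\}|$ with the same intermediate counts $u(u-2)/4$ and $(u-1)^2/4$.
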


\begin{proof}
Notice that $q$ is even. It then follows from Proposition \ref{PE}, \eqref{Theta1}, and \eqref{Theta2} that
$$\theta_1=uq^{\bar m}/2-\theta_2-|\{\tilde{j} \in J: \tilde{j} \text{ is odd}\}|.$$
By \eqref{JE}, it is easy to see that
\begin{eqnarray*}|\{\tilde{j} \in J: \tilde{j} \text{ is odd}\}|&=&|\{j_{\bar m}q^{\bar m}+j_0: 1 \le j_0 < j_{\bar m} \le u-1,
j_0 \text{ odd }\}| \\ &=&\begin{cases}
u(u-2)/4, & \text{ if $u$ is even, }\\
(u-1)^2/4, & \text{ if $u$ is odd. } \end{cases} \end{eqnarray*}
In addition, it follows from Proposition \ref{PE} that
$$\theta_2=|\{v(q^{\bar m}+1): 1 \le \text{ odd } v \le u-1|=\begin{cases}
u/2, & \text{ if $u$ is even; } \\
(u-1)/2, & \text{ if $u$ is odd.} \end{cases}$$
Then we get the conclusions on $\theta_1$ and $\theta_2$.

Define
$$
\text{CL}=\{cl(\tilde j) \mid \tilde j \in J \}.
$$
It follows from Proposition \ref{PE} that
$$\text{CL}=\{j_0q^{\bar m}+j_{\bar m}: j_0+1 \le j_{\bar m} \le u-1, 1 \le j_0 \le u-1\}.$$
Then we can deduce from \eqref{Theta3} that
$$\theta_3=|\{j_0q^{\bar m}+j_{\bar m}: 1 \le j_0 <  j_{\bar m} \le u-1, j_0 \text{ odd, } j_{\bar m} \text{ even}\}|.$$
It can be easily verified that $$\theta_3=\begin{cases}
u(u-2)/8, & \text{ if $u$ is even; } \\
(u^2-1)/8, & \text{ if $u$ is odd.} \end{cases}$$
This completes the proof.
\end{proof}

The following results follow from Lemma \ref{LemmaTTT} and \eqref{TTT} directly.

\begin{proposition} \label{propositionJPMME} Let $m \ge 2$ be even. Then
$$|\tilde J_{(q,n, u)}^+|=|\tilde J_{(q,n, u)}^-|=\begin{cases}
\Big(uq^{\bar m}-u^2/4\Big)\frac m 2, & \text{ if $u$ is even; } \\
\Big(uq^{\bar m}-(u-1)^2/4\Big)\frac m 2, & \text{ if $u$ is odd.} \end{cases}$$
\end{proposition}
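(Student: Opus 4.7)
The plan is to read off the statement as a direct algebraic simplification of the decomposition \eqref{TTT} after substituting the three counts supplied by Lemma~\ref{LemmaTTT}. Since \eqref{TTT} already gives
$$|\tilde J_{(q,n,u)}^+|=|\tilde J_{(q,n,u)}^-|=\theta_1 m+\theta_2\,\tfrac m2+\theta_3 m,$$
and Lemma~\ref{LemmaTTT} provides closed-form expressions for $\theta_1,\theta_2,\theta_3$ depending only on the parity of $u$, the whole task reduces to plugging in and collecting terms. No further combinatorial work is needed.

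First I would handle the case $u$ even. Substituting $\theta_1=uq^{\bar m}/2-u^2/4$, $\theta_2=u/2$ and $\theta_3=u(u-2)/8$ into the identity above and factoring out $m/2$, the constant terms collapse to $-u^2/4+u/4+(u^2-2u)/8=-u^2/8$, so the total becomes $(uq^{\bar m}-u^2/4)(m/2)$. Next I would repeat the same computation for $u$ odd, using $\theta_1=uq^{\bar m}/2-(u^2-1)/4$, $\theta_2=(u-1)/2$ and $\theta_3=(u^2-1)/8$; here the constant terms combine as $-(u^2-1)/4+(u-1)/4+(u^2-1)/8=-(u-1)^2/8$, giving $(uq^{\bar m}-(u-1)^2/4)(m/2)$.

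There is really no hard step. The only thing to be slightly careful about is the arithmetic in the odd case, where the simplification $-(u-1)(u+1)/8+2(u-1)/8=-(u-1)^2/8$ is easiest to see by first factoring $(u-1)/8$. Once this is observed, both cases match the asserted formula and the proposition follows. Equality of $|\tilde J_{(q,n,u)}^+|$ and $|\tilde J_{(q,n,u)}^-|$ is already built into \eqref{TTT}, so nothing additional needs to be argued for this symmetry.
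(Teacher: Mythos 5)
Your proposal is correct and matches the paper's approach exactly: the paper also derives this proposition as a direct consequence of Lemma~\ref{LemmaTTT} and \eqref{TTT}, and your arithmetic in both parity cases checks out.
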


\begin{theorem} \label{theoremQEME} Let $m \ge 2$ be an even integer, $q$ even, and $\delta=uq^{\frac {m} 2}/2+1$.
\begin{enumerate}
  \item If $u$ is even, then $\mathcal C_{(q,n, \delta, \frac {n+1} 2)}$ and $\mathcal C_{(q,n, \delta, \frac {n+1} 2-(\delta-1))}$ both have length
$n$, dimension $$k=q^m-1-\Big(uq^{\frac m 2}-u^2/4\Big)\frac m 2,$$ and minimum distance $d \ge \delta$.
  \item If $u$ is odd, then $\mathcal C_{(q,n, \delta, \frac {n+1} 2)}$ and $\mathcal C_{(q,n, \delta, \frac {n+1} 2-(\delta-1))}$ both have lenth
$n$, dimension $$k=q^m-1-\Big(uq^{\frac m 2}-(u-1)^2/4\Big)\frac m 2,$$ and minimum distance $d \ge \delta$.
\end{enumerate}
\end{theorem}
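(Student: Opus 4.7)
The plan is to follow the template already used for Theorem \ref{theoremQEMO}, reading off the dimensions from the size of the defining union of cyclotomic cosets. Since $q$ is even, $n=q^m-1$ is odd, so $\bar n=(n+1)/2$. First I would record that the generator polynomial of $\mathcal C_{(q,n,\delta,\bar n)}$ is $\lcm(m_{\bar n}(x),m_{\bar n+1}(x),\ldots,m_{\bar n+\delta-2}(x))$, whose degree equals $|\bigcup_{0\le j\le \delta-2} C_{\bar n+j}|$; analogously, the degree of the generator polynomial of $\mathcal C_{(q,n,\delta,\bar n-(\delta-1))}$ equals $|\bigcup_{1\le j\le \delta-1} C_{\bar n-j}|$.

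Next I would substitute $\delta=uq^{m/2}/2+1=uq^{\bar m}/2+1$, which makes the two index ranges become $0\le j\le uq^{\bar m}/2-1$ and $1\le j\le uq^{\bar m}/2$ respectively. These are exactly the sets $\tilde J_{(q,n,u)}^+$ and $\tilde J_{(q,n,u)}^-$ defined at the beginning of Section~V-B. Proposition \ref{propositionJPMME} then gives $|\tilde J_{(q,n,u)}^+|=|\tilde J_{(q,n,u)}^-|$, equal to $(uq^{\bar m}-u^2/4)m/2$ when $u$ is even and to $(uq^{\bar m}-(u-1)^2/4)m/2$ when $u$ is odd.

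The dimension follows from $k=n-\deg(g(x))=q^m-1-|\tilde J_{(q,n,u)}^{\pm}|$, and the minimum distance bound $d\ge\delta$ is immediate from the BCH bound. There is no substantive obstacle at this stage: all the real work has already been carried out in Lemma~\ref{LemmaTTT} and Proposition \ref{propositionJPMME}, where the odd coset leaders $\tilde j$ in the range $1\le \tilde j\le uq^{\bar m}-1$ were enumerated and the possibility $|C_{\tilde j}|=m/2$ (arising from $\tilde j=v(q^{\bar m}+1)$) was accounted for separately. The theorem itself is essentially bookkeeping, matching the index range $[0,\delta-2]$ to $[0,uq^{\bar m}/2-1]$ (and its reflection for the other code) and then invoking the earlier proposition for both even and odd parities of $u$.
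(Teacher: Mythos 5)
Your proposal is correct and follows exactly the paper's route: the paper's own proof is a one-line appeal to Proposition \ref{propositionJPMME} and the BCH bound, and your write-up simply makes explicit the bookkeeping (identifying the defining coset unions for the two codes with $\tilde J_{(q,n,u)}^{+}$ and $\tilde J_{(q,n,u)}^{-}$) that the paper leaves implicit.
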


\begin{proof}
The desired conclusions follow from Proposition \ref{propositionJPMME} and the BCH bound directly.
\end{proof}

\begin{example}
\begin{enumerate}
  \item When $(q, m, u)=(2, 6, 1)$ in the above theorem, the code $\mathcal C_{(q,n, \delta, \frac {n+1} 2)}$ or $(\mathcal C_{(q,n, \delta, \frac {n+1} 2-(\delta-1))})$ has parameters $[63, 39, 9]$, which are the best parameters for linear codes according to the Database and the best possible
      cyclic codes according to {\rm\cite[p. 260]{Dingbk15}}.
  \item When $(q, m, u)=(4, 4, 1), (4, 4, 2), (4, 4, 3)$, the code $\mathcal C_{(q,n, \delta, \frac {n+1} 2)}$ or $(\mathcal C_{(q,n, \delta, \frac {n+1} 2-(\delta-1))})$ has parameters $[255, 223, d \ge 9]$, $[255, 193, d \ge 17]$, and $[255, 161, d \ge 25]$, respectively.
\end{enumerate}
\end{example}

The following conclusion will be employed to investigate the parameters of the code $\codeodd$ when $m \ge 2$ is even.

\begin{proposition} \label{PIEQEM}
Let $m \ge 2$ be an even integer and $q$ be even. Suppose
$$
\begin{cases}
  1 \le u \le \frac{q}{2} & \mbox{if $m=2$,} \\
  1 \le u \le q-1 & \mbox{if $m \ge 4$.}
\end{cases}
$$
Then we have
$$\tilde J_{(q,n,u)}^+ \cap \tilde J_{(q,n,u)}^-=\bigcup_{l \in \tilde {\mathcal J}_E}C_{\bar n-(l+1)/2},$$
where the union is disjoint and
$$\tilde{{\mathcal J}}_E=\{j_{\bar m}q^{\bar m}+(q-1)(q^{\bar m-1}+q^{\bar m-2}+\cdots+q)+j_0: 0 \le \text{ even } j_{\bar m} \le u-1 \text{ and } q-u \le \text{ odd } j_0 \le q-1\}.$$
Moreover,
$$|\tilde J_{(q,n,u)}^+ \cap \tilde J_{(q,n,u)}^-|=\begin{cases}
u^2m/4, & \text{ if $u$ is even;} \\
(u+1)^2m/4 , &\text{ if $u$ is odd.}\end{cases}$$
\end{proposition}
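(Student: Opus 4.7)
The plan is to reduce the question to the pairing characterization in Remark~\ref{rem-2} and then impose parity constraints coming from the shift $a=2i+1,\; b=2j-1$.

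First, by Proposition~\ref{PRQE}(2)--(3), the equality $C_{\bar n+i}=C_{\bar n-j}$ (for $0 \le i \le uq^{\bar m}/2-1$ and $1 \le j \le uq^{\bar m}/2$) is equivalent, after multiplication by $2$ in $\Z_n$, to $a\,q^\ell+b\equiv 0\pmod n$ for some $1\le \ell\le m-1$, where $a=2i+1$ and $b=2j-1$ are odd integers in $[1,uq^{\bar m}-1]$. Since $q$ is even and $n=q^m-1$ is odd, the ``$q$ even, $n$ odd'' clause of Remark~\ref{rem-2} applies and forces $a,b\in \mathcal J_E$ with the pairing relations
\begin{equation*}
a_{\bar m-1}=\cdots=a_1=a_0+b_{\bar m}=b_{\bar m-1}=\cdots=b_1=b_0+a_{\bar m}=q-1.
\end{equation*}

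Next, I impose oddness of $a$ and $b$. Because $q$ is even, only the constant digit controls parity: $a$ is odd iff $a_0$ is odd, and since $a_0=q-1-b_{\bar m}$ with $q-1$ odd, this is equivalent to $b_{\bar m}$ being even. Symmetrically, $b$ odd iff $a_{\bar m}$ is even. These two parity conditions cut $\mathcal J_E$ down precisely to $\tilde{\mathcal J}_E$, so the admissible pairs $(a,b)$ are exactly those in $\tilde{\mathcal J}_E\times \tilde{\mathcal J}_E$ satisfying the pairing relations, and every $b\in\tilde{\mathcal J}_E$ arises uniquely. Setting $l=b$ and $j=(l+1)/2$, the resulting coset $C_{\bar n-(l+1)/2}$ equals $C_{\bar n+(a-1)/2}$ for the paired $a\in\tilde{\mathcal J}_E$, so enumerating the ``$-$'' side alone already covers the intersection.

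For disjointness I check that each $l\in\tilde{\mathcal J}_E$ is a coset leader with $|C_l|=m$. When $m\ge 4$, the middle $q$-adic digits of $l$ all equal $q-1\ne 0$, which automatically rules out $l\in J$ and $l=v(q^{\bar m}+1)$ (both of which force the middle block to vanish), so Proposition~\ref{PE} applies under $1\le u\le q-1$. The case $m=2$ is the delicate one, because $\bar m=1$ collapses the middle block; here the hypothesis $u\le q/2$ is precisely what guarantees $l_0\ge q-u\ge q/2>u-1\ge l_{\bar m}$, so $l\notin J$, while $l_0$ odd and $l_{\bar m}$ even rule out $l=v(q+1)$. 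Proposition~\ref{PRQE}(3) then transports distinctness of coset leaders $l$ into distinctness of the cosets $C_{\bar n-(l+1)/2}$.

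Finally, $|\tilde{\mathcal J}_E|$ factors as (number of even $l_{\bar m}\in[0,u-1]$) $\times$ (number of odd $l_0\in[q-u,q-1]$); since $q$ is even, $q-1$ is odd, so each factor equals $\lceil u/2\rceil$, giving $|\tilde{\mathcal J}_E|=\lceil u/2\rceil^2$. Multiplying by $m=|C_l|$ produces $u^2m/4$ when $u$ is even and $(u+1)^2m/4$ when $u$ is odd. The main obstacle is the $m=2$ edge case, where the middle-digit cushion disappears and the hypothesis $u\le q/2$ becomes indispensable for the coset-leader argument; the rest of the proof is a careful translation of Remark~\ref{rem-2} through the shift $(a,b)=(2i+1,2j-1)$.
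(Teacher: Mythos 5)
Your proposal is correct and follows essentially the same route as the paper: reduce $C_{\bar n+i}=C_{\bar n-j}$ to $(2i+1)+(2j-1)q^{\ell}\equiv 0 \pmod n$, invoke the ``$q$ even, $n$ odd'' case of Remark~\ref{rem-2}, and then impose the parity constraints on the constant digits to cut $\mathcal J_E$ down to $\tilde{\mathcal J}_E$. You have in fact spelled out the parity bookkeeping, the $m=2$ coset-leader check, and the count $|\tilde{\mathcal J}_E|=\lceil u/2\rceil^2$ that the paper dismisses as a ``straightforward calculation,'' and all of these details check out.
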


\begin{proof}
We are going to find all the integers $i$ and $j$ with $1 \le i \le uq^{\bar m}$ and $1 \le j \le uq^{\bar m}$ such that
$$
C_{\bar n +i}=C_{\bar n -j}.
$$
This is equivalent to
$$(2i+1)+(2j-1) q^\ell \equiv 0 \pmod n$$ for some $1 \le \ell \le m-1$. Recall that in Remark~\ref{rem-2}, for $m \ge 2$ being even, the integers $1 \le i_1,j_1 \le uq^{\bar m}$ satisfying
$$
i_1+j_1q^{\ell} \equiv 0 \pmod n
$$
have been characterized. Using this result, we can further characterize the odd integers $i_1$ and $i_2$ satisfying $i_1=2i+1$, $j_1=2j-1$ such that
$$
C_{\bar n +(i_1-1)/2}=C_{\bar n -(j_1+1)/2}.
$$
As a consequence, we have $\tilde J_{(q,n,u)}^+ \cap \tilde J_{(q,n,u)}^-=\bigcup_{l \in  \tilde{\mathcal J}_E} C_{\bar n-(l+1)/2}$. By Proposition~\ref{PE}, each $l \in \tilde{\mathcal J}_E $ is a coset leader and $|C_{\bar n-l}|=m$. In particular, when $m=2$, we need $1 \le u \le \frac{q}{2}$ to ensure that each $l \in \tilde{\mathcal J}_E$ is a coset leader and $|C_{\bar n-l}|=m$. The remaining part of the theorem follows from Remark~\ref{rem-2} by employing a straightforward calculation.
\end{proof}

\begin{theorem} \label{TheoremCQEME} Let $m \ge 2$ be an even integer, $q$ even, and $\delta=uq^{\frac m 2}/2+1$. Suppose
$$
\begin{cases}
  1 \le u \le \frac{q}{2} & \mbox{if $m=2$} \\
  1 \le u \le q-1 & \mbox{if $m \ge 4$}
\end{cases}
$$
\begin{enumerate}
  \item If $u$ is even, then $\codeodd$ has length
$n$, dimension $$k=q^m-1-(uq^{\frac m 2}-u^2/2)m,$$ and minimum distance $d \ge 2 \delta-1$.
  \item If $u$ is odd, then $\codeodd$ has length
$n$, dimension $$k=q^m-1-(uq^{\frac m 2}-(u^2+1)/2)m,$$ and minimum distance $d \ge 2 \delta-1$.
\end{enumerate}
\end{theorem}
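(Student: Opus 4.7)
The plan is to combine Theorem~\ref{theoremQEME} with Proposition~\ref{PIEQEM} via inclusion--exclusion, and then invoke the BCH bound. First I would recall from \eqref{LCDGPoly}--\eqref{GeneratorPoly} that the generator polynomial $\godd$ of $\codeodd$ is the least common multiple of the generator polynomials of $\mathcal C_{(q,n,\delta,\frac{n+1}{2})}$ and $\mathcal C_{(q,n,\delta,\frac{n+1}{2}-(\delta-1))}$. Tracing the $\delta-1$ consecutive exponents contributed by each factor and using that $\delta = uq^{m/2}/2+1$, the set of root-exponents of $\godd$ is exactly $\tilde J_{(q,n,u)}^+ \cup \tilde J_{(q,n,u)}^-$, so that
\[
\deg \godd \;=\; |\tilde J_{(q,n,u)}^+| + |\tilde J_{(q,n,u)}^-| - |\tilde J_{(q,n,u)}^+ \cap \tilde J_{(q,n,u)}^-|.
\]

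Next I would plug in the three cardinalities. Theorem~\ref{theoremQEME} supplies $|\tilde J_{(q,n,u)}^\pm|$, split according to the parity of $u$, and Proposition~\ref{PIEQEM} supplies the size of the intersection under the same parity split. The upper bound on $u$ imposed in the present statement is precisely the one demanded in Proposition~\ref{PIEQEM}, which is what guarantees that every $l \in \tilde{\mathcal J}_E$ is a coset leader whose $q$-cyclotomic coset has the full size $m$; this is the only subtle point, and it is important only when $m=2$. In the even-$u$ case the arithmetic collapses to $\deg \godd = (uq^{m/2}-u^2/4)m - u^2 m/4 = (uq^{m/2}-u^2/2)m$, and in the odd-$u$ case, using the identity $(u-1)^2/4 + (u+1)^2/4 = (u^2+1)/2$, it collapses to $\deg \godd = (uq^{m/2}-(u^2+1)/2)m$. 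Subtracting from $n=q^m-1$ yields the two claimed dimensions.

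Finally, the minimum distance estimate $d \ge 2\delta-1$ is immediate from the BCH bound, since by construction $\godd$ vanishes at the $2\delta-2$ consecutive powers $\beta^{\bar n-(\delta-1)}, \beta^{\bar n-(\delta-2)}, \ldots, \beta^{\bar n+(\delta-2)}$. I do not expect any substantive obstacle here: the genuine combinatorial difficulty has already been absorbed into Proposition~\ref{PIEQEM} (and, underlying it, Remark~\ref{rem-2}), where the characterization of when $C_{\bar n+i}=C_{\bar n-j}$ is derived. What remains in the present proof is routine inclusion--exclusion bookkeeping, with the only points requiring a moment's care being the parity split on $u$ and the slightly stricter bound $u \le q/2$ in the degenerate case $m=2$.
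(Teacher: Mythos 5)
Your proposal is correct and follows exactly the route the paper takes: its proof of this theorem is the one-line statement that the result ``follows from Theorem~\ref{theoremQEME}, Proposition~\ref{PIEQEM}, and the BCH bound,'' and your inclusion--exclusion computation (with the parity split on $u$ and the identity $(u-1)^2/4+(u+1)^2/4=(u^2+1)/2$) is precisely the bookkeeping the authors leave implicit. The arithmetic in both cases checks out against the stated dimensions.
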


\begin{proof}
The desired conclusion follows from Theorem \ref{theoremQEME}, Proposition \ref{PIEQEM}, and the BCH bound.
\end{proof}

\begin{example}
\begin{enumerate}
  \item When $(q, m, u)=(2, 4, 1)$ in the above theorem, the code $C_{(2,15,5,6)}$ has parameters $[15, 3, 5]$, which are the best possible parameters for cyclic codes \cite[pp. 247]{Dingbk15}.
  \item When $(q, m, u)=(4, 4, 1), (4, 4, 2), (4, 4, 3)$ in the above theorem, the code $\codeodd$ has parameters $[255, 195, 17]$, $[255, 135, d \ge 33]$, and $[255, 83, d \ge 49]$, respectively.
\end{enumerate}
\end{example}

\begin{corollary}
When $u=1$ and $\delta=q^{\frac m 2}/2+1$, the true minimum distance of the code $\codeodd$ presented in
Theorem \ref{TheoremCQEME} is equal to $2\delta-1$.
\end{corollary}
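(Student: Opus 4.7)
The plan is to derive this corollary as a direct application of Corollary \ref{corollarygener}, exactly analogous to the companion corollary right after Theorem \ref{TheoremCQOME}. Recall that Theorem \ref{TheoremCQEME} already gives the lower bound $d \ge 2\delta-1$ from the BCH bound, so I only need to exhibit a codeword of Hamming weight $2\delta-1$ in $\codeodd$, and the cleanest way to do this is to verify the hypothesis of Corollary \ref{corollarygener} with designed distance $\delta':=2\delta-1$ and $b = \frac{n+1}{2}-\delta+1$.

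With $u=1$ and $\delta=q^{m/2}/2+1$ (so that $q$ even and $m$ even make $q^{m/2}/2$ an integer), the designed distance is $\delta' = 2\delta-1 = q^{m/2}+1$. Set $Q := q^{m/2}$. First I would verify that $\delta' \mid n$: this is immediate from the factorization
\[
n = q^m-1 = (Q-1)(Q+1) = (Q-1)\,\delta'.
\]
Second, I would verify that $\delta' \mid (b-1)$. Since $q$ is even, $\frac{n+1}{2} = \frac{q^m}{2} = \frac{Q^2}{2}$, so
\[
b-1 = \frac{n+1}{2} - \delta = \frac{Q^2}{2} - \frac{Q}{2} - 1 = \frac{Q^2-Q-2}{2} = \frac{(Q-2)(Q+1)}{2}.
\]
Because $Q = q^{m/2}$ is even, $Q-2$ is even, so $\frac{Q-2}{2}$ is an integer and
\[
b-1 = \frac{Q-2}{2}\cdot(Q+1) = \frac{Q-2}{2}\cdot\delta',
\]
giving $\delta' \mid (b-1)$.

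Combining the two divisibilities yields $\delta' \mid \gcd(n, b-1)$. Corollary \ref{corollarygener} then produces the explicit weight-$\delta'$ codeword $c(x) = (x^n-1)/(x^{n/\delta'}-1) \in \codeodd$, which forces $d \le 2\delta-1$. Together with $d \ge 2\delta-1$ from Theorem \ref{TheoremCQEME}, we conclude $d = 2\delta-1$. There is really no obstacle here: the whole proof is a two-line divisibility check, and the only point worth flagging is that evenness of $q$ is exactly what is needed for $\frac{Q-2}{2}$ to be an integer, without which the argument would fail.
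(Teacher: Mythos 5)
Your proposal is correct and follows exactly the paper's own route: the paper also reduces the claim to checking $(2\delta-1)\mid\gcd(n,b-1)$ with $b=\frac{n+1}{2}-\delta+1$ and then invokes Corollary \ref{corollarygener}, merely leaving the divisibility verification implicit. Your explicit factorizations $n=(Q-1)(Q+1)$ and $b-1=\frac{Q-2}{2}(Q+1)$ with $Q=q^{m/2}$ are exactly the omitted computation, and your observation that evenness of $q$ makes $\frac{Q-2}{2}$ an integer is the right point to flag.
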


\begin{proof}
Note that $b=\frac {n+1} 2-\delta+1$. It is easy to check that $(2 \delta-1) \mid \gcd(n, b-1)$ in this case. The desired result then
follows from Corollary \ref{corollarygener}.
\end{proof}

\emph{C. Parameters of $\codeodd$ with designed distance $q^t-1$, where $1 \le t \le \bar m$}

When $q$ is even, the parameters of the LCD code $\codeodd$ are described in the following theorem if $\codeodd$ has designed distance $2\delta-1=q^t-1$ for an integer $t$ with $1 \le t \le \bar m$.

\begin{theorem} \label{theoremLCDDDQTMO2} Let $q$ be even, $m \ge 2$ and $m \ne 3$.
Suppose $\codeodd$ has designed distance $2\delta-1=q^t-1$, where $1 \le t \le \bar m$. Then $\codeodd$ has length
$n$, dimension $$k=\begin{cases} q^m-1-(q^{\frac {m+1} 2}-q)m & \text{ if } m \ge 5 \text{ is odd and } t=\frac {m+1} 2, \\
q^m-1-(q^t-2)m & \text{ otherwise}, \end{cases}$$ and minimum distance $d \ge q^t-1$.
\end{theorem}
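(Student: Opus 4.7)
The plan is to evaluate $\deg g$, where $g(x) = g_{(q, n, 2\delta-1, \bar n - (\delta-1))}(x)$ and $\delta = q^t/2$; then $k = n - \deg g$, and $d \ge 2\delta - 1 = q^t - 1$ is immediate from the BCH bound. Writing $\deg g = |A \cup B|$ with
\[
A = \bigcup_{l=0}^{\delta-2} C_{\bar n + l}, \qquad B = \bigcup_{l=1}^{\delta-1} C_{\bar n - l},
\]
reduces the task to computing $|A|$, $|B|$, and $|A \cap B|$. Since $q$ is even and $n$ is odd, $2\bar n \equiv 1 \pmod n$, so the map $x \mapsto 2x$ is an automorphism of $\Z_n$ commuting with multiplication by $q$: this is Proposition~\ref{PRQE}. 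I will use it to transport coincidences among $C_{\bar n \pm l}$ into coincidences among the cosets of odd integers of absolute value at most $2\delta - 3 = q^t - 3$.

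First I would evaluate $|A|$ and $|B|$. Since $t \le \bar m$ we have $q^t - 3 < q^{\bar m}$, so by Proposition~\ref{PO} (for $m$ odd) or Proposition~\ref{PE} (for $m$ even) applied with $u = 1$ (where $J_1, J_2, J$ are all empty), every odd integer in $[1, q^t - 3]$ is a coset leader of a $q$-cyclotomic coset of size exactly $m$. The exception $|C_{q^{m/2}+1}| = m/2$ in Proposition~\ref{PE} is not triggered because $q^{m/2}+1 > q^{\bar m} - 3 \ge q^t - 3$. Combined with Proposition~\ref{PRQE}(2)--(3) for distinctness, this gives $|A| = |B| = (\delta - 1)m$.

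Next, a coincidence $C_{\bar n + i} = C_{\bar n - j}$ in $A \cap B$ translates under $x \mapsto 2x$ to $(2i+1) + (2j-1)q^\ell \equiv 0 \pmod n$, i.e., to odd integers $i^* = 2i+1$ and $j^* = 2j-1$ in $[1, q^t - 3]$ satisfying $i^* + j^* q^\ell \equiv 0 \pmod n$. Such pairs are characterized by Remark~\ref{rem-1} (for $m \ge 5$ odd) or Remark~\ref{rem-2} (for $m \ge 2$ even): at least one of $i^*, j^*$ must lie in the paper's set $\mathcal J_O$ or $\mathcal J_E$, and the other is then forced by explicit digit relations. A direct computation for $u = 1$ gives $\min \mathcal J_O = q^{\bar m-1} - 1$ and $\mathcal J_E = \{q^{\bar m} - 1\}$. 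Hence if $t \le \bar m - 1$, no element of $\mathcal J_O$ lies in $[1, q^t - 3]$, so $A \cap B = \emptyset$; and if $t = \bar m$ with $m$ even, the singleton $\mathcal J_E = \{q^{\bar m} - 1\}$ still lies outside $[1, q^{\bar m} - 3]$, so again $A \cap B = \emptyset$. In both sub-cases $\deg g = 2(\delta - 1)m = (q^t - 2)m$, matching the ``otherwise'' branch of the theorem.

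The remaining and genuinely delicate case is $m \ge 5$ odd with $t = \bar m$. Here $\tilde{\mathcal J}_O$ has $q/2$ elements parameterized by the even digit $l_{\bar m-1} \in \{0, 2, \ldots, q-2\}$, and each $l \in \tilde{\mathcal J}_O$ nominally contributes two candidate pairs via the two flavors of Remark~\ref{rem-1} (one with $i^* = l$, one with $j^* = l$). Substituting the digits of $l$ into the relations shows that the Flavor~A partner has $j^* = q^{\bar m} - 1 - l_{\bar m-1}$, which lies in our range $[1, q^{\bar m} - 3]$ precisely when $l_{\bar m-1} \ne 0$; by symmetry the same cutoff applies to Flavor~B. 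Thus the smallest element $l_{\min} = q^{\bar m-1} - 1$ (at $l_{\bar m-1} = 0$) produces no in-range pair, while each of the other $q/2 - 1$ elements of $\tilde{\mathcal J}_O$ contributes two, for a total of $q - 2$ pairs. The main obstacle is to check, using the disjointness assertion of Proposition~\ref{PIEQOM}, that these $q - 2$ pairs correspond to $q - 2$ distinct cosets in $A \cap B$. Granting this, $|A \cap B| = (q-2)m$ and $\deg g = 2(\delta-1)m - (q-2)m = (q^{\bar m} - q)m$, which matches the first branch of the theorem.
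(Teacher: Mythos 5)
Your proposal is correct and follows essentially the same route as the paper: split the zero set into the two one-sided unions, get $|A|=|B|=(\delta-1)m$ from Lemma~\ref{AKSYF}/Propositions~\ref{PO} and \ref{PE} with $u=1$, and compute the overlap via the characterization in Remarks~\ref{rem-1}/\ref{rem-2} (equivalently Propositions~\ref{PIEQOM}/\ref{PIEQEM}), with the key observation in both arguments being that the $l_{\bar m-1}=0$ element of $\tilde{\mathcal J}_O$ drops out because its partner $q^{\bar m}-1$ falls just outside the designed-distance window, leaving $(q-2)m$ for the intersection when $m\ge 5$ is odd and $t=\bar m$. The only difference is that you spell out the partner computation $j^*=q^{\bar m}-1-l_{\bar m-1}$ and the resulting cutoff explicitly, which the paper compresses into the definition of its set $J'$.
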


\begin{proof}
Set $\delta=\frac {q^t} 2$. Recall that the generator polynomial of the code $\codeodd$ is $g_{(q, n, 2\delta-1, \frac {n+1} 2-(\delta-1))}(x)$. By Lemma~\ref{AKSYF}, we have
$$\deg(g_{(q, n, 2\delta-1, \frac {n+1} 2-(\delta-1))}(x))=
(q^t-2)m-\Big|\Big(\bigcup_{1 \le j \le \delta-1}C_{\bar n+j}\Big) \bigcap \Big(\bigcup_{0 \le j \le \delta-2}C_{\bar n-j}\Big)\Big|.$$

When $m \ge 5$ is odd, the integers $1 \le i_1,j_1 \le uq^{\bar m}$ satisfying
$$
i_1+j_1q^{\ell} \equiv 0 \pmod n
$$
have been characterized in Remark~\ref{rem-1}. Using this result, we can show that
$$\Big(\bigcup_{0 \le j \le \delta-2}C_{\bar n+j}\Big) \bigcap \Big(\bigcup_{1 \le j \le \delta-1}C_{\bar n-j}\Big)=\begin{cases}
 \bigcup_{j \in J'}(C_{\bar n+(j-1)/2} \cup C_{\bar n-(j+1)/2}) & \text{ if } t=\frac {m+1} 2, \\
\emptyset & \text{ if } 1 \le t \le \frac {m-1} 2, \end{cases}$$
where $J'=\{j_{\bar m-1}q^{\bar m-1}+q^{\bar m-1}-1: 2 \le \text{ even } j_{\bar m-1} \le q-2\}$.

When $m \ge 2$ is even, by Proposition 29, for $ 1 \le t \le \frac m 2$, we have
$$\Big(\bigcup_{0 \le j \le \delta-2}C_{\bar n+j}\Big) \bigcap \Big(\bigcup_{1 \le j \le \delta-1}C_{\bar n-j}\Big)=\emptyset.$$

Therefore, we have
$$\Big|\Big(\bigcup_{0 \le j \le \delta-2}C_{\bar n+j}\Big) \bigcap \Big(\bigcup_{1 \le j \le \delta-1}C_{\bar n-j}\Big)\Big|=\begin{cases} (\frac q 2-1)2m & \text{ if } m \ge 5 \text{ is odd and } t=\frac {m+1} 2, \\
0 & \text{ otherwise}. \end{cases}$$
Thus, the dimension is obtained. Moreover, the minimum distance $d \ge q^t-1$ follows from the BCH bound.
\end{proof}

We remark that the minimum distance of the code $\codeodd$ given in Theorem \ref{theoremLCDDDQTMO2} may be larger than $q^t-1$.

\begin{example}
\begin{enumerate}
  \item When $(q, m, t)=(2, 7, 2),(2, 7, 3),(2, 7, 4)$ in the above theorem, $\codeodd$ has parameters $[127, 113, 5]$, $[127, 85, 11]$ and $[127, 29, 37]$ with designed distance $3$, $7$, and $15$, respectively.
  \item When $(q, m, t)=(2, 6, 2), (2, 6, 3)$ in the above theorem, the code $\codeodd$ has parameters $[63, 51, 3]$ and $[63, 27, 7]$.
\end{enumerate}
\end{example}

\section{Parameters of LCD BCH code $\mathcal C_{(q, n, 2\delta, n-\delta+1)}$}

In this section, we investigate the parameters of the LCD BCH code $\mathcal C_{(q, n, 2\delta, n-\delta+1)}$.

\subsection{The dimension of $\ocode$ when $\de$ is relatively small}

Every positive integer $s$ with $0 \le s \le n$ has a unique $q$-ary expansion as $s=\sum_{i=0}^{m-1}s_iq^i$, where $0 \le s_i \le q-1$. The $q$-ary expansion sequence of $s=\sum_{i=0}^{m-1}s_iq^i$ is denoted by $\ol{s}=(s_{m-1},s_{m-2},\ldots,s_0)$. Below, we simply call the $q$-ary expansion sequence of $s$ as the sequence of $s$, whenever this causes no confusion. The weight of $\ol{s}$ is defined to be the number of nonzero entries among $\ol{s}$ and denoted by $wt(\ol{s})$. Define the support of $\ol{s}$ as
$$
\supp(\ol{s})=\{0 \le i \le m-1 \mid s_i \ne 0\}.
$$

\begin{lemma}\label{lem-cyccoset2}
Let $m \ge 2$. Then the following holds.
\begin{itemize}
\item[1)] When $m$ is odd, for $1 \le i,j \le q^{(m+1)/2}$, $-j \in C_i$ if and only if
$$
\ol{i}=(0,\ldots,0,\underset{\frac{m-1}{2}}{q-1},\ldots,\underset{1}{q-1},u),\quad \ol{j}=(0,\ldots,0,\underset{\frac{m-1}{2}}{q-1-u},\underset{\frac{m-3}{2}}{q-1},\ldots,q-1),
$$
or
$$
\ol{i}=(0,\ldots,0,\underset{\frac{m-1}{2}}{q-1-u},\underset{\frac{m-3}{2}}{q-1},\ldots,q-1), \quad
\ol{j}=(0,\ldots,0,\underset{\frac{m-1}{2}}{q-1},\ldots,\underset{1}{q-1},u),
$$
where $0 \le u \le q-1$.

\item[2)] When $m$ is even and $q>2$, for $1 \le i,j \le 2q^{m/2}$, $-j \in C_i$ if and only if
$$
\ol{i}=(0,\ldots,0,\underset{\frac{m}{2}}{1},\underset{\frac{m}{2}-1}{q-1},\ldots,q-1,q-2), \quad
\ol{j}=(0,\ldots,0,\underset{\frac{m}{2}}{1},\underset{\frac{m}{2}-1}{q-1},\ldots,q-1,q-2),
$$
or
$$
\ol{i}=(0,\ldots,0,\underset{\frac{m}{2}-1}{q-1},\ldots,\underset{1}{q-1},q-2), \quad
\ol{j}=(0,\ldots,0,\underset{\frac{m}{2}}{1},\underset{\frac{m}{2}-1}{q-1},\ldots,q-1),
$$
or
$$
\ol{i}=(0,\ldots,0,\underset{\frac{m}{2}}{1},\underset{\frac{m}{2}-1}{q-1},\ldots,q-1), \quad
\ol{j}=(0,\ldots,0,\underset{\frac{m}{2}-1}{q-1},\ldots,\underset{1}{q-1},q-2),
$$
or
$$
\ol{i}=(0,\ldots,0,\underset{\frac{m}{2}-1}{q-1},\ldots,q-1), \quad
\ol{j}=(0,\ldots,0,\underset{\frac{m}{2}-1}{q-1},\ldots,q-1).
$$
\item[3)] When $m$ is even and $q=2$, for $1 \le i,j \le 2^{(m/2)+1}$, $-j \in C_i$ if and only if
$$
\ol{i}=(0,\ldots,0,\underset{\frac{m}{2}-2}{1},\ldots,1), \quad
\ol{j}=(0,\ldots,0,\underset{\frac{m}{2}}{1},\ldots,1),
$$
or
$$
\ol{i}=(0,\ldots,0,\underset{\frac{m}{2}}{1},\ldots,1), \quad
\ol{j}=(0,\ldots,0,\underset{\frac{m}{2}-2}{1},\ldots,1),
$$
or
$$
\ol{i}=(0,\ldots,0,\underset{\frac{m}{2}-1}{1},\ldots,1), \quad
\ol{j}=(0,\ldots,0,\underset{\frac{m}{2}-1}{1},\ldots,1),
$$
or
$$
\ol{i}=(0,\ldots,0,\underset{\frac{m}{2}}{1},0,1,\ldots,1), \quad
\ol{j}=(0,\ldots,0,\underset{\frac{m}{2}}{1},\ldots,1,0,1),
$$
or
$$
\ol{i}=(0,\ldots,0,\underset{\frac{m}{2}}{1},\ldots,1,0,1), \quad
\ol{j}=(0,\ldots,0,\underset{\frac{m}{2}}{1},0,1,\ldots,1).
$$
\end{itemize}
\end{lemma}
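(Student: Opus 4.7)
The plan is to convert $-j\in C_i$ into a clean digit-wise statement on the $q$-ary expansions $\bar i$ and $\bar j$, and then do a short case analysis on the shift parameter.

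First I would reformulate the condition. By the definition of $C_i$, the statement $-j\in C_i$ is equivalent to the existence of $\ell\in\{0,1,\ldots,m-1\}$ with $(iq^{\ell}\bmod n)+j\equiv 0\pmod{n}$. Since both summands lie in $[0,n-1]$ and $j\ge 1$, their sum lies in $[1,2n-2]$, so the congruence forces
\[
(iq^{\ell}\bmod n)+j \;=\; n \;=\; q^{m}-1.
\]
Next, multiplication by $q^{\ell}$ modulo $q^{m}-1$ acts on the length-$m$ $q$-ary expansion as a cyclic left-shift by $\ell$ positions; that is, the digit of $iq^{\ell}\bmod n$ at position $k$ equals $i_{(k-\ell)\bmod m}$. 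Moreover, writing the equality above in base $q$, every digit of the right-hand side is $q-1$; since each pair of single-digit summands is at most $2(q-1)<2q$, no carry can ever propagate. Hence the equation decouples digit by digit and is equivalent to
\[
j_{k}\;=\;(q-1)-i_{(k-\ell)\bmod m},\qquad 0\le k\le m-1.
\]
In other words, the cyclic $\ell$-shift of $\bar i$ and the sequence $\bar j$ must be pointwise $(q-1)$-complementary.

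With this reformulation in hand, I would exploit the size bounds on $i$ and $j$ to confine $\supp(\bar i)$ and $\supp(\bar j)$ to a short window of low-order positions (namely positions $\{0,1,\ldots,\lceil m/2\rceil\}$, possibly with a constrained top digit under the bound $2q^{m/2}$ or $2^{m/2+1}$). In the complement relation, any position outside the shifted support of $\bar i$ contributes digit $q-1$ to $\bar j$, so such positions must lie inside $\supp(\bar j)$. Since $\supp(\bar j)$ is itself a short low-order window, this pins down $\ell$ to lie within a very small set of values around $\lceil m/2\rceil$. For each such admissible $\ell$, the complement relation then dictates that all interior digits of $\bar i$ (in the window) take the value $q-1$, while the two boundary digits are the only free parameters $u$ and $q-1-u$. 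Swapping $(i,j)$ corresponds to replacing $\ell$ by $m-\ell$, yielding the two (or more) symmetric alternatives.

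The last step is a direct case analysis on $\ell$, reading off the explicit shapes of $\bar i,\bar j$ listed in items 1)-3). I expect the main obstacle to be item 3), where $q=2$: digits are binary, so complementation is totally rigid and $\mathrm{wt}(\bar j)=m-\mathrm{wt}(\bar i)$, which tends to be large and collides with the bound $j\le 2^{m/2+1}$; several distinct shifts $\ell$ become simultaneously admissible at the extremal boundary, producing the five listed alternatives rather than two. Item 2) demands comparable care at position $m/2$, since the leading digit $i_{m/2}\in\{0,1,2\}$ can interact with the complement constraint at the middle position in four consistent ways. Once these boundary configurations are enumerated exhaustively for each permissible $\ell$, the three cases of the lemma follow.
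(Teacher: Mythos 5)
Your proposal is correct and follows essentially the same route as the paper's proof: both reduce $-j\in C_i$ to the identity $(iq^{\ell}\bmod n)+j=q^m-1$, observe that this forces the cyclically shifted digit sequence of $i$ and the digit sequence of $j$ to be carry-free, position-wise $(q-1)$-complements of each other, and then enumerate the few configurations compatible with $\supp(\ol{i})$ and $\supp(\ol{j})$ lying in a low-order window. The paper organizes this enumeration via the subadditivity bound $m=wt(\ol{q^{\ell}i+j})\le wt(\ol{i})+wt(\ol{j})$ and the size of $\supp(\ol{q^{\ell}i})\cap\supp(\ol{j})$ rather than by pinning down $\ell$ first, but the content is the same, and the boundary subtleties you flag (the middle digit in part 2, the rigidity of binary complementation in part 3) are exactly where the paper's case distinctions occur.
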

\begin{proof}
1) If $-j \in C_i$, then there exists an $l$ with $0 \le l \le m-1$, such that $q^li+j\equiv 0\pmod{n}$. Hence,
$$
\ol{q^li+j}=(q-1,q-1,\ldots,q-1).
$$
Since $m=wt(\ol{q^li+j})\le wt(\ol{i})+wt(\ol{j}) \le m+1$, we have
$\{wt(\ol{i}),wt(\ol{j})\}=\{\frac{m-1}{2},\frac{m+1}{2}\}$ or $\{wt(\ol{i}),wt(\ol{j})\}=\{\frac{m+1}{2}\}.$
If $\{wt(\ol{i}),wt(\ol{j})\}=\{\frac{m-1}{2},\frac{m+1}{2}\}$, then clearly, $\supp(\ol{q^li}) \cap \supp(\ol{j})=\emptyset$. Otherwise, if $\supp(\ol{q^li}) \cap \supp(\ol{j}) \ne \emptyset$, there is at least one entry in $\ol{q^li+j}$, which is not $q-1$. Hence, $\ol{q^li}$ and $\ol{j}$ must have the following two forms
$$
\ol{q^li}=(\underset{m-1}{q-1},\ldots,\underset{\frac{m+1}{2}}{q-1},0,\ldots,0),\quad
\ol{j}=(0,\ldots,0,\underset{\frac{m-1}{2}}{q-1},\ldots,q-1),
$$
or
$$
\ol{q^li}=(\underset{m-1}{q-1},\ldots,\underset{\frac{m-1}{2}}{q-1},0,\ldots,0),\quad
\ol{j}=(0,\ldots,0,\underset{\frac{m-3}{2}}{q-1},\ldots,q-1),
$$
If $\{wt(\ol{i}),wt(\ol{j})\}=\{\frac{m+1}{2}\}$, then clearly, $|\supp(\ol{q^li}) \cap \supp(\ol{j})|\ge 1$. If $|\supp(\ol{q^li}) \cap \supp(\ol{j})|>1$, then there is at least one entry in $\ol{q^li+j}$, which is not $q-1$. Hence, $|\supp(\ol{q^li}) \cap \supp(\ol{j})|=1$. Therefore, $\ol{q^li}$ and $\ol{j}$ must have the following $2q-4$ forms
$$
\ol{q^li}=(\underset{m-1}{q-1},\ldots,\underset{\frac{m+1}{2}}{q-1},u,0,\ldots,0), \quad
\ol{j}=(0,\ldots,0,\underset{\frac{m-1}{2}}{q-1-u},\underset{\frac{m-3}{2}}{q-1},\ldots,q-1),
$$
or
$$
\ol{q^li}=(\underset{m-1}{q-1},\ldots,\underset{\frac{m+1}{2}}{q-1},\underset{\frac{m-1}{2}}{0},\ldots,0,u), \quad
\ol{j}=(0,\ldots,0,\underset{\frac{m-1}{2}}{q-1},\ldots,q-1,q-1-u),
$$
where $1 \le u \le q-2$. Therefore, the conclusion follows.

2) If $-j \in C_i$, then there exists an $l$ with $0 \le l \le m-1$, such that $q^li+j\equiv 0\pmod{n}$. Hence,
$$
\ol{q^li+j}=(q-1,q-1,\ldots,q-1).
$$
Since $m=wt(\ol{q^li+j})\le wt(\ol{i})+wt(\ol{j}) \le m+2$, we must have
$$
\{wt(\ol{i}),wt(\ol{j})\}=\begin{cases}
\{\frac{m}{2}+1\}, \mbox{ or} \\
\{\frac{m}{2},\frac{m}{2}+1\}, \mbox{ or} \\
\{\frac{m}{2}-1,\frac{m}{2}+1\}, \mbox{ or} \\
\{\frac{m}{2}\}.
\end{cases}
$$
If $\{wt(\ol{i}),wt(\ol{j})\}=\{\frac{m}{2}+1\}$, then $|\supp(\ol{q^li}) \cap \supp(\ol{j})|=2$. Hence, $\ol{q^li}$ and $\ol{j}$ must have the following form
$$
\ol{q^li}=(\underset{m-1}{q-1},\ldots,\underset{\frac{m}{2}+1}{q-1},\underset{\frac{m}{2}}{q-2},\underset{\frac{m}{2}-1}{0},\ldots,0,1),\quad
\ol{j}=(0,\ldots,0,\underset{\frac{m}{2}}{1},\underset{\frac{m}{2}-1}{q-1},\ldots,q-1,q-2).
$$
If $\{wt(\ol{i}),wt(\ol{j})\}=\{\frac{m}{2}, \frac{m}{2}+1\}$, then $|\supp(\ol{q^li}) \cap \supp(\ol{j})|=1$. Hence, $\ol{q^li}$ and $\ol{j}$ must have the following two forms
$$
\ol{q^li}=(\underset{m-1}{q-1},\ldots,\underset{\frac{m}{2}+1}{q-1},\underset{\frac{m}{2}}{q-2},0,\ldots,0), \quad
\ol{j}=(0,\ldots,0,\underset{\frac{m}{2}}{1},\underset{\frac{m}{2}-1}{q-1},\ldots,q-1),
$$
or
$$
\ol{q^li}=(\underset{m-1}{q-1},\ldots,\underset{\frac{m}{2}}{q-1},\underset{\frac{m}{2}-1}{0},\ldots,0,1), \quad
\ol{j}=(0,\ldots,0,\underset{\frac{m}{2}-1}{q-1},\ldots,\underset{1}{q-1},q-2).
$$
If $\{wt(\ol{i}),wt(\ol{j})\}=\{\frac{m}{2}-1, \frac{m}{2}+1\}$, then $|\supp(\ol{q^li}) \cap \supp(\ol{j})|=0$. Hence, there is at least one entry in $\ol{(q^li+j)}$ which is not equal to $q-1$.
If $\{wt(\ol{i}),wt(\ol{j})\}=\{\frac{m}{2}\}$, then $|\supp(\ol{q^li}) \cap \supp(\ol{j})|=0$. Hence, $\ol{q^li}$ and $\ol{j}$ must have the following form
$$
\ol{i}=(\underset{m-1}{q-1},\ldots,\underset{\frac{m}{2}}{q-1},0,\ldots,0), \quad
\ol{j}=(0,\ldots,0,\underset{\frac{m}{2}-1}{q-1},\ldots,q-1).
$$
Therefore, the conclusion follows.

3) The proof is similar to that of 2) and is omitted here.
\end{proof}

As a consequence, we have the following proposition.

\begin{proposition}\label{prop-cyccoset}
Let $m \ge 2$.
\begin{itemize}
\item[1)] Suppose $m$ is odd. Then
\begin{align*}
&|\{(cl(i),cl(j)) \mid -j \in C_i, 1\le i,j\le l\}| \\
&= \begin{cases}
0 & \mbox{if $1 \le l \le q^{(m+1)/2}-q$}, \\
2h & \mbox{if $l=q^{(m+1)/2}-q+h$, $1 \le h \le q-2$}, \\
2(q-1) & \mbox{if $q^{(m+1)/2}-1\le l\le q^{(m+1)/2}$}.
\end{cases}
\end{align*}
\item[2)] Suppose $m$ is even and $q>2$. Then
\begin{align*}
&|\{(cl(i),cl(j)) \mid -j \in C_i, 1\le i,j\le l\}| \\
&= \begin{cases}
0 & \mbox{if $1 \le l \le q^{m/2}-2$}, \\
1 & \mbox{if  $q^{m/2}-1\le l\le 2q^{m/2}-3$}, \\
2 & \mbox{if $l=2q^{m/2}-2$}, \\
4 & \mbox{if $2q^{m/2}-1 \le l \le 2q^{m/2}$}.
\end{cases}
\end{align*}
\item[3)] Suppose $m \ge 4$ is even and $q=2$. Then
\begin{align*}
&|\{(cl(i),cl(j)) \mid -j \in C_i, 1\le i,j\le l\}| \\
&= \begin{cases}
0 & \mbox{if $1 \le l \le 2^{m/2}-2$}, \\
1 & \mbox{if $2^{m/2}-1 \le l \le 2^{(m/2)+1}-4$}, \\
3 & \mbox{if $2^{(m/2)+1}-3 \le l \le 2^{(m/2)+1}-2$}, \\
5 & \mbox{if $2^{(m/2)+1}-1 \le l \le 2^{(m/2)+1}$}.
\end{cases}
\end{align*}
\end{itemize}
\end{proposition}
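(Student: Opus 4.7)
The overall strategy is uniform across the three cases: apply Lemma~\ref{lem-cyccoset2} to enumerate the pairs $(i,j)$ in the stated range with $-j\in C_i$, evaluate each $q$-ary expansion to a numerical value, pass to coset leaders via Lemma~\ref{AKSYF}, and then sort the resulting thresholds $\max(cl(i),cl(j))$ to read off the piecewise count of distinct coset-leader pairs.

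For Part~1 ($m$ odd), the first step is to evaluate the two patterns of Lemma~\ref{lem-cyccoset2}(1): for each $0\le u\le q-1$ they produce the unordered pair $\{q^{(m+1)/2}-q+u,\,(q-u)q^{(m-1)/2}-1\}$ together with its reverse as a separate ordered pair. By Lemma~\ref{AKSYF}(1) every entry in $[1,q^{(m+1)/2}]$ coprime to $q$ is its own coset leader; the only exception is $q^{(m+1)/2}-q$ at $u=0$, whose $q$-ary expansion $(0,\ldots,0,q-1,\ldots,q-1,0)$ cyclically shifts down to $(0,\ldots,0,q-1,\ldots,q-1)$, giving $cl(q^{(m+1)/2}-q)=q^{(m-1)/2}-1$. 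After this replacement, the pair at $u=0$ coincides (as an unordered pair of coset leaders) with the pair at $u=q-1$, leaving $q-1$ unordered pairs and hence $2(q-1)$ ordered ones. A direct comparison $u(q^{(m-1)/2}+1)>q-1$, valid whenever $u\ge 1$ and $m\ge 5$, shows that the threshold for the pair indexed by $1\le u\le q-1$ is $q^{(m+1)/2}-q+u$; substituting $l=q^{(m+1)/2}-q+h$ for $1\le h\le q-2$ then yields exactly $2h$ visible pairs, and the final two appear precisely when $l\ge q^{(m+1)/2}-1$.

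Parts~2 and 3 follow the same recipe without the coincidence step. Evaluating the four (resp.\ five) patterns in Lemma~\ref{lem-cyccoset2}(2)--(3) produces coset-leader pairs directly, because each listed entry lies in $[1,2q^{m/2}]$ and is coprime to $q$ under the stated hypotheses ($q>2$ in Part~2 and $m\ge 4$ in Part~3), so Lemma~\ref{AKSYF}(2) guarantees it is its own coset leader. In Part~2 the four pairs have maxima $q^{m/2}-1$, $2q^{m/2}-2$, $2q^{m/2}-1$, $2q^{m/2}-1$, and sorting these thresholds reproduces the four-step piecewise formula in the statement. In Part~3 the five pairs have maxima $2^{m/2}-1$, $2^{m/2+1}-3$, $2^{m/2+1}-3$, $2^{m/2+1}-1$, $2^{m/2+1}-1$, yielding the five-step formula.

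The main obstacle will be the coincidence analysis in Part~1: since Lemma~\ref{lem-cyccoset2}(1) is phrased via $q$-ary expansions rather than coset leaders, the endpoint parameters $u=0$ and $u=q-1$ describe the same unordered coset-leader pair, and failing to notice this would produce the wrong count $2q$ at the top range. Verifying the identification reduces to the single cyclic-shift calculation above together with a check that no further collapse occurs for $1\le u\le q-2$, the latter following from the strict monotonicity of $q^{(m+1)/2}-q+u$ in $u$. Parts~2 and 3 present no such pitfall since every listed value is immediately a coset leader, so once the expansions are evaluated the argument is purely arithmetical.
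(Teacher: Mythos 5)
Your strategy is exactly the one the paper intends: Proposition~\ref{prop-cyccoset} is stated in the paper with no written proof at all (it is introduced by ``as a consequence'' of Lemma~\ref{lem-cyccoset2}), and your derivation --- evaluate the $q$-ary patterns of the lemma, pass to coset leaders via Lemma~\ref{AKSYF}, and sort the thresholds $\max(cl(i),cl(j))$ --- is the natural filling-in of that gap. Your evaluations of the patterns are correct ($i=q^{(m+1)/2}-q+u$, $j=(q-u)q^{(m-1)/2}-1$ in Part~1, etc.), and you correctly isolate the one genuinely delicate point in Part~1: $cl(q^{(m+1)/2}-q)=q^{(m-1)/2}-1$, which makes the $u=0$ and $u=q-1$ parameters produce the same unordered coset-leader pair and collapses the naive count $2q$ down to $2(q-1)$. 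Two remarks. First, a cosmetic one: your monotonicity inequality $u(q^{(m-1)/2}+1)>q-1$ is asserted only for $m\ge 5$, but Part~1 also covers $m=3$; the inequality holds there as well ($q+1>q-1$), so you should state it for all odd $m\ge 3$.

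Second, a substantive gap in Part~3 that you share with the paper: your argument tacitly assumes the five listed pairs are distinct, and this fails at $m=4$. There the fourth and fifth patterns of Lemma~\ref{lem-cyccoset2}(3) both evaluate to $\ol{i}=\ol{j}=(0,1,0,1)$, i.e.\ to the single pair $(5,5)$, so only four distinct coset-leader pairs exist. A direct check with $n=15$ confirms this: the cosets are $C_1=\{1,2,4,8\}$, $C_3=\{3,6,9,12\}$, $C_5=\{5,10\}$, $C_7=\{7,11,13,14\}$, and the coset-leader pairs with $-j\in C_i$ and $i,j\le 8$ are exactly $(1,7),(7,1),(3,3),(5,5)$; hence the counts in the last two ranges are $2$ and $4$, not $3$ and $5$. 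So the statement itself is wrong for $(q,m)=(2,4)$, and your proof cannot (and does not) establish it there; for $m\ge 6$ the five pairs are distinct ($3\cdot 2^{m/2-1}-1<2^{m/2+1}-3$) and your argument goes through. A careful write-up should either restrict Part~3 to $m\ge 6$ or treat $m=4$ separately with the corrected counts.
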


Combining Lemma~\ref{AKSYF} and Proposition~\ref{prop-cyccoset}, we have the following theorem.

\begin{theorem}\label{thm-gene}
Let $m \ge 2$. Let $\de$ be an integer satisfying
$$
\begin{cases}
2 \le \de \le q^{(m+1)/2}+1 & \mbox{if $m$ is odd,} \\
2 \le \de \le 2q^{m/2}+1 & \mbox{if $m$ is even.}
\end{cases}
$$
Let $\delta_q$ and $\delta_0$ be the unique integers such that $\delta-1 =\delta_q q+\delta_0$, where $0\le\delta_0<q$. Then $\ocode$ has parameters $[q^m-1,k,d \ge 2\de]$, in which the dimension $k$ is given below.
\begin{itemize}
\item[1)] When $m$ is odd,
$$
k=\begin{cases}
q^m-2-2m(\de_q(q-1)+\de_0) & \mbox{if $\de\le q^{(m+1)/2}-q$,} \\
q^m-2-2m(q^{(m-1)/2}-1)(q-1) & \mbox{if $q^{(m+1)/2}-q+1\le\de\le q^{(m+1)/2}+1$.}
\end{cases}
$$
\item[2)] When $m$ is even and $q>2$,
$$
k=\begin{cases}
q^m-2-2m(\de_q(q-1)+\de_0) & \mbox{if $\de\le q^{m/2}-1$,} \\
q^m-2-2m(\de_q(q-1)+\de_0-\frac{1}{2}) & \mbox{if $q^{m/2}\le\de\le q^{m/2}+1$,} \\
q^m-2-2m(\de_q(q-1)+\de_0-1) & \mbox{if $q^{m/2}+2\le\de\le 2q^{m/2}-2$,} \\
q^m-2-2m(\de_q(q-1)+\de_0-\frac{3}{2}) & \mbox{if $\de=2q^{m/2}-1$,} \\
q^m-2-2m(\de_q(q-1)+\de_0-\frac{5}{2}) & \mbox{if $2q^{m/2}\le\de\le 2q^{m/2}+1$.}
\end{cases}
$$
\item[3)] When $m \ge 4$ is even and $q=2$,
$$
k=\begin{cases}
2^m-2-2m(\de_q+\de_0) & \mbox{if $\de\le 2^{m/2}-1$ and $m \ge 4$,} \\
2^m-2-2m(\de_q+\de_0-\frac{1}{2}) & \mbox{if $2^{m/2}\le\de\le 2^{m/2}+1$ and $m \ge 4$,} \\
2^m-2-2m(\de_q+\de_0-1) & \mbox{if $2^{m/2}+2\le\de\le2^{(m/2)+1}-3$ and $m \ge 4$,} \\
2^m-2-2m(\de_q+\de_0-2) & \mbox{if $2^{(m/2)+1}-2\le\de\le2^{(m/2)+1}-1$ and $m \ge 6$,} \\
2^m-2-2m(\de_q+\de_0-3) & \mbox{if $2^{(m/2)+1}\le\de\le 2^{(m/2)+1}+1$ and $m \ge 6$.}
\end{cases}
$$
\end{itemize}
In addition, the minimum distance $d$ of the code satisfies that $d\ge2\de$.
\end{theorem}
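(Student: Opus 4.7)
The plan is to compute the dimension by determining the size of the defining set of $\ocode$. Recall $\gx = (x-1)\gt$ with
$$
\gt = \lcm\bigl(g_{(q,n,\delta,1)}(x),\, g_{(q,n,\delta,n-\delta+1)}(x)\bigr),
$$
so the defining set is
$$
T = \{0\} \cup \Bigl(\bigcup_{i=1}^{\delta-1} C_i\Bigr) \cup \Bigl(\bigcup_{i=1}^{\delta-1} C_{-i}\Bigr),
$$
and the dimension is $k = q^m - 1 - |T|$. The minimum distance bound $d \ge 2\delta$ is immediate from the BCH bound, since $T$ contains $2\delta - 1$ consecutive integers modulo $n$ centered at $0$.

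First I would compute $|U| := |\bigcup_{i=1}^{\delta-1} C_i|$. By Lemma~\ref{AKSYF}, every $j$ in the relevant range with $q \nmid j$ is a coset leader, and $|C_j| = m$ except that when $m$ is even one has $|C_{q^{m/2}+1}| = m/2$. Writing $\delta - 1 = \delta_q q + \delta_0$ with $0 \le \delta_0 < q$, the number of integers in $\{1,\dots,\delta-1\}$ not divisible by $q$ equals $\delta_q(q-1) + \delta_0$, which gives $|U| = m\,\bigl(\delta_q(q-1)+\delta_0\bigr)$ in the generic case; when $\delta - 1$ reaches or passes $q^{m/2}+1$ (for $m$ even) one subtracts $m/2$ to account for the short coset. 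By the symmetry $|C_{-i}| = |C_i|$, we also get $|\bigcup_{i=1}^{\delta-1} C_{-i}| = |U|$.

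Next I would compute the overlap $W := \bigl(\bigcup_{i=1}^{\delta-1} C_i\bigr) \cap \bigl(\bigcup_{i=1}^{\delta-1} C_{-i}\bigr)$. A coset $C_i$ lies in $W$ precisely when there exists $j \in \{1,\dots,\delta-1\}$ with $-j \in C_i$; the number of such cosets (equivalently, of ordered coset-leader pairs $(cl(i),cl(j))$) is exactly what Proposition~\ref{prop-cyccoset} computes as a step function of $l = \delta - 1$. Lemma~\ref{lem-cyccoset2} identifies the $q$-adic shape of the participating $i$'s, which lets us read off the size of each contributing coset: in the odd-$m$ case all such cosets have size $m$, while in the even-$m$ case the self-symmetric coset $C_{q^{m/2}-1}$ (which triggers the ``$1$'' in Proposition~\ref{prop-cyccoset}(2) at $l = q^{m/2}-1$) has size $m$, whereas in the binary even case the short coset $C_{2^{m/2-1}+\cdots+1}$ of size $m/2$ must be tracked separately. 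Putting these together yields $|W|$ as an explicit piecewise expression.

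Finally, inclusion-exclusion gives $|T| = 1 + 2|U| - |W|$, so $k = q^m - 2 - 2|U| + |W|$. The remaining work is to match the resulting piecewise formula against the cases in the theorem statement for each of the three regimes (odd $m$; even $m$ with $q>2$; even $m$ with $q=2$), carefully synchronizing the thresholds of $\delta$ at which new coset collisions appear (Proposition~\ref{prop-cyccoset}) with the thresholds at which the short coset enters $U$. The main obstacle is precisely this bookkeeping in the even case: the short coset $C_{q^{m/2}+1}$ contributes $m/2$ rather than $m$ to $|U|$, and some of the colliding cosets detected by Proposition~\ref{prop-cyccoset} likewise have reduced size (notably in the binary case), so the correct ``half-integer'' adjustments $-\tfrac12, -\tfrac32, -\tfrac52$ appearing in the statement must be assembled by combining both corrections in a window-by-window fashion. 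Once the accounting is done, all the stated formulas for $k$ fall out.
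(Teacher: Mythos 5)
Your proposal is correct and follows essentially the same route as the paper: inclusion--exclusion on the defining set, with $|U|$ computed from Lemma~\ref{AKSYF} (including the $m/2$-correction for the short coset $C_{q^{m/2}+1}$ when $\delta\ge q^{m/2}+2$) and the overlap read off from Proposition~\ref{prop-cyccoset} and Lemma~\ref{lem-cyccoset2}, then the BCH bound for $d\ge 2\delta$. The only slip is the parenthetical naming of the short binary coset as $C_{2^{m/2-1}+\cdots+1}=C_{2^{m/2}-1}$ (which has size $m$); the size-$m/2$ coset is $C_{2^{m/2}+1}$, and in fact none of the colliding cosets in the binary even case is short, so the bookkeeping there is simpler than you anticipate.
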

\begin{proof}
Let $\overline{g}_{(q,m,\delta)}(x)$ be the generator polynomial of $\ocode$. For the dimension of the code, we only prove 2) since the proofs of 1) and 3) are similar. By 2) of Lemma~\ref{AKSYF}, the degree of $\ol{g}_{(q,m,\delta)}(x)$ equals
$$
1+2m(\de_q(q-1)+\de_0)-\ep m-|\{(cl(i),cl(j))\mid -j\in C_i, 1 \le i,j\le \de-1\}|m,
$$
where
$$
\ep=\begin{cases}
0 & \mbox{if $\de \le q^{m/2}+1$}, \\
1 & \mbox{if $\de \ge q^{m/2}+2$}.
\end{cases}
$$
With this conclusion on the degree of the generator polynomial and Proposition~\ref{prop-cyccoset}, we have
$$
\deg{(\ol{g}_{(q,m,\delta)}(x))}=\begin{cases}
1+2m(\de_q(q-1)+\de_0) & \mbox{if $\de\le q^{m/2}-1$,} \\
1+2m(\de_q(q-1)+\de_0-\frac{1}{2}) & \mbox{if $q^{m/2}\le\de\le q^{m/2}+1$,} \\
1+2m(\de_q(q-1)+\de_0-1) & \mbox{if $q^{m/2}+2\le\de\le 2q^{m/2}-2$,} \\
1+2m(\de_q(q-1)+\de_0-\frac{3}{2}) & \mbox{if $\de=2q^{m/2}-1$,} \\
1+2m(\de_q(q-1)+\de_0-\frac{5}{2}) & \mbox{if $2q^{m/2}\le\de\le 2q^{m/2}+1$.}
\end{cases}
$$
Therefore, the conclusion on the dimension in 2) follows. Moreover, by the BCH bound, $\ocode$ has minimum distance $d \ge 2\de$.
\end{proof}

\begin{remark}
For the code $\ocode$, if
\begin{equation}\label{condi}
\sum_{i=0}^{\de} \binom{n}{i}(q-1)^i > q^{n-k},
\end{equation}
then $d \le 2\de$ by the sphere packing bound. Therefore, the knowledge on the dimension of the code $\ocode$ may provide more precise information on the minimum distance in some cases. As an illustration, we use Theorem~\ref{thm-gene} and the inequality (\ref{condi}) to get some binary codes $\C_{(2,n,\de,n-\de+1)}$ with $d=2\de$, which are listed in Table~\ref{tab-mindis}. Note that the codes listed in Table~\ref{tab-mindis} is optimal in the sense that given the length and dimension, the minimum distance is the largest possible. According to Inequality (\ref{condi}), the increasing of their minimum distances is impossible due to the sphere packing bound.
\end{remark}

\begin{table}[h]
\begin{center}
\caption{Some optimal binary code $\C_{(2,n,\de,n-\de+1)}$ with $d=2\de$}\label{tab-mindis}
\begin{tabular}{|c|c|} \hline
$m$  &   $\de$  \\ \hline
\{5,6,7\} & \{3\} \\
\{8,9,10,11,12,13\} & \{3,5\} \\
\{14,15,17,17,18,19\} & \{3,5,7\} \\
\{20\} & \{3,5,7,9\} \\ \hline
\end{tabular}
\end{center}
\end{table}

\begin{remark}
Theorem~\ref{thm-gene} gives the dimension of $\ocode$ when $\de$ is relatively small, in which $\de$ is approximately the square root of the length $n$. In this case, the size of each cyclotomic coset containing $i$, where $-\de \le i \le \de$, follows form Lemma~\ref{AKSYF}. Moreover, Lemma~\ref{lem-cyccoset2} characterizes all $1 \le i,j \le \de$ satifying $-j \in C_i$. For a larger $\de$, the size of cyclotomic cosets, as well as the cases in which $-j \in C_i$, become much more complicated. Hence, from this viewpoint, it is difficult to extend the result of Theorem~\ref{thm-gene} to a larger $\de$.
\end{remark}

\begin{remark}
Since when $q$ is odd, $\codeeven$ and $\ocode$ are monomially equivalent, Theorem~\ref{thm-gene} also gives the dimension of $\codeeven$ for $2 \le \de \le q^{(m+1)/2}$ when $m$ is odd and for $2 \le \de \le 2q^{m/2}+1$ when $m$ is even. Moreover, Theorem~\ref{theoremLCDDDQTMO} is also a direct consequence of Theorem~\ref{thm-gene}.
\end{remark}

Due to the equivalence between $\codeeven$ and $\ocode$ when $q$ is even, the following theorem follows immediately from Theorems~\ref{theoremLCDQOMO} and \ref{TheoremCQOME}.

\begin{theorem} Let $q$ be odd and $\delta=uq^{\bar m}+1$, where
$$
\begin{cases}
  1 \le u \le \frac{q-1}{2} & \mbox{if $m=2$,} \\
  1 \le u \le q-1 & \mbox{if $m \ge 4$.}
\end{cases}
$$
\begin{enumerate}
  \item When $m \ge 5$ is an odd integer, the code $\mathcal C_{(q, n, 2\delta, n-\delta+1)}$ has length
$n$, dimension $$k=q^m-2-2(uq^{\frac {m-1} 2}-2u^2+u)(q-1)m,$$ and minimum distance $d \ge 2\delta$.
  \item When $m \ge 2$ is an even integer, the code $\mathcal C_{(q, n, 2\delta, n-\delta+1)}$  has length
$n$, dimension $$k=q^m-2-2uq^{\frac m 2-1}(q-1)m+(2u^2-2u+1)m,$$ and minimum distance  $d \ge 2\delta$.
\end{enumerate}
\end{theorem}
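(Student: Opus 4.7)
The plan is to leverage the monomial equivalence between $\codeeven$ and $\mathcal{C}_{(q,n,2\delta,n-\delta+1)}$ when $q$ is odd, which was already established in Section~3 of the paper. Recall that the argument there shows that for odd $q$, the parity-check matrices of the two codes differ only by multiplication of certain columns by $-1$, so they are monomially equivalent via the diagonal map with entries $(1,-1,1,-1,\ldots)$. Consequently, the two codes share identical length, dimension, and minimum distance.

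Given this equivalence, the proof essentially consists of transporting the parameter formulas already derived for $\codeeven$ over to $\mathcal{C}_{(q,n,2\delta,n-\delta+1)}$. First I would split into the two cases $m$ odd and $m$ even, since the formulas for the dimension of $\codeeven$ are proved separately in those two settings. For $m \ge 5$ odd, I would cite Theorem~\ref{theoremLCDQOMO}, which gives dimension $k = q^m - 2 - 2(uq^{(m-1)/2} - 2u^2 + u)(q-1)m$ and $d \ge 2\delta$. For $m \ge 2$ even, I would cite Theorem~\ref{TheoremCQOME}, which gives dimension $k = q^m - 2 - 2uq^{m/2 - 1}(q-1)m + (2u^2 - 2u + 1)m$ and $d \ge 2\delta$; here the restriction on $u$ (namely $u \le (q-1)/2$ if $m=2$ and $u \le q-1$ if $m \ge 4$) carries over verbatim, since that hypothesis is what ensures the coset structure in Proposition~\ref{PI2} is valid, and this is exactly the same hypothesis assumed in the present theorem.

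There is no genuine obstacle: the entire content of the theorem is the observation that the parameters already computed for $\codeeven$ transfer to $\ocode$ via monomial equivalence. The only mildly nontrivial point to verify is that the condition $\delta = uq^{\bar m}+1$ in the statement matches the form used in Theorems~\ref{theoremLCDQOMO} and \ref{TheoremCQOME}: when $m$ is odd, $\bar m = (m+1)/2$, and when $m$ is even, $\bar m = m/2$, so $\delta = uq^{(m+1)/2}+1$ in the odd case and $\delta = uq^{m/2}+1$ in the even case, matching the two prior theorems exactly.

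Accordingly, the proof will be a short appeal: \emph{by the monomial equivalence between $\codeeven$ and $\mathcal{C}_{(q,n,2\delta,n-\delta+1)}$ established in Section~3, the desired parameters follow immediately from Theorems~\ref{theoremLCDQOMO} and \ref{TheoremCQOME}.} No new cyclotomic-coset computation is needed, which is why the authors omit the proof in the excerpt.
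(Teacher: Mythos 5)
Your proposal is correct and matches the paper's own argument exactly: the authors introduce this theorem with the remark that it ``follows immediately from Theorems~\ref{theoremLCDQOMO} and \ref{TheoremCQOME}'' via the monomial equivalence of $\codeeven$ and $\ocode$ for odd $q$ established in Section~3. Your additional checks (that $\bar m$ specializes to $(m+1)/2$ or $m/2$ so the designed distances line up, and that the $m=2$ restriction on $u$ is inherited from Proposition~\ref{PI2}) are exactly the right sanity checks and nothing more is needed.
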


\subsection{The dimension of $\ocode$ when $\de=q^\lam$ and $\frac{m}{2}\le\lam\le m-1$}

In \cite{Mann}, the dimension of the narrow-sense primitive BCH code $\codenp$ with $\de=q^\lam$ was considered. The author derived two closed formulas concerning the dimension of such code. In this subsection, we use the idea in \cite{Mann} to give an estimate of the dimension of the LCD BCH code $\ocode$ with $\de=q^\lam$, where $\frac{m}{2}\le\lam\le m-1$.

Let $s$ and $r$ be two positive integers. Given a sequence of length $s$ and a fixed integer $a$ with $0 \le a \le q-1$, we say that the sequence contains a straight run of length $r$ with respect to $a$, if it has $r$ consecutive entries formed by $a$. If we view the sequence as a circle where the first and last entry are glued together, we say that the sequence contains a circular run of length $r$ with respect to $a$, if this circle has $r$ consecutive entries formed by $a$. When the specific choice of the integer $a$ does not matter, we simply say that the sequence has a straight or circular run of length $r$. Clearly, a straight run is also a circular run but the converse is not necessarily true. We use $l_r(s)$ to denote the number of sequences of length $s$, which contains a straight run of length $r$. Particularly, we define $l_r(0)=0$. The following is a recursive formula of $l_r(s)$ which was presented in \cite{Mann}.

\begin{result}{\rm \cite[p. 155]{Mann}}
Let $s$ and $r$ be two nonnegative integers. Then
$$
l_r(s)=\begin{cases}
  \begin{matrix}
  0 & \mbox{if $0\le s<r$}, \\
  1 & \mbox{if $s=r$},\\
  ql_r(s-1)+(q-1)(q^{s-r-1}-l_r(s-r-1)) & \mbox{if $s>r$}.
  \end{matrix}
\end{cases}
$$
\end{result}

Throughout the rest of this section, we always assume that $\de=q^\lam$ and $\frac{m}{2}\le\lam\le m-1$. Recall that the narrow-sense primitive BCH code $\codenp$ has generator polynomial $g_{(q,n,\delta,1)}(x)$. Set $r=m-\lam$. Note that $\de-1$ corresponds to following sequence
$$
\ol{\de-1}=(\underbrace{0,\ldots,0}_{r},\underset{\lam-1}{q-1},q-1,\ldots,q-1).
$$
The key observation in \cite{Mann} is that for $1 \le i \le n-1$, $\al^i$ is a root of $g_{(q,n,\delta,1)}(x)$ if and only if the sequence of $i$ has a circular run of length at least $r$ with respect to $0$. Similarly, note that $n-\de+1$ corresponds to the following sequence
$$
\ol{n-\de+1}=(\underbrace{q-1,\ldots,q-1}_{r},\underset{\lam-1}{0},0,\ldots,0).
$$
Therefore, for $1 \le i \le n-1$, $\al^i$ is a root of $g_{(q,n,\delta,n-\delta+1)}(x)$ if and only if the sequence of $i$ has a circular run of length at least $r$ with respect to $q-1$. The following proposition presents the degree of $g_{(q,n,\delta,1)}(x)$ and $g_{(q,n,\delta,n-\de+1)}(x)$.

\begin{result}{\rm \cite[p. 155]{Mann}}\label{res-deg}
Set $r=m-\lam$. Then
$$
\deg(g_{(q,n,\delta,1)}(x))=\deg(g_{(q,n,\delta,n-\delta+1)}(x))=l_r(m)-1+(q-1)^2\sum_{u=0}^{r-2}(r-u-1)(q^{m-r-u-2}-l_r(m-r-u-2)).
$$
\end{result}

We have the following estimation on the dimension of $\ocode$.

\begin{theorem}
Set $r=m-\lam$. Then $\ocode$ has parameters $[q^m-1,k,d \ge 2\de]$, in which the dimension $k$ satisfies
$$
k \ge q^m-2l_r(m)+2l_r(m-r)-2(q-1)^2\sum_{u=0}^{r-2}(r-u-1)(q^{m-r-u-2}-l_r(m-r-u-2)),
$$
and
$$
k \le q^m-2l_r(m)+ml_r(m-r)-2(q-1)^2\sum_{u=0}^{r-2}(r-u-1)(q^{m-r-u-2}-l_r(m-r-u-2)).
$$
\end{theorem}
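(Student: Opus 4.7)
The plan is to reduce the two claimed bounds on $k$ to matching two-sided bounds on the degree of $\gcd(g_{(q,n,\de,1)}(x), g_{(q,n,\de,n-\de+1)}(x))$. Writing the generator polynomial of $\ocode$ as $(x-1)\,\lcm(g_{(q,n,\de,1)}(x), g_{(q,n,\de,n-\de+1)}(x))$, and observing that $0 \notin \{1,\ldots,\de-1\} \cup \{n-\de+1,\ldots,n-1\}$ makes $x-1$ coprime to the $\lcm$, we have
$$
\deg\bigl(g_{(q,n,2\de,n-\de+1)}(x)\bigr) \;=\; 1 + 2D - G,
$$
where $D$ is the common degree supplied by Result~\ref{res-deg} and $G := \deg\bigl(\gcd(g_{(q,n,\de,1)}(x), g_{(q,n,\de,n-\de+1)}(x))\bigr)$. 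Substituting Result~\ref{res-deg} into $k = q^m - 1 - \deg(g_{(q,n,2\de,n-\de+1)}(x))$ shows that the theorem's inequalities for $k$ are equivalent to
$$
2\,l_r(m-r) \;\le\; G \;\le\; m\,l_r(m-r).
$$

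Next I would recast $G$ combinatorially using the viewpoint behind Result~\ref{res-deg}. For $1 \le i \le n-1$, $\al^i$ is a root of $g_{(q,n,\de,1)}(x)$ iff the sequence $\ol i$ has a circular run of $0$'s of length $\ge r$, and $\al^i$ is a root of $g_{(q,n,\de,n-\de+1)}(x)$ iff $\ol i$ has a circular run of $(q-1)$'s of length $\ge r$ (the latter because every integer in $\{n-\de+1,\ldots,n-1\}$ has its top $r$ base-$q$ digits equal to $q-1$). Letting $T$ be the set of length-$m$ sequences that enjoy both kinds of circular runs, we have $G = |T|$, and the all-zero and all-$(q-1)$ sequences automatically fall outside $T$.

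For the upper bound I would use a rotational double count. For each $p \in \{0,\ldots,m-1\}$ let $N_p$ be the number of sequences with $c_p = c_{p+1} = \cdots = c_{p+r-1} = 0$ (indices modulo $m$) that also contain a circular run of $(q-1)$'s of length $\ge r$. Such a $(q-1)$-run cannot touch positions $p,\ldots,p+r-1$, so it lies entirely in the complementary length-$(m-r)$ window as a \emph{straight} run, giving $N_p = l_r(m-r)$ by the $0 \leftrightarrow q-1$ symmetry of $l_r$. Because $\sum_{p} N_p = \sum_{\ol i \in T} \#\{p : c_p = \cdots = c_{p+r-1} = 0\} \ge |T|$, we obtain $|T| \le m\,l_r(m-r)$. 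For the lower bound I would display two disjoint subfamilies of $T$: let $A$ be the sequences with $c_0 = \cdots = c_{r-1} = 0$ whose suffix $(c_r,\ldots,c_{m-1})$ contains a straight run of $(q-1)$'s of length $\ge r$, and let $B$ be the symmetric family with the roles of $0$ and $q-1$ swapped. Each set has cardinality $l_r(m-r)$, each sits inside $T$, and $A \cap B = \emptyset$ because the entry $c_0$ tells them apart. Thus $|T| \ge 2\,l_r(m-r)$.

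The main obstacle is the combinatorial identification in the second paragraph: one must carefully argue the non-wrap-around statement that forces the circular $(q-1)$-run in the upper-bound count to become a straight run in the complementary window, since this is precisely what pins $N_p$ to the clean value $l_r(m-r)$ and gives both bounds their explicit form. Once the bounds on $G$ are in place, the inequality $d \ge 2\de$ is immediate from the BCH bound, as the defining set of $\ocode$ contains the $2\de - 1$ consecutive elements $\{-\de+1,-\de+2,\ldots,\de-1\} \pmod n$.
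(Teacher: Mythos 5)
Your proof is correct and follows the same overall reduction as the paper: both write $\deg(\gx)=\deg(g_{(q,n,\de,1)}(x))+\deg(g_{(q,n,\de,n-\de+1)}(x))+1-|N|$ with the two degrees supplied by Result~\ref{res-deg} and $N$ the set of common roots, identify $N$ with the length-$m$ sequences carrying both a circular $0$-run and a circular $(q-1)$-run of length at least $r$, and reduce everything to the quantity $l_r(m-r)$. The difference lies in how the two bounds $2l_r(m-r)\le |N|\le m\,l_r(m-r)$ are certified. The paper introduces the single family $N^{\pr}$ (zero run anchored in the top $r$ digits, a straight $(q-1)$-run in the remaining window of length $m-r$), notes $|N^{\pr}|=l_r(m-r)$, and argues via cyclotomic-coset orbits that $2|N^{\pr}|\le|N|\le m|N^{\pr}|$. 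Your upper bound, summing $N_p=l_r(m-r)$ over the $m$ rotation anchors $p$ and comparing with $\sum_{\ol{\imath}\in T}\#\{p\}\ge |T|$, is essentially the paper's upper bound reorganized as an incidence double count. Your lower bound is genuinely different, and in fact cleaner: rather than appealing to orbit sizes being at least $2$ (which does not by itself give $|N|\ge 2|N^{\pr}|$ when one coset happens to contain several elements of $N^{\pr}$), you exhibit two explicitly disjoint subfamilies $A$ and $B$ of $N$, distinguished by the value of the anchored digit, each of cardinality $l_r(m-r)$; this makes the factor of $2$ immediate. The non-wrap-around point you single out is handled correctly: a circular $(q-1)$-run must avoid the $r$ anchored zero positions, hence sits inside the complementary contiguous arc of length $m-r\ge r$ (using $\lam\ge m/2$) and is therefore a straight run of that window, which is exactly what pins $N_p$ to $l_r(m-r)$.
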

\begin{proof}
Since $\lam\ge\frac{m}{2}$, we have $m\ge 2r$. Define a set $N=\{1 \le i \le n-1 \mid g_{(q,n,\delta,1)}(\al^i)=g_{(q,n,\delta,n-\de+1)}(\al^i)=0\}$. Since $\gx=(x-1)\lcm(g_{(q,n,\delta,1)}(x),g_{(q,n,\delta,n-\de+1)}(x))$, we have
$$
\deg(\gx)=\deg(g_{(q,n,\delta,1)}(x))+\deg(g_{(q,n,\delta,n-\de+1)}(x))+1-|N|.
$$
Since $\deg(g_{(q,n,\delta,1)}(x))$ and $\deg(g_{(q,n,\delta,n-\de+1)}(x))$ are known by Result~\ref{res-deg}, it suffices to estimate the size of $N$. $N$ contains the number $1 \le i \le n-1$, such that $\ol{i}$ contains two runs of length $r$ with respect to $0$ and $q-1$, where at most one of them is a circular run. Let $N^{\pr}$ be the set of integers $1 \le i \le n-2$ such that the first $r$ entries of $\ol{i}$ is a straight run of length $r$ with respect to $0$ and the last $m-r$ entries contain a straight run of length $r$ with respect to $q-1$. Clearly, we have $|N^{\pr}|=l_r(m-r)$. Note that each element of $N$ is a proper cyclic shift of an element of $N^{\pr}$. Moreover, for each $i\in N^{\pr}$, we have
\begin{equation*}
2\le|\{q^ji \bmod{n} \mid 0 \le j \le m-1\}|\le m,
\end{equation*}
which implies
$$
2|N^{\pr}| \le |N| \le m|N^{\pr}|.
$$
Thus, the conclusion follows from a direct computation.
\end{proof}

\subsection{The minimum distance of LCD BCH codes $\ocode$}

While it is difficult to determine the dimension of LCD BCH codes in general, it is more difficult to find out the minimum distance of LCD BCH codes. For the code  $\ocode$, the BCH bound $d \geq 2 \delta$ is usually very tight. But it would be better if we could determine the minimum distance exactly. In this section, we determine the minimum distance $d$ of the code $\ocode$ in some special cases.

Given a codeword $c=(c_0,c_1,\ldots,c_{n-1}) \in \codenp$, we say $c$ is reversible if $(c_{n-1},c_{n-2},\ldots,c_0) \in \codenp$. Namely, $c \in \codenp$ is reversible if and only if $c \in \tcode$. The following theorem says that the reversible codeword in $\codenp$ provides some information on the minimum distance on $\ocode$.

\begin{theorem}\label{thm-mindis}
Let $c(x) \in \codenp$ be a reversible codeword of weight $w$. If $c(1) \ne 0$, then $\ocode$ contains a codeword $(x-1)c(x)$ whose weight is at most $2w$. Therefore the minimum distance $d$ of $\ocode$ satisfies $d\le 2w$. In particular, if the weight of $c(x)$ is $\de$, then the minimum distance $d=2\de$.
\end{theorem}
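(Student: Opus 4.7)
The plan is to multiply the reversible codeword $c$ by $x-1$, thereby landing in $\ocode$, and then bound the resulting Hamming weight combinatorially. The starting point is to upgrade the hypothesis $c\in\codenp$ to $c\in\tcode$ by exploiting reversibility. Writing the reversed vector as the polynomial $c^*(x)=x^{n-1}c(x^{-1})$ and noting $c^*(\beta^j)=\beta^{(n-1)j}c(\beta^{-j})$, the assumption $c^*\in\codenp$ gives $c(\beta^k)=0$ for $n-\delta+1\le k\le n-1$, in addition to the roots $\beta,\ldots,\beta^{\delta-1}$ forced by $c\in\codenp$. Hence both $g_{(q,n,\delta,1)}(x)$ and $g_{(q,n,\delta,n-\delta+1)}(x)$ divide $c(x)$, so $\gt\mid c(x)$ and consequently $\gx=(x-1)\gt$ divides $(x-1)c(x)$, placing $(x-1)c(x)$ in $\ocode$.

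Next I would bound the weight of $(x-1)c(x)=xc(x)-c(x)$. Reading this coordinate-wise gives the vector whose $i$-th entry is $c_{i-1}-c_i$ with indices modulo $n$. A nonzero entry at position $i$ forces at least one of $i-1,i$ to lie in the support $S$ of $c$, so the support of $(x-1)c(x)$ is contained in $S\cup(S+1)$ and thus has size at most $2|S|=2w$.

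To convert this into $d\le 2w$ I need $(x-1)c(x)$ to be a nonzero codeword. If it vanished modulo $x^n-1$, then $h(x)=(x^n-1)/(x-1)$ would divide $c(x)$ (using $\gcd(h,x-1)=1$, since $h(1)=n\neq 0$ in $\gf(q)$), forcing $c(x)=a\,h(x)$ for some $a\in\gf(q)$; the hypothesis $c(1)\neq 0$ then yields $a\neq 0$ and $w=n$, in which case $d\le n\le 2w$ is trivial. Otherwise $(x-1)c(x)$ is genuinely nonzero, giving $d\le 2w$. For the last clause, taking $w=\delta$ produces $d\le 2\delta$, which combined with the BCH bound $d\ge 2\delta$ yields $d=2\delta$. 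The only real subtlety is the opening step---translating reversibility into the divisibility $\gt\mid c(x)$ via the roots-of-unity calculation above---after which every remaining step is a routine verification.
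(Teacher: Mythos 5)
Your proposal is correct and follows essentially the same route as the paper's (very terse) proof: reversibility plus $c\in\codenp$ forces $\gt\mid c(x)$, hence $(x-1)c(x)\in\ocode$, and the shift-and-subtract description bounds its weight by $2w$. You additionally handle the degenerate case $(x-1)c(x)\equiv 0\pmod{x^n-1}$, which the paper silently ignores; that is a harmless but welcome extra bit of care.
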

\begin{proof}
Since $c(x)$ is reversible and $c(1) \ne 0$, we have $(x-1)c(x) \in \ocode$. The weight of $(x-1)c(x)$ is at most $2w$, which implies $d \le 2w$. In particular, if $w=\de$, together with the BCH bound, we have $d=2\de$.
\end{proof}

Let $c(x)=\sum_{i=0}^{n-1}c_ix^i$ be a codeword of a cyclic code $\C$ with length $n$. We can use the elements of $\Fqm^*$ to index the coefficients of $c(x)$. Similarly, let $\ol{\C}$ be the extended cyclic code of $\C$ and let $c(x)=\sum_{i=0}^{n}c_ix^i$ be a codeword of $\ol{\C}$ with length $q^m$. We can use the elements of $\Fqm$ to index the coefficients of $\ol{c(x)}$. The support of $c(x)$ (resp. $\ol{c(x)}$) is defined to be the set of elements in $\Fqm^*$ (resp. $\Fqm$), which correspond to the nonzero coefficients of $c(x)$ (resp. $\ol{c(x)}$).

Given a prime power $q$ and an integer $0 \le s \le q^m-1$, $s$ has a unique $q$-ary expansion $s=\sum_{i=0}^{m-1} s_iq^i$. The $q$-weight of $s$ is defined to be $wt_q(s)=\sum_{i=0}^{m-1}s_i$. Suppose $H$ is a subset of $\Fq^*$, then we use $H^{(-1)}$ to denote the subset $\{h^{-1} \mid h \in H\}$.

The following are two classes of LCD BCH codes whose minimum distances are known.

\begin{corollary}\label{cor-distance}
For the LCD BCH code $\ocode$, we have $d=2 \delta$ if $\de \mid n$.
\end{corollary}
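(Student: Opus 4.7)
The plan is to construct an explicit reversible codeword of weight $\delta$ in the narrow-sense BCH code $\codenp$ and then invoke Theorem~\ref{thm-mindis}, which immediately upgrades this to a codeword of weight at most $2\delta$ in $\ocode$. Combined with the BCH bound $d \ge 2\delta$ already known for $\ocode$, this will pin down the minimum distance exactly.

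The explicit codeword is the one used in the proof of Corollary~\ref{corollarygener}, namely
\[
c(x)=\frac{x^n-1}{x^{n/\delta}-1}=\sum_{j=0}^{\delta-1}x^{jn/\delta}.
\]
First I would verify $c(x)\in\codenp$: since $\delta\mid n$, the polynomial $x^{n/\delta}-1$ has $\beta^i$ as a root for every $1\le i\le \delta-1$ (because $\beta^{in/\delta}$ is a nontrivial $\delta$-th root of unity precisely when $\delta\nmid i$, and in those cases $c(\beta^i)=0$; when $\delta\mid i$ the required condition $c(\beta^i)=0$ for $1\le i\le \delta-1$ is vacuous). This is exactly the argument already carried out for Corollary~\ref{corollarygener} with $b=1$. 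The Hamming weight of $c(x)$ is clearly $\delta$.

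Next I would check reversibility and the nonvanishing at $1$. The reciprocal of $c(x)$ is
\[
x^{(\delta-1)n/\delta}c(x^{-1})=\sum_{j=0}^{\delta-1}x^{(\delta-1-j)n/\delta}=c(x),
\]
so $c(x)$ is self-reciprocal, hence reversible as a codeword. For $c(1)$, we have $c(1)=\delta$ in $\gf(q)$. Since $\gcd(n,q)=1$ and $\delta\mid n$, we also have $\gcd(\delta,q)=1$, so $\delta\ne 0$ in $\gf(q)$, i.e.\ $c(1)\ne 0$.

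Having verified all three hypotheses, Theorem~\ref{thm-mindis} gives a codeword $(x-1)c(x)\in\ocode$ of weight at most $2\delta$, whence $d\le 2\delta$; combined with the BCH bound $d\ge 2\delta$ we obtain $d=2\delta$. There is no real obstacle here beyond remembering that the assumption $\gcd(n,q)=1$, maintained throughout the paper, is what guarantees $c(1)\ne 0$; without it the self-reciprocal codeword could conceivably vanish at $1$ and the argument would break down.
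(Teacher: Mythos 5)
Your proof is correct and follows essentially the same route as the paper: exhibit a reversible weight-$\delta$ codeword $c(x)$ in $\codenp$ with $c(1)\ne 0$ and invoke Theorem~\ref{thm-mindis} together with the BCH bound. The only difference is that the paper cites the proof in \cite{TH} for the existence of such a codeword $\sum_{i=0}^{\de-1}a_ix^{ni/\de}$, whereas you construct it explicitly as $(x^n-1)/(x^{n/\de}-1)$ and verify the membership, self-reciprocality, and $c(1)=\de\ne 0$ directly, which makes the argument self-contained.
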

\begin{proof}
It suffices to find a codeword $c(x)$ satisfying the condition in Theorem~\ref{thm-mindis}. If $\de \mid n$, by the proof of \cite[Theorem]{TH}, $\codenp$ contains a reversible codeword $c(x)$ with weight $\de$, where $c(x)=\sum_{i=0}^{\de-1}a_ix^{\frac{ni}{\de}}$ and $c(1)\ne 0$. The desired conclusion then follows from Theorem \ref{thm-mindis}.
\end{proof}

\begin{corollary}
Let $\de=2^r-1$ and $V$ be an $m$-dimensional vector space over $\Ft$. Suppose $1 \le r \le \lf \frac{m}{2} \rf$, then we can choose four $r$-dimensional subspaces of $V$, say $H_i$, $1 \le i \le 4$ of $V$, such that
$H_1 \cap H_2=\{0\}$ and $H_3 \cap H_4=\{0\}$. If $((H_1 \cup H_2) \setminus \{0\})^{(-1)}=(H_3 \cup H_4) \setminus \{0\}$, then the LCD BCH code $\C_{(2,n,\delta,n-\de+1)}$ has parameters $[2^m-1,2^m-2-2m(2^{r-1}-1),2\de]$.
\end{corollary}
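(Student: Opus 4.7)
The plan is to establish the parameters in two independent pieces: the dimension follows immediately from Theorem~\ref{thm-gene}, while matching the BCH lower bound requires exhibiting an explicit codeword of weight $2\delta$ built from the subspaces $H_1,H_2,H_3,H_4$.

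For the dimension, I would substitute $q=2$ and $\delta=2^r-1$ into Theorem~\ref{thm-gene}. Writing $\delta-1=(2^{r-1}-1)\cdot 2+0$ gives $\delta_q=2^{r-1}-1$ and $\delta_0=0$. Because $r\le\lf m/2\rf$, the inequality $\delta=2^r-1\le 2^{\lf m/2\rf}-1$ places $\delta$ in the first range of Theorem~\ref{thm-gene} part~1) (when $m$ is odd) and in the first range of part~3) (when $m\ge 4$ is even); in both formulas the dimension reduces to $k=2^m-2-2m(2^{r-1}-1)$.

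For the minimum distance, I would identify the coordinate positions $0,1,\ldots,n-1$ with the elements $\beta^0,\beta^1,\ldots,\beta^{n-1}$ of $\gf(2^m)^*$ and build a codeword $c(x)$ whose support is $S=(H_1\cup H_2)\setminus\{0\}$. Since $H_1\cap H_2=\{0\}$, one has $\wt(c)=|S|=2(2^r-1)=2\delta$. To confirm $c\in\ocode$, I must show $c(\beta^k)=0$ for every $k$ in $\{0\}\cup\{\pm 1,\pm 2,\ldots,\pm(\delta-1)\}$, the roots of the generator polynomial $\gx$. The evaluation at $k=0$ gives $|S|\bmod 2=0$. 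For $1\le k\le 2^r-2$, I would invoke the classical identity that for any $r$-dimensional $\gf(2)$-subspace $H\subseteq\gf(2^m)$ one has $\sum_{h\in H}h^k=0$; applying it separately to $H_1$ and $H_2$ gives $\sum_{s\in S}s^k=0$. For the negative exponents, $\sum_{s\in S}s^{-k}=\sum_{t\in S^{(-1)}}t^k$, and the hypothesis $S^{(-1)}=(H_3\cup H_4)\setminus\{0\}$ together with the same identity applied to $H_3$ and $H_4$ completes the verification. Combined with the BCH lower bound $d\ge 2\delta$, this yields $d=2\delta$.

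The main conceptual input is the classical power-sum identity for $\gf(2)$-subspaces, which one can prove using the fact that $\phi_H(x)=\prod_{h\in H}(x-h)$ is a $\gf(2)$-linearized polynomial (hence has only $2$-power exponents) together with Newton's identities. Once this identity is in hand the rest is bookkeeping; the only subtle point, and in my view the real content of the hypothesis, is that the ``inverse'' condition $((H_1\cup H_2)\setminus\{0\})^{(-1)}=(H_3\cup H_4)\setminus\{0\}$ is precisely what is needed to make the sums over negative-index roots of $\gx$ vanish, since a generic union of subspace indicators has no reason to be invariant under $s\mapsto s^{-1}$.
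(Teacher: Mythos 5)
Your argument is correct, and it reaches the same weight-$2\delta$ codeword as the paper --- the indicator vector of $(H_1\cup H_2)\setminus\{0\}$ --- but by a genuinely different route. The paper proves membership of this vector in $\ocode$ by citing the containments $\RM^{+}(m-r,m)^*\subset \C_{(2,n,\delta,1)}$ and $\RM^{-}(m-r,m)^*\subset \C_{(2,n,\delta,n-\de+1)}$ together with the Assmus--Key result that punctured Reed--Muller codes have minimum-weight codewords supported exactly on the nonzero points of $r$-dimensional subspaces; the inverse hypothesis is then used to identify $c_1(x)+c_2(x)$ with $c_3(x)+c_4(x)$ under the two coordinate arrangements. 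You instead verify membership directly from the parity-check conditions via the power-sum identity $\sum_{h\in H}h^k=0$ for an $r$-dimensional $\Ft$-subspace $H$ and $1\le k\le 2^r-2$ (indeed for all $k$ with $\wt_2(k)<r$, which is the cleanest way to see it: expand $(\sum_i \epsilon_i b_i)^k$ multinomially and note that only terms with all $k_i\ge 1$ survive the sum over $\epsilon\in\{0,1\}^r$, forcing $\wt_2(k)\ge r$ by Lucas). This is really the same underlying fact --- that the incidence vector of an $r$-flat lies in $\RM(m-r,m)$ --- proved from scratch rather than imported, so your proof is more self-contained and avoids the Reed--Muller machinery entirely, at the cost of having to establish the power-sum identity; your use of the inverse hypothesis to kill the sums at the negative exponents is exactly the role it plays in the paper.

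One small point you should make explicit: the generator polynomial $\gx$ has as roots all $\beta^j$ with $j$ ranging over the full $2$-cyclotomic cosets $C_0\cup\bigcup_{i=1}^{\delta-1}(C_i\cup C_{-i})$, not just the representatives $0,\pm 1,\ldots,\pm(\delta-1)$. Since your $c(x)$ has coefficients in $\Ft$, the identity $c(\beta^{2j})=c(\beta^j)^2$ shows that vanishing at the representatives forces vanishing on the whole cosets, so your check does suffice --- but as written the claim that these representatives are ``the roots of the generator polynomial'' is not literally accurate and the Frobenius step deserves a sentence.
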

\begin{proof}
The dimension of $\C_{(2,n,\delta,n-\de+1)}$ easily follows from Theorem~\ref{thm-gene}. We are going to show that the minimum distance $d=2\de$. Define $\C_{(2,n,\delta,1)}$ (resp. $\C_{(2,n,\delta,n-\de+1)}$) to be the BCH code with length $n=2^m-1$ and generator polynomial $g_{(2,n,\delta,1)}(x)$ (resp. $g_{(2,n,\delta,n-\de+1)}(x)$). Let $\al$ be a primitive element of $\Ftm$. We can assume the zeros of $\C_{(2,n,\delta,1)}$ (resp. $\C_{(2,n,\delta,n-\de+1)}$) include the elements $\{\al^i \mid 1 \le i \le \de-1\}$ (resp. $\{\al^{-i} \mid 1 \le i \le \de-1\}$).

The BCH code $\C_{(2,n,\delta,1)}$ (resp. $\C_{(2,n,\delta,n-\de+1)}$) contains the punctured Reed-Muller code $\RM^+(m-r,m)^*$ (resp. $\RM^-(m-r,m)^*$) as a subcode, in which $\RM^+(m-r,m)^*$ has zeros
$$
\{\al^i \mid 0 < i < 2^m-1, wt_2(i)<r\}
$$
and $\RM^-(m-r,m)^*$ has zeros
$$
\{\al^{-i} \mid 0 < i < 2^m-1, wt_2(i)<r\}.
$$
Let $c=(c_0,c_1,\ldots,c_{n-1})$ be a codeword of $\RM^+(m-r,m)^*$. Since $\RM^+(m-r,m)^*$ is a cyclic code, its coordinates can be indexed in the following way
\begin{equation}\label{eqn-1}
c=(\underset{1}{c_0},\underset{\al}{c_1},\ldots,\underset{\al^{n-1}}{c_{n-1}}),
\end{equation}
where $\sum_{j=0}^{n-1} c_j\al^{ij}=0$ for each $1 \le i \le \de-1$. Similarly, suppose $c^{\pr}=(c_0^{\pr},c_1^{\pr},\ldots,c_{n-1}^{\pr})$ is a codeword of $\RM^-(m-r,m)^*$. Then, its coordinates can be indexed in the following way
\begin{equation}\label{eqn-2}
c^{\pr}=(\underset{1}{c_0^{\pr}},\underset{\al^{-1}}{c_1^{\pr}},\ldots,\underset{\al^{-(n-1)}}{c_{n-1}^{\pr}}),
\end{equation}
where $\sum_{j=0}^{n-1} c_j^{\pr}\al^{-ij}=0$ for each $1 \le i \le \de-1$.

By \cite[Corollary 5.3.3]{AK}, $\RM^+(m-r,m)^*$ contains two minimum weight codewords $c_1(x)$ and $c_2(x)$, such that the support of $c_1(x)$ and $c_2(x)$ are $H_1 \setminus \{0\}$ and $H_2 \setminus  \{0\}$ respectively. Similarly, $\RM^-(m-r,m)^*$ contains two minimum weight codewords $c_3(x)$ and $c_4(x)$, such that the support of $c_3(x)$ and $c_4(x)$ are $H_3 \setminus  \{0\}$ and $H_4 \setminus  \{0\}$ respectively. Moreover, the coordinates of $c_1(x)$ and $c_2(x)$ are arranged in the way of~(\ref{eqn-1}) and the coordinates of $c_3(x)$ and $c_4(x)$ are arranged in the way of~(\ref{eqn-2}). Therefore,
$$
c_1(x)+c_2(x) \in \RM^+(m-r,m)^* \subset \C_{(2,n,\delta,1)}
$$
and
$$
c_3(x)+c_4(x) \in \RM^-(m-r,m)^* \subset \C_{(2,n,\delta,n-\de+1)}.
$$
Since $((H_1 \cup H_2) \setminus \{0\})^{(-1)}=(H_3 \cup H_4) \setminus \{0\}$, by the arrangement of the coordinates of $c_i(x)$, $1 \le i \le 4$, the two codewords $c_1(x)+c_2(x)$ and $c_3(x)+c_4(x)$ coincide. Thus, we have $c_1(x)+c_2(x) \in \tilde{\C}_{(2,n,\delta,n-\de+1)}$. Since $c_1(1)+c_2(1)=0$, we have a codeword $c_1(x)+c_2(x) \in \C_{(2,n,\delta,n-\de+1)}$ with weight $2\de$.
\end{proof}

\begin{example}
Let $q=2$, $m=5$ and $\de=3$ in the above corollary. We are going to show that $\C_{(2,31,6,29)}$ has parameters $[31,20,6]$. Note that the dimension of $\C_{(2,31,6,29)}$ easily follows from Theorem~\ref{thm-gene}, it suffices to prove that the minimum distance is equal to $6$. Let $\al$ be a primitive element of $\gf(2^5)$ and the minimal polynomial of $\al$ over $\Ft$ is $x^5+x^2+1$. Then we have the following four $2$-dimensional subspaces of $\gf(2^5)$:
\begin{align*}
H_1&=\{0,\al,\al^2,\al^{19}\}, \\
H_2&=\{0,\al^8,\al^{12},\al^{18}\}, \\
H_3&=\{0,\al^{12},\al^{13},\al^{30}\}, \\
H_4&=\{0,\al^{19},\al^{23},\al^{29}\}.
\end{align*}
Thus, we have $c_1(x)$ and $c_2(x)$ as codewords of $\C_{(2,31,3,1)}$, whose supports are $H_1 \setminus \{0\}$ and $H_2 \setminus \{0\}$. We have $c_3(x)$ and $c_4(x)$ as codewords of $\C_{(2,31,3,29)}$, whose supports are $H_3 \setminus \{0\}$ and $H_4 \setminus \{0\}$. Clearly, $((H_1 \cup H_2) \setminus \{0\})^{(-1)}=(H_3 \cup H_4) \setminus \{0\}$. Therefore, $c_1(x)+c_2(x)$ coincides with $c_3(x)+c_4(x)$, whose weight is six. Consequently, $c_1(x)+c_2(x) \in \C_{(2,31,6,29)}$ and the minimum distance of $\C_{(2,31,6,29)}$ equals $6$.
\end{example}

Based on our numerical experiment, we have the following conjecture, which can be regarded as an analogy of \cite[Chapter 9, Theorem 5]{MS}.

\begin{conj}
Let $\delta=q^\lambda-1$, where $1 \leq \lambda \leq \lfloor m/2 \rfloor$. Then the code  $\ocode$ has minimum
distance $d=2\delta$.
\end{conj}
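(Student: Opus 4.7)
The BCH bound gives $d\ge 2\delta$, so the task is to produce a codeword of $\ocode$ of weight exactly $2\delta$. When $\lambda\mid m$, the integer $\delta=q^\lambda-1$ divides $n=q^m-1$ and Corollary~\ref{cor-distance} immediately yields $d=2\delta$; accordingly my plan concentrates on the range $\lambda\le\lfloor m/2\rfloor$ with $\lambda\nmid m$, which is the real content of the conjecture.

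The main tool is the linearized-polynomial structure of $\Fq$-subspaces. For any $\lambda$-dimensional $\Fq$-subspace $H\subseteq\Fqm$, $\prod_{h\in H}(x-h)$ is an $\Fq$-linearized polynomial, so the elementary symmetric functions $e_k(H)$ vanish for $k\notin\{q^\lambda-q^j:0\le j\le\lambda\}$; combining this with Newton's identities and the Frobenius relation $\sum_{h\in H}h^{qk}=(\sum_{h\in H}h^k)^q$ yields
$$\sum_{h\in H}h^i=0,\qquad 1\le i\le q^\lambda-2=\delta-1.$$
Consequently the indicator vector $c_H\in\Fq^n$ supported on $H\setminus\{0\}$ always belongs to $\codenp$ and has weight exactly $\delta$. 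A single such $c_H$ cannot be reversible unless $H$ is a subfield, because an $\Fq$-subspace closed under inversion of nonzero elements is necessarily a subfield of $\Fqm$; this is precisely why the case $\lambda\nmid m$ requires a more elaborate construction.

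My plan is therefore to seek two $\lambda$-dimensional $\Fq$-subspaces $H_1,H_2\subseteq\Fqm$ with $H_1\cap H_2=\{0\}$ and $H_2\setminus\{0\}=(H_1\setminus\{0\})^{-1}$. Given such a pair, $c(x)=c_{H_1}(x)-c_{H_2}(x)$ has weight $2\delta$, lies in $\codenp$ by linearity, lies in $\C_{(q,n,\delta,n-\delta+1)}$ because the substitution $h\leftrightarrow h^{-1}$ turns $\sum_{h\in H_1^*}h^{-i}$ into $\sum_{h'\in H_2^*}(h')^{i}=0$ (and symmetrically with $H_1,H_2$ exchanged), and satisfies $c(1)=|H_1^*|-|H_2^*|=0$; thus $c\in\ocode$ has weight $2\delta$ and forces $d=2\delta$.

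The central obstacle is the existence question underlying the previous paragraph: for every admissible triple $(q,m,\lambda)$ with $\lambda\le\lfloor m/2\rfloor$ and $\lambda\nmid m$, does there exist a $\lambda$-dimensional $\Fq$-subspace $H_1\subseteq\Fqm$ whose pointwise inverse is again $\Fq$-linear of the same dimension and disjoint from $H_1$? Because $x\mapsto x^{-1}$ is highly nonlinear on $\Fqm^*$, this condition cuts out a special subvariety of the Grassmannian, and one has no elementary a priori reason for it to be nonempty. For small parameters the authors produced such pairs by direct search; a uniform proof will likely require either an explicit algebraic family or a character-sum / orbit-counting argument on the Grassmannian showing nonemptiness. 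Closing this geometric gap is the hard step I anticipate in a complete proof.
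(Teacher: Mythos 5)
The statement you are attacking is an open conjecture in the paper: the authors state it purely on the basis of numerical experiments and supply no proof, so there is nothing of theirs to compare your argument against line by line. Judged on its own terms, the correct parts of your plan are the ones the paper already contains in its partial results: the BCH bound gives $d\ge 2\delta$; the case $\lambda\mid m$ (equivalently $\delta\mid n$) is exactly Corollary~\ref{cor-distance}; the fact that the indicator vector of $H\setminus\{0\}$ for a $\lambda$-dimensional $\Fq$-subspace $H$ lies in $\codenp$ with weight $\delta$ is the standard punctured Reed--Muller fact the authors invoke for $q=2$; and your verification that $c_{H_1}-c_{H_2}\in\ocode$ with weight $2\delta$, whenever $H_1\cap H_2=\{0\}$ and $H_2\setminus\{0\}=(H_1\setminus\{0\})^{-1}$, is sound.

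The gap, however, is not only that you leave the existence of such a pair $(H_1,H_2)$ open; the required configuration provably does not exist in admissible cases where the conjecture is nevertheless true, so your reduction is strictly weaker than the conjecture. Take $q=2$, $\lambda=2$, $m$ odd with $m\ge 5$ (so $\lambda\le\lfloor m/2\rfloor$ and $\lambda\nmid m$). For $H_1=\{0,a,b,a+b\}$, the set $\{a^{-1},b^{-1},(a+b)^{-1},0\}$ is an $\Ft$-subspace iff $a^{-1}+b^{-1}+(a+b)^{-1}=0$, i.e. $(a/b)^2+(a/b)+1=0$, which forces $a/b\in\gf(4)\setminus\gf(2)$ and hence $\gf(4)\subseteq\gf(2^m)$ --- impossible for odd $m$. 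Yet for $(q,m,\lambda)=(2,5,2)$ the conjecture holds: the paper exhibits a weight-$6$ codeword of $\C_{(2,31,6,29)}$ whose support is $(H_1\cup H_2)\setminus\{0\}$ with $((H_1\cup H_2)\setminus\{0\})^{(-1)}=(H_3\cup H_4)\setminus\{0\}$ for two \emph{different} disjoint pairs of planes; the support is not of the form ``one subspace together with its pointwise inverse.'' (Your remark that the authors ``produced such pairs by direct search'' is therefore a misreading --- they produced quadruples, and in their own example no pair of your kind exists.) Any viable attack along these lines must work with the more flexible four-subspace configuration of the paper's corollary, or with minimum-weight codewords of $\codenp$ that are not subspace indicators, and the existence question for that broader configuration is precisely what remains open. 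A secondary, minor issue: Newton's identities degenerate in characteristic $p$, so the vanishing $\sum_{h\in H}h^i=0$ for $1\le i\le q^\lambda-2$ should instead be derived from the multinomial/Lucas argument (or cited from the generalized Reed--Muller literature), though the conclusion itself is correct.
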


\subsection{Parameters of $\ocode$ for small $\delta$}

In this section, we determine the parameters of the code $\ocode$ for a few small values of $\delta$. With the help of Theorem \ref{thm-gene} and Corollary \ref{cor-distance}, we can achieve this in some cases.

Recall that the Melas code over $\gf(q)$ is a cyclic code with length $n$ and generator polynomial $m_{-1}(x)m_1(x)$ and was first studied by Melas for the case $q=2$ \cite{Melas}. The weight distribution of the Melas code has been obtained for $q=2,3$ \cite{GSV,SV}. For $\delta = 2$, the code $\ocode$ is the even-like subcode of the Melas code. The following theorem is a direct consequence of Theorem \ref{thm-gene} and Corollary \ref{cor-distance}.

\begin{theorem}
Suppose $q$ is odd and $ m \ge 2$, then $\C_{(q,n,4,n-1)}$ has parameters $[q^m-1, q^m-2-2m, 4]$.
\end{theorem}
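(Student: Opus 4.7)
The plan is to derive both the dimension and the minimum distance by directly invoking the two earlier results specialized to $\delta=2$, after checking that the hypotheses are met.

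First I would handle the dimension. With $\delta=2$, we have $\delta-1=1$, so in the notation of Theorem~\ref{thm-gene}, $\delta_q=0$ and $\delta_0=1$. Since $q$ is odd we have $q\ge 3$, so for odd $m\ge 3$ the inequality $2\le q^{(m+1)/2}-q$ holds (as $q^{(m+1)/2}-q\ge q^2-q\ge 6$), placing us in the first subcase of part~1) of Theorem~\ref{thm-gene}; for even $m\ge 2$ the inequality $2\le q^{m/2}-1$ holds (as $q^{m/2}\ge 3$), placing us in the first subcase of part~2). In both cases the theorem gives
\[
k \;=\; q^m-2-2m\bigl(\delta_q(q-1)+\delta_0\bigr) \;=\; q^m-2-2m,
\]
as claimed. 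As a cross-check, one can verify directly that the generator polynomial is $(x-1)m_1(x)m_{-1}(x)$ with $C_1,C_{-1}$ distinct cosets of size $m$; distinctness follows because $q^\ell+1<q^m-1$ for all $0\le\ell\le m-1$ whenever $q\ge 3$ and $m\ge 2$, so $-1\notin C_1$.

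Next I would handle the minimum distance. Since $q$ is odd, $n=q^m-1$ is even, so $\delta=2$ divides $n$. Corollary~\ref{cor-distance} then yields $d=2\delta=4$ immediately.

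The main obstacle, if any, is simply making sure that the hypotheses of Theorem~\ref{thm-gene} are valid in every subcase of odd $q\ge 3$ and $m\ge 2$; this is a small case-check rather than a genuine difficulty, and the distinctness of $C_1$ and $C_{-1}$ for odd $q$ is the only place where the parity hypothesis on $q$ is used (when $q$ is even, $-1\in C_1$ and the dimension formula changes). Once these verifications are in place, the statement is a direct corollary of the earlier theorem and corollary.
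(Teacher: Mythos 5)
Your proposal is correct and follows exactly the route the paper takes: the paper derives this theorem as a direct consequence of Theorem~\ref{thm-gene} (dimension, with $\delta_q=0$, $\delta_0=1$) and Corollary~\ref{cor-distance} (minimum distance, using $2\mid n$ since $q$ is odd). Your additional verification that the hypotheses of Theorem~\ref{thm-gene} hold for all odd $q\ge 3$ and all $m\ge 2$, and the direct check that $C_1\ne C_{-1}$, are welcome details that the paper leaves implicit.
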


When $\delta = 3$, we have the following result.

\begin{theorem}\label{thm-dl2}
\begin{itemize}
\item[1)] When $q=2$ and $m \geq 4$, $\C_{(q,n,6,n-2)}$ has parameters $[2^m-1, 2^m-2-2m, 6]$.
\item[2)] When $q^m \equiv 1 \pmod{3}$ and $m \geq 4$,  $\C_{(q,n,6,n-2)}$ has parameters $[q^m-1, q^m-2-4m, 6]$.
\end{itemize}
\end{theorem}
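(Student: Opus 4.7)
The plan is to handle the two parts separately, splitting each into a dimension computation and a minimum-distance computation, and to reduce as much as possible to Corollary~\ref{cor-distance} and to the corollary immediately preceding Theorem~\ref{thm-dl2} concerning four subspaces $H_1,H_2,H_3,H_4$.

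For the dimension, I compute the degree of the generator polynomial $g(x)=(x-1)\,\lcm(m_1,m_2,m_{-1},m_{-2})$. In Part~1 ($q=2$), since $2\in C_1$ we have $C_2=C_1$, and similarly $C_{-2}=C_{-1}$, so the $\lcm$ collapses to $\lcm(m_1,m_{-1})$. A short check shows $-1\notin C_1$ whenever $m\ge 3$ (otherwise $2^\ell\equiv -1\pmod{2^m-1}$ for some $0\le\ell<m$, impossible because $2^\ell+1\le 2^{m-1}+1<2^m-1$), so $m_1\ne m_{-1}$, each of degree $m$, and $g$ has degree $2m+1$, yielding dimension $2^m-2-2m$. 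In Part~2, the hypothesis $q^m\equiv 1\pmod 3$ together with $m\ge 4$ forces $q\ge 4$ (since $q=2$ would give the wrong dimension and $q=3$ contradicts the congruence), and a straightforward size comparison shows that $C_1,C_2,C_{-1},C_{-2}$ are then four pairwise distinct cyclotomic cosets each of size $m$. Consequently $g$ has degree $4m+1$ and the dimension equals $q^m-2-4m$.

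For the minimum distance, the BCH bound immediately yields $d\ge 6$ in both parts. In Part~2, the hypothesis $q^m\equiv 1\pmod 3$ means $3\mid n$, so $\delta=3$ divides $n$ and Corollary~\ref{cor-distance} gives $d=6$. The same argument covers Part~1 when $m$ is even, because $3\mid 2^m-1$ exactly when $m$ is even.

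The remaining case, Part~1 with $m$ odd, is where I expect the main obstacle. Here $3\nmid 2^m-1$, so Corollary~\ref{cor-distance} no longer applies. My plan is to invoke the corollary that immediately precedes Theorem~\ref{thm-dl2}, specialized to $r=2$ and $\delta=2^r-1=3$, which yields exactly the claimed parameters $[2^m-1,\,2^m-2-2m,\,6]$ as soon as one can exhibit four $2$-dimensional $\Ft$-subspaces $H_1,H_2,H_3,H_4$ of $\gf(2^m)$ with $H_1\cap H_2=H_3\cap H_4=\{0\}$ and $((H_1\cup H_2)\setminus\{0\})^{-1}=(H_3\cup H_4)\setminus\{0\}$. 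The difficulty is that for odd $m$, the inverse of a single $2$-dim subspace is itself never a subspace (this would force a primitive cube root of unity in $\gf(2^m)$), so the blocks $H_3,H_4$ must each blend elements from $H_1^{-1}$ and $H_2^{-1}$. Writing $T(H)=\sum_{0\ne h\in H}h^{-1}$, a necessary condition is $T(H_1)=T(H_2)$, and a sufficient extra condition is the existence of $h_1\in H_1^{-1}$ and $h_2\in H_2^{-1}$ with $h_2=h_1+T(H_1)$; this then produces the desired partition of the six inverses into two zero-summing triples yielding $H_3,H_4$. A pigeonhole argument applied to the $\Theta(4^{m-1})$ pairs of disjoint $2$-dim subspaces mapping into only $2^m-1$ possible values of $T$, together with a genericity check for the cross-element condition, is expected to complete the construction for every odd $m\ge 5$.
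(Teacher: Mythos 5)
Your dimension computations are correct and essentially equivalent to the paper's (the paper simply cites Theorem~\ref{thm-gene}, which for $\delta=3$ reduces to exactly the cyclotomic-coset counts you carry out by hand), and your treatment of the minimum distance in Part~2, and in Part~1 for even $m$, via $3\mid n$ and Corollary~\ref{cor-distance}, matches the paper's argument for Part~2. The aside that Part~2 ``forces $q\ge 4$'' is circular as stated, but that reflects an overlap in the theorem's hypotheses (for $q=2$ and $m$ even both parts apply and give different dimensions) rather than a flaw of yours.

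The genuine gap is Part~1 with $m$ odd. There you propose to invoke the four-subspace corollary with $r=2$, but you never actually produce the subspaces $H_1,\dots,H_4$: the pigeonhole argument on the values of $\sum_{0\ne h\in H}h^{-1}$ only yields two disjoint $2$-dimensional subspaces with equal ``inverse-sums,'' and the additional cross-element condition you need ($h_2=h_1+T(H_1)$ for some $h_1\in H_1^{-1}$, $h_2\in H_2^{-1}$) is dismissed with ``a genericity check \dots is expected to complete the construction.'' That is not a proof, and it is not obvious how to make it one uniformly in odd $m\ge 5$. The paper avoids this entirely: for Part~1 it combines the BCH bound $d\ge 6$ with the sphere-packing bound. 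Since $n-k=2m+1$, a code with $d\ge 7$ would require $\sum_{i=0}^{3}\binom{2^m-1}{i}\le 2^{2m+1}$, which fails for every $m\ge 4$ (e.g.\ $576>512$ at $m=4$, and the left side grows like $2^{3m}/6$ versus $2^{2m+1}$). Hence $d\le 6$ and so $d=6$, for both parities of $m$ at once. You should replace your odd-$m$ construction with this two-line packing argument, or else supply a complete existence proof of the subspaces $H_1,\dots,H_4$.
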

\begin{proof}
1) The dimension follows from Theorem \ref{thm-gene}. Applying the BCH and the sphere packing bound, we can see that the minimum distance is $6$.

2) The dimension follows from Theorem \ref{thm-gene}. Since $q^m\equiv 1 \pmod{3}$, we have $3 \mid n$. Therefore, by Corollary \ref{cor-distance}, the minimum distance is $6$.
\end{proof}

\begin{theorem}\label{thm-q3}
Suppose $m \ge 3$, then $\C_{(3,n,8,n-3)}$ has parameters $[3^m-1, 3^m-2-4m, d]$, where $d=8$ if
$m$ is even and $d \geq 8$ if $m$ is odd.
\end{theorem}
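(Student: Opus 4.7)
The plan is to establish the three claims of the theorem in turn: the length is immediate from the primitive BCH construction ($n=3^m-1$), the dimension $3^m-2-4m$ follows by directly invoking Theorem~\ref{thm-gene} with $q=3$ and $\delta=4$, and the minimum distance claim follows from the BCH bound together with Corollary~\ref{cor-distance}.

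For the dimension, I would write $\delta-1=3=1\cdot 3+0$, so that in the notation of Theorem~\ref{thm-gene} we have $\delta_q=1$ and $\delta_0=0$. Then I would verify that $\delta=4$ lies inside the first (linear) subcase of that theorem for each parity of $m$: when $m\ge 3$ is odd, one checks $\delta=4\le q^{(m+1)/2}-q = 3^{(m+1)/2}-3$ (tight case $m=3$: $6\ge 4$); when $m\ge 4$ is even, one checks $\delta=4\le q^{m/2}-1=3^{m/2}-1$ (tight case $m=4$: $8\ge 4$). In both situations the first branch of Theorem~\ref{thm-gene} yields
\[
k \;=\; 3^m-2-2m\bigl(\delta_q(q-1)+\delta_0\bigr) \;=\; 3^m-2-2m(1\cdot 2+0) \;=\; 3^m-2-4m,
\]
as desired.

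For the minimum distance, the BCH bound applied to $\C_{(3,n,8,n-3)}$ immediately gives $d\ge 2\delta=8$, which settles the case $m$ odd. When $m$ is even, I would observe that $3\equiv -1\pmod{4}$, so $3^m\equiv 1\pmod{4}$ and therefore $4\mid n$. Applying Corollary~\ref{cor-distance} with $\delta=4$ then upgrades the BCH bound to equality: $d=2\delta=8$.

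There is no real obstacle; the only subtle point is to confirm that $\delta=4$ falls within the admissible range of Theorem~\ref{thm-gene} all the way down to $m=3$, which is why the boundary evaluations $3^{(m+1)/2}-3\ge 4$ and $3^{m/2}-1\ge 4$ need to be checked explicitly. Once that is done, the dimension and minimum distance are obtained essentially for free from the machinery already developed in the paper.
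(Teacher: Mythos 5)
Your proposal is correct and follows essentially the same route as the paper: the dimension comes from Theorem~\ref{thm-gene} (first branch, with $\delta_q=1$, $\delta_0=0$), the lower bound $d\ge 8$ from the BCH bound, and the equality for even $m$ from $4\mid n$ via Corollary~\ref{cor-distance}. Your explicit verification that $\delta=4$ lies in the admissible range of Theorem~\ref{thm-gene} down to $m=3$ is a detail the paper leaves implicit, but otherwise the arguments coincide.
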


\begin{proof}
It follows from Theorem \ref{thm-gene} that the dimension of this code is equal to $q^m-2-4m$. By the BCH
bound, the minimum distance of $\C_{(3,n,8,n-3)}$ is at least $8$.

When $m$ is even, $4$ divides $n$. Hence, the minimum distance of $\C_{(3,n,8,n-3)}$ is equal to $8$ according to Corollary \ref{cor-distance}.
\end{proof}

We have the following conjecture concerning the case $q=3$.

\begin{conj}
When $q=3$, $m \geq 3$ is odd and $\delta=4$,  $\ocode$ has minimum distance $d=8$.
\end{conj}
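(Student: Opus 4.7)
The strategy is to show $\ocode$ contains a codeword of weight~$8$; combined with the BCH bound $d(\ocode)\geq 8$ established in Theorem~\ref{thm-q3}, this gives $d=8$. A convenient parametrization is $\ocode=(x-1)\tcode$ in $\gf(3)[x]/(x^n-1)$ (the kernel of $c\mapsto(x-1)c$ is the one-dimensional span of $1+x+\cdots+x^{n-1}$, which lies in $\tcode$), so every codeword of $\ocode$ has the form $(x-1)c$ with $c\in\tcode$. A direct bookkeeping argument shows $\wt((x-1)c)=2r+\sigma$, where $r$ is the number of maximal cyclic runs in $\supp(c)$ and $\sigma$ counts the interior adjacencies of these runs at which consecutive coefficients differ. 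When $m$ is even Corollary~\ref{cor-distance} already supplies a weight-$4$ codeword of $\tcode$ (since $\delta=4\mid n$), so $\wt((x-1)c)\leq 8$ is automatic. For $m$ odd, however, $n\equiv 2\pmod 4$ and the Tzeng--Hartmann bound cited before the definition of $\ocode$ gives $d(\tcode)\geq\delta+1=5$, so $\wt(c)\geq 5$ and cancellations must be arranged carefully.

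The most economical configuration is $\wt(c)=5$, $r=4$, $\sigma=0$: a single length-$2$ run $\{t,t+1\}$ with a common coefficient $a\in\gf(3)^{*}$ together with three isolated positions $s_1,s_2,s_3$. I would therefore look for
\[
c(x)\;=\;a\bigl(x^t+x^{t+1}\bigr)+b_1 x^{s_1}+b_2 x^{s_2}+b_3 x^{s_3},\qquad a,b_i\in\gf(3)^{*},
\]
with the five exponents pairwise non-adjacent except for $\{t,t+1\}$, subject to the four zero conditions $c(\beta^j)=0$ for $j\in\{\pm 1,\pm 2\}$ (the cases $j=\pm 3$ follow by Frobenius on $\gf(3)$-coefficients). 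These conditions amount to a rank-deficiency condition on a $4\times 4$ matrix over $\gf(3^m)$ whose kernel must contain a $\gf(3)$-rational vector of full support. Should this tightest ansatz be unrealizable, one would fall back on $r=3,\sigma=2$ (a length-$3$ run with alternating coefficients, or two length-$2$ runs with opposite-sign coefficients).

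The main obstacle is establishing existence of such a configuration uniformly for every odd $m\geq 3$. A natural first attempt exploits the identity $\beta^{n/2}=-1\in\gf(3)$ (valid because $2\mid n$) by pairing $s_3=s_2+n/2$ with $b_3=-b_2$, which annihilates the $b_2,b_3$ contributions at $j=\pm 2$; but the residual two-equation system in $(a,b_1)$ typically forces a specific element of $\gf(3^m)$ (of the shape $\beta^{2(s_1-t)}/(1+\beta^2)$) to lie in $\gf(3)$, which generically fails when $m$ is odd (for instance, for $m=3$ one gets $\beta/(1+\beta^2)\notin\gf(3)$ because $4\nmid 3^m-1$ precludes a primitive $4$-th root of unity in $\gf(3^m)$). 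A uniform proof will likely require either a Weil-type character-sum estimate guaranteeing that admissible exponent tuples $(t,s_1,s_2,s_3)$ are abundant, or an explicit construction tailored to the subfield lattice of $\gf(3^m)$ together with the involution $s\mapsto n-s$ on $\Z_n$, with small $m$ (for example $m=3,5,7$) handled by direct computation.
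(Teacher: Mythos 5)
First, note that the statement you are addressing is presented in the paper as a \emph{conjecture}: the authors give no proof, only numerical evidence, precisely because the mechanism that settles the companion cases (Corollary~\ref{cor-distance}, which requires $\de \mid n$) is unavailable here --- for odd $m$ one has $3^m-1\equiv 2\pmod 4$, so $4\nmid n$. There is therefore no paper proof to compare against, and your argument must stand entirely on its own.

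It does not: there is a genuine gap at the decisive step. Your reductions are sound --- the identification $\ocode=(x-1)\tcode$, the bookkeeping $\wt((x-1)c)=2r+\sigma$, and the observation that a weight-$5$ codeword $c\in\tcode$ supported on one length-$2$ run with equal coefficients plus three isolated positions would produce the required weight-$8$ codeword of $\ocode$ --- but the existence of such a $c(x)$ for every odd $m\ge 3$ is exactly the content of the conjecture, and you never establish it. Your own analysis shows that the one concrete construction you attempt (taking $s_3=s_2+n/2$ with $b_3=-b_2$) collapses because $\gf(3^m)$ contains no element of order $4$ when $m$ is odd, and the proposed fallbacks (the $r=3$, $\sigma=2$ configurations, a Weil-type count of admissible exponent tuples, subfield-lattice constructions, direct computation for small $m$) are only named, never carried out; it is not evident that any of them closes the argument, since the four zero conditions over $\gf(3^m)$ amount to $4m$ linear constraints over $\gf(3)$ on only five nonzero coefficients, so existence hinges entirely on a favorable choice of the exponents $(t,s_1,s_2,s_3)$ that you have not exhibited. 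As written, the proposal is a reasonable plan of attack on an open problem, not a proof; even the base case $m=3$, which you suggest could be handled by direct computation, is not actually verified in your text.
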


\begin{example}
Let $q=3$, $m=3$ and $\delta=4$ in Theorem~\ref{thm-q3}. Then $\C_{(3,26,8,n-3)}$ has parameters $[26,13,8]$. According to {\rm \cite[p. 300, Table A.92]{Dingbk15}}, all known ternary linear codes with length $26$ and dimension $13$ has minimum distance at most $8$. Hence, $\C_{(3,n,8,n-3)}$ has the same parameters as the best known linear code.
\end{example}

\section{Concluding remarks}

The main contributions of this paper are the following:
\begin{enumerate}
  \item The characterization of the coset leaders of $q$-cyclotomic cosets $C_j$ modulo $n=q^m-1$, where $1 \le j \le (q-1)q^{\bar m}$. The size of these cyclotomic cosets is also computed.
  \item The determination of the dimension of the LCD BCH codes $\codeeven$ and $\codeodd$ with $\delta=uq^{\bar m}+1$ if $q$ is odd and with $\delta=uq^{\bar m}/2+1$ if $q$ is even, where $1 \le u \le q-1$.
  \item The determination of the dimension of the LCD BCH codes $\codeeven$ and $\codeodd$ when it has designed distance $q^t-1$, where $1 \le t \le \bar m$.
  \item The determination of the dimension of the LCD BCH codes $\mathcal C_{(q, n, 2\delta, n-\delta+1)}$, with $2 \le \de \le q^{(m+1)/2}$ when $m$ is odd and with $2 \le \de \le 2q^{m/2}$ when $m$ is even.
  \item The determination of the dimension of the LCD BCH codes $\mathcal C_{(q, n, 2\delta, n-\delta+1)}$, with $q$ being odd, $\delta=uq^{\bar m}+1$ and $1 \le u \le q-1$.
  \item Lower and upper bounds on the dimension of $\mathcal C_{(q, n, 2\delta, n-\delta+1)}$, where $\delta=q^\lambda$ and $\frac{m}{2} \le \lambda \le m-1$.
\end{enumerate}

Lower bounds on the minimum distance of above codes are derived from the BCH bound. In some special cases, the minimum distances are also determined.

For the two families of LCD BCH codes considered in this paper, we are able to determine their dimensions when $\de$ is relatively small, which is approximately the square root of the length of the code. When $\de$ goes larger, it is much more complicated to compute the size of cyclotomic cosets and to characterize the coset leaders. Hence, there seems no obvious way to extend our results to a larger $\de$.

\section*{Acknowledgements}

The authors are very grateful to the reviewers and the Associate Editor, Prof. Chaoping Xing, for their detailed comments and suggestions that much improved the presentation and quality of this paper.

\end{document}